\newtheorem{theorem}{Theorem}[section]
\newtheorem{lemma}[theorem]{Lemma}
\newtheorem{corollary}[theorem]{Corollary}
\newtheorem{proposition}[theorem]{Proposition}
\newtheorem{problem}[theorem]{Problem}
\theoremstyle{definition}
\newtheorem{definition}[theorem]{Definition}
\theoremstyle{remark}
\numberwithin{equation}{section}
\def\R{\mathcal R}
\def\true{\mathrm {true} }
\def\false{\mathrm {false} }
\begin{document}

\title[Alternating Traps in Parity Games]{Alternating Traps in Muller and Parity Games}

\author[A. Grinshpun]{Andrey Grinshpun}
\address{Department of Mathematics\\
Massachusetts Institute of Technology\\
Cambridge, MA}
\email{agrinshp@mit.edu}

\author[P. Phalitnonkiat]{Pakawat Phalitnonkiat}
\address{Department of Mathematics\\
Cornell University\\
Ithaca, NY}
\email{pp287@cornell.edu}

\author[S. Rubin]{Sasha Rubin}
\address{IST Austria and TU Vienna\\
Austria}
\email{sasha.rubin@gmail.com}

\author[A. Tarfulea]{Andrei Tarfulea}
\address{Department of Mathematics\\
Princeton University\\
Princeton, NJ}
\email{tarfulea@princeton.edu}

\begin{abstract}
Muller games are played by two players moving a token along a graph; the winner is determined by the set of vertices that occur infinitely often. The central algorithmic problem is to compute the winning regions for the players. Different classes and representations of Muller games lead to problems of varying computational complexity. One such class are parity games; these are of particular significance in computational complexity, as they remain one of the few combinatorial problems known to be in NP $\cap$ co-NP but not known to be in P. We show that winning regions for a Muller game can be determined from the alternating structure of its traps. To every Muller game we then associate a natural number that we call its \textit{trap depth}; this parameter measures how complicated the trap structure is. We present algorithms for parity games that run in polynomial time for graphs of bounded trap depth, and in general run in time exponential in the trap depth.
\end{abstract}


\maketitle

\section{Introduction}

A Muller game \cite{M93}\cite{GTW02} is played on a finite directed graph in which the vertices are two-colored, say with colors red and blue. There is a token on an initial vertex and two players, call them Red and Blue, move the token along edges; it is Red's move if the token is on a red vertex, and otherwise it is Blue's move. To determine the winner, a Muller game also contains a collection $\R$ of sets of vertices. One assumes that there are no dead ends and so the play is an infinite walk. At each turn one records the vertex under the token. The winner is determined by the set $S$ of vertices that occur infinitely often; Red wins if $S$ is in $\R$, and otherwise Blue wins.

Every two-player perfect-information game with Borel winning condition is determined: one of the players has a winning strategy. In particular, every Muller game is determined: either Red or Blue has a winning strategy. To {\em solve a Muller game} is to determine for every vertex which player has a winning strategy when play starts from the given vertex. This set of vertices is called that player's {\em winning region}.

One application of these games is to solve Church's synthesis problem: construct a finite-state procedure that transforms any input sequence letter by letter into an output sequence such that the pair of sequences satisfies a given specification. The modern solution to this problem goes through Muller games \cite{T09}.

\subsubsection*{Characterization of Muller games}
The first part of this paper (section \ref{main-result}) characterizes the winning region of a Muller game $G$ in terms of a two player reachability game. The length of this reachability game is a measure of the alternating structure of the traps in $G$; we call it the {\em trap-depth} of $G$. We briefly explain.

Muller games admit natural substructures, \textit{Attractors} and \textit{Traps}.
The Red-attractor~\cite{Z98} of a subset $X$ of vertices is the set of vertices 
from which Red can force the token into $X$; this may be computed in linear time. A Red-trap~\cite{Z98} is a 
subset $Y$ of vertices in which Blue may keep the token within $Y$ indefinitely (no matter what Red does); i.e. if the token is in $Y$, Blue may choose to trap Red in the set $Y$. 
It should be evident that the complement of a Red-attractor is 
a Red-trap. Of course, all notions here (and elsewhere) defined for Red may be symmetrically defined for Blue.
Thus we talk of Blue-attractors and Blue-traps.

Now, consider the following game played on the same arena as a Muller game $G$. The {\em trap-depth game on $G$ in which Red goes first}  (Definition~\ref{trap_depth_game}) proceeds as follows (the traps discussed in the following are all nonempty): 
Red picks a Blue-trap $X_1 \subseteq V$ (here $V$ are the vertices of the Muller game $G$) which is winning for Red (i.e. $X_1 \in \R$). Then Blue picks a Red-trap $Y_1$ in the smaller game induced by $X_1$, where $Y_1$ is winning for Blue (i.e. $Y_1 \not \in \R$). Then Red picks a Blue-trap $X_2$ in the game induced by $Y_1$ such 
that $X_2$ is winning for Red. Red and Blue continue like this, alternately choosing traps. 
The first player that cannot move (i.e., that cannot find an appropriate nonempty trap) loses. As shown in Theorem~\ref{main_theorem},
\begin{quote}
Red has a nonempty winning region in the Muller game if and only if Red has a winning
strategy in the trap-depth game in which Red goes first.
\end{quote}
And if Red has a winning strategy in this trap-depth game, the first move of any 
winning strategy, $X_1$, contains only vertices in Red's winning region of the original Muller game.

\subsubsection*{Application to parity games}

The second part of the paper (section \ref{parityalg}) is algorithmic and applies the characterization of winning regions to a particular class of Muller games, parity games.

A parity game~\cite{EJ91} is played on a directed graph with vertices labeled by integers called {\em priorities}. This game is played between two players, Even and Odd, who move a token along edges. A vertex is called even if its priority is even, otherwise it is called odd.
Even moves when the token is on an even vertex, and Odd moves when the token is on an odd vertex. Play starts from a specific vertex; we assume there are no dead ends in the graph and so a play is an infinite walk. Even wins a play if the largest priority occurring infinitely often is even, otherwise Odd wins the play.

It is evident that parity games may be expressed as Muller games: the set $\R$ consists of all subsets $X$ of vertices in which the largest priority of vertices in $X$ is even.

Parity games are intertwined with a logical problem:
the model checking problem for Modal $\mu$-calculus formulas is log-space equivalent to solving parity games ~\cite{GTW02}. 
Complexity-wise, the problem is known to be in $NP \cap \text{ co-} NP$~\cite{EJS93}, and even 
$UP \cap \text{ co-} UP$ ~\cite{J98}: one of the few combinatorial problems in that category that is not known to be in 
$P$.\footnote{
Note that for purposes of computational complexity, the size of a parity game includes the size of the graph plus some considerations on the size of the integers, but we ignore this latter point.}


The algorithmically-minded reader may observe a potential drawback with reinterpreting games as a game of alternating traps. The number of traps in a 
game can grow exponentially with the size of the game (just take a graph with only self-loops), and what's worse is that we are looking at
chains of alternating traps. Nonetheless, we apply the characterization to parity games: say that a graph has {\em  Even trap-depth at most $k$} if Even can guarantee that, in the trap-depth game in which Even goes first, the game ends in a win for Even within $k$ rounds. 
Then, despite the previous observation, we present an algorithm TDA$(G,\sigma,k)$ (here $G$ is a parity game, $\sigma$ is a player, and $k$ an integer)
that runs in time $|G|^{O(k)}$ and, as shown in Theorem \ref{theorem_TDA}, 
\begin{quote}
returns the largest (possibly empty) set starting with which $\sigma$ can guarantee a win in at most $k$ moves in the trap-depth game on $G$.
\end{quote}
Note that the definition of trap depth may be applied to Muller games as well, though we do not have an algorithmic application; one might hope that there are particularly efficient algorithms for finding winning vertices in Muller games of small trap depth.

Let's put this all together. Say that a parity game has {\em trap-depth at most $k$} if either it has Even trap-depth at most $k$ or Odd trap-depth at most $k$. In Figure \ref{Figure 2}
we exhibit, for every integer $k$, a parity game with $O(k)$ vertices and edges that has trap-depth exactly $k$. By the end of the paper we will have algorithmically solved the following problems:
\begin{enumerate}
\item decide if a given parity game $G$ has trap-depth at most $k$.
\item find a nonempty subset of one of the player's winning region assuming the game has trap depth at most $k$.
\end{enumerate}
Moreover, these problems can be solved in time  $O(mn^{2k-1})$ where $n$ is the number of vertices and $m$ the number of edges of a parity game $G$. 

\section{Muller Games and Parity Games} \label{sec:Mullergames}

\label{preliminaries}

A \emph{Muller Game} $G=(V,V_{\mbox{red}}, E,\R)$ satisfies the following conditions: $(V,E)$ is a directed graph in which every vertex has an outgoing edge, $V$ is partitioned into {\em red vertices} $V_{\mbox{red}}$ and {\em blue vertices} $V_{\mbox{blue}}:= V \setminus V_{\mbox{red}}$, and $\R \subset 2^V$ is a collection of subsets of $V$. The Muller game is played between two players, Red and Blue.
Red will move when the token is on a red vertex, and otherwise Blue will move. Starting from some vertex $v_0$, Blue's and Red's moves result in an infinite sequence of vertices, called a {\em play}, $P=(v_0,v_1,v_2,\ldots)$ where $(v_i,v_{i+1}) \in E$. Taking inf$(P)$ to be the set of vertices that occur infinitely often in the play, i.e. $v \in \text{inf}(P)$ if and only if there are infinitely many $i$ so that $v_i=v$, we say Red  {\em wins the play} if inf$(P) \in \R$, and otherwise Blue wins the play.


 
Take $\sigma \in \{\mbox{Red},\mbox{Blue}\}$ (we write $\overline \sigma$ for the other player, so if $\sigma$ is Red then $\overline{\sigma}$ is Blue, and vice versa). A \emph{$\sigma$-Strategy} is an instruction giving Player ${\sigma}$'s next move given the current token position and play history. Formally, it is a function 
whose domain is the set of finite strings of vertices $\{v_0 v_1 \cdots v_k$ : $(v_i,v_{i+1}) \in E \}$ and whose range 
is $N(v_k) := \{ v \in V$ : $(v_k,v) \in E \}$, the neighborhood of $v$. A $\sigma$-strategy is \emph{winning from vertex $v_0$} if, for all plays starting at $v_0$ and for which that strategy is followed whenever it is $\sigma$'s turn, the 
resulting play is winning for $\sigma$. Finally, a $\sigma$-strategy is \emph{memoryless} if it gives $\sigma$'s move 
while taking into consideration only the current token position; i.e., it is a strategy in which the value on $v_0 \cdots v_k$ depends only on $v_k$.
A given memoryless $\sigma$-strategy $\pi$ in a Muller game $G$ induces a subgame $H$ in which we restrict the outgoing edges of any $\sigma$ vertex to the edge defined by $\pi$.
It is worth noting that if both players fix a strategy for the game, then the resulting play is completely determined by the starting vertex, since given the current history we can determine which vertex is visited next.

Muller games are determined (since they are Borel we can apply \cite{M75}, although for the special case of regular games see \cite{GTW02}): starting from any vertex, there is a player that has a winning strategy. Determinacy partitions $V$ into the respective \emph{winning regions} $W_{G,\text{Red}}$ and $W_{G,\text{Blue}}$ (where $v \in W_{G,\sigma}$ if and only if $\sigma$ has a winning strategy starting from $v$ in $G$). In contexts where the meaning is clear, we will use $W_\sigma$ for $W_{G,\sigma}$. It follows easily that for a player $\sigma$ there is a single strategy that wins starting from any vertex in $W_{G,\sigma}$; such a strategy is called a winning strategy. We now introduce various important substructures of Muller games that capture some of the essential concepts of reachability and restriction (see chapter $2.5$ of \cite{GTW02}).

\begin{definition}
A \emph{$\sigma$-Trap} is a collection of vertices $X \subseteq G$ where:
$$\forall x \in X \cap V_{\sigma} \text{ we have } N(x) \subseteq X$$ and
$$\forall x \in X \cap V_{\overline\sigma},\ \exists y \in X\text{ such that } (x,y) \in E.$$
\end{definition}

No $\sigma$-vertex in $X$ has an outgoing edge leaving the trap, and every $\overline\sigma$-vertex in $X$ has at least one 
outgoing edge that stays in the trap. Consequently, if the token ever enters $X$, $\overline\sigma$ has a strategy through which the token will never leave $X$, no matter what $\sigma$ does. It is apparent 
that $W_{\sigma}$ is a $\overline\sigma$-trap. 

{\em Notation.} We write ${Traps}_{\sigma}(G)$ to denote the set of nonempty  $\sigma$-traps in $G$.

\begin{definition}
 A \emph{$\sigma$-Attractor of a set of vertices $Y$} is the set of vertices starting from which $\sigma$ has a strategy that guarantees
$Y$ will be reached (after finitely many, possibly $0$, steps).
\end{definition}

We denote the attractor of a set $X$ in a graph $G$ with respect to a player $\sigma$ by $Attr(G,X,\sigma)$, and it is worth noting that the attractor of a set may be computed in time linear in the size of the graph; the algorithm for doing so is presented below~\cite{Z98}.

\begin{algorithm}

\caption{Attr$(G=(V,E,p),X,\sigma)$}

\begin{algorithmic}[1]

\STATE $C_{\text{prev}}:= \emptyset$
\STATE $C_{\text{cur}} := X$
\WHILE{$C_{\text{cur}} \neq C_{\text{prev}}$}
\STATE $C_{\text{prev}}:=C_{\text{cur}}$
\STATE $C_{\text{cur}}:=C_{\text{prev}} \cup \{v \in V_\sigma : N(v) \cap C_{\text{prev}} \neq \emptyset\} 
\cup \{v \in V_{\overline\sigma} : N(v) \subseteq C_{\text{prev}}\}$
\ENDWHILE
\RETURN $C_{\text{cur}}$
\end{algorithmic}
\label{alg1}
\end{algorithm}

On each iteration, the $\sigma$ vertices that have an edge into the part of the attractor that has already been computed are added, and the $\overline\sigma$ vertices that have only edges into that part are added. We briefly argue correctness: by induction on the number of iterations, we see that starting anywhere in the computed set, $\sigma$ has a strategy to reach $X$, and starting outside the computed set it is easy to see that $\overline\sigma$ has a strategy to avoid the computed set indefinitely (every $\overline\sigma$ vertex outside the set has some edge that does not enter the set, and every $\sigma$ vertex outside of it has no edge that enters it), so this does compute the attractor.

\begin{definition}
The \emph{Induced Subgame of $G$ by $X$} is the Muller Game using the vertices $V \cap X$ and the edges 
$E \cap X^2$; we sometimes refer to this as ``$G$ restricted to $X$'' and use the notation $G[X]$.
\end{definition}

Naturally, $G[X]$ should have no dead-ends if it is to be a Muller game. It is apparent that $G$ restricted to a trap $X$ is a Muller game. When $X$ is a trap we use {\em subtraps} to mean the traps of $G[X]$.

\begin{lemma} \cite{Z98}
\label{reducibility}
If $X \subseteq W_{G,\sigma}$ then, taking $U = Attr(G,X,\sigma)$, we have $W_{G,\sigma} = U \cup W_{G[V \setminus U], \sigma}$.
\end{lemma}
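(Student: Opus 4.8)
The plan is to establish the two inclusions $U \cup W_{G[V\setminus U],\sigma} \subseteq W_{G,\sigma}$ and $W_{G,\sigma} \subseteq U \cup W_{G[V\setminus U],\sigma}$ separately, writing $H := G[V\setminus U]$ throughout. Everything rests on one structural observation: the complement of a $\sigma$-attractor is a $\sigma$-trap. Indeed, since $U = Attr(G,X,\sigma)$, the avoidance argument justifying correctness of the attractor algorithm shows that every $\sigma$-vertex outside $U$ has all of its edges inside $V\setminus U$, while every $\overline\sigma$-vertex outside $U$ retains at least one edge inside $V\setminus U$. Thus $V\setminus U$ is a $\sigma$-trap, so $H$ has no dead ends and is a genuine Muller game, whose winning condition is $\R$ applied to plays inside $V\setminus U$. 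In particular a play that remains in $V\setminus U$ forever is won by the same player in $H$ as in $G$, since the outcome depends only on $\text{inf}(P)$.

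For the inclusion $U \cup W_{H,\sigma} \subseteq W_{G,\sigma}$, I first note $U \subseteq W_{G,\sigma}$: from any vertex of $U$, player $\sigma$ follows the attractor strategy until the token reaches $X$ after finitely many steps, then switches to a winning strategy on $X$ (which exists because $X \subseteq W_{G,\sigma}$); as the Muller condition ignores finite prefixes, this composite strategy wins. For $W_{H,\sigma} \subseteq W_{G,\sigma}$, fix $v \in W_{H,\sigma}$ with $H$-winning strategy $\rho$ and have $\sigma$ play $\rho$ while the token stays in $V\setminus U$. Either the play stays in $V\setminus U$ forever, in which case it is a legal play of $H$ consistent with $\rho$ and hence won by $\sigma$; or at some point $\overline\sigma$ drives the token into $U$ (note $\sigma$ itself cannot, by the trap property), in which case $\sigma$ abandons $\rho$ and plays the winning strategy on $U$ guaranteed above. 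Either way $\sigma$ wins from $v$ in $G$.

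For the reverse inclusion I prove the symmetric statement $W_{H,\overline\sigma} \subseteq W_{G,\overline\sigma}$ and invoke determinacy. Fix $v \in W_{H,\overline\sigma}$ with $H$-winning strategy $\rho'$ and let $\overline\sigma$ simply play $\rho'$ in $G$. The crucial point is that the play can never leave $V\setminus U$: on $\overline\sigma$'s turns $\rho'$ keeps the token inside by definition, and on $\sigma$'s turns the trap property forces every edge of $\sigma$ to remain inside $V\setminus U$. Hence every resulting play is a legal play of $H$ consistent with $\rho'$, so it is won by $\overline\sigma$ and $v \in W_{G,\overline\sigma}$. Now take any $v \in W_{G,\sigma}\setminus U$; by determinacy of $H$ it lies in $W_{H,\sigma}$ or $W_{H,\overline\sigma}$, and the latter is impossible because it would place $v$ in $W_{G,\overline\sigma}$, contradicting $v \in W_{G,\sigma}$. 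Therefore $v \in W_{H,\sigma}$, which yields $W_{G,\sigma} \subseteq U \cup W_{H,\sigma}$ and, combined with the first inclusion, the claimed identity.

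I expect the main subtlety to be the \emph{asymmetric} use of the trap, which is easy to overlook. The inclusion $W_{H,\overline\sigma}\subseteq W_{G,\overline\sigma}$ is clean precisely because $V\setminus U$ is a $\sigma$-trap, so $\sigma$ is the player who cannot escape and the play is automatically confined to $H$; but in the opposite inclusion the confinement fails for $\overline\sigma$, who may leave $V\setminus U$ into $U$, and one must explicitly argue that this escape is harmless for $\sigma$ because $U \subseteq W_{G,\sigma}$. Getting the two strategy-composition arguments right — and correctly citing that the Muller winning condition is insensitive to finite prefixes when splicing strategies together — is where the care is needed.
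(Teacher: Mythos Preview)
Your proof is correct and is the standard argument for this result. Note, however, that the paper does not actually prove this lemma itself: it is stated with a citation to Zielonka~\cite{Z98} and accompanied only by a one-sentence informal gloss. Your argument is essentially the proof one finds in that reference --- complement of an attractor is a trap, compose strategies on $U$ and on $H$, and use determinacy for the reverse inclusion --- so there is nothing to contrast.
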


In other words, if we know that $\sigma$ can win from a set, then we can remove that set's attractor from the graph and just find the winning region for $\sigma$ in the smaller graph.


\begin{lemma} \cite{Z98}
\label{traps}
If $X \subseteq W_{G,\sigma}$ and $X$ is a $\sigma$-trap, then $W_{G[X],\sigma}=X$.
\end{lemma}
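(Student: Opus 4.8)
The plan is to prove the two inclusions $W_{G[X],\sigma} \subseteq X$ and $X \subseteq W_{G[X],\sigma}$ separately. The first is immediate: $G[X]$ has vertex set $V \cap X$, and a winning region is always a subset of the vertex set, so $W_{G[X],\sigma} \subseteq X$. The content of the lemma is the reverse inclusion, and my idea is to take a winning $\sigma$-strategy in the large game $G$ and show that it restricts to a winning $\sigma$-strategy in $G[X]$.

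Before that I would record that $G[X]$ is a genuine Muller game, i.e. has no dead ends: because $X$ is a $\sigma$-trap, every $x \in X \cap V_\sigma$ has $N(x) \subseteq X$ (so all of its outgoing edges survive in $G[X]$, and at least one exists since $G$ has none) and every $x \in X \cap V_{\overline\sigma}$ has at least one outgoing edge inside $X$. The winning condition of $G[X]$ is just $\R$ evaluated on the set of infinitely-occurring vertices of a play, which is automatically contained in $X$.

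Since $X \subseteq W_{G,\sigma}$, fix a single $\sigma$-strategy $\pi$ that is winning in $G$ from every vertex of $W_{G,\sigma}$; such a strategy exists, as noted after the definition of winning regions. Define $\pi'$ on each history $v_0 \cdots v_k$ lying entirely in $X$ by $\pi'(v_0 \cdots v_k) = \pi(v_0 \cdots v_k)$. The one thing that must be checked is that $\pi'$ is a \emph{legal} strategy in $G[X]$, namely that its recommended move never leaves $X$; this is exactly where the trap hypothesis is used. When it is $\sigma$'s turn the token sits on some $v_k \in X \cap V_\sigma$, and the defining property of a $\sigma$-trap gives $N(v_k) \subseteq X$, so $\pi(v_0 \cdots v_k) \in N(v_k) \subseteq X$ and the move is indeed available in $G[X]$.

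Finally I would argue that $\pi'$ is winning in $G[X]$ from each $v_0 \in X$. Any play of $G[X]$ consistent with $\pi'$ uses only edges of $E \cap X^2 \subseteq E$ and has $\sigma$ moving according to $\pi$, so viewed in $G$ it is a play consistent with $\pi$ that starts and stays in $X$. Because $v_0 \in X \subseteq W_{G,\sigma}$ and $\pi$ is winning there, $\sigma$ wins this play in $G$; since the set of vertices occurring infinitely often is the same under both readings and the winning condition $\R$ is unchanged, $\sigma$ wins it in $G[X]$ as well. Hence every $v_0 \in X$ lies in $W_{G[X],\sigma}$, which gives $X \subseteq W_{G[X],\sigma}$ and, with the trivial inclusion, equality. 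The only genuinely delicate point is the legality check for $\pi'$, and the trap condition is precisely tailored to make it go through; conceptually, passing from $G$ to $G[X]$ only deletes edges out of $\overline\sigma$-vertices, so it removes options from the opponent and never from $\sigma$, which is the reason $\sigma$'s winning strategy survives the restriction intact.
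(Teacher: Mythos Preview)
Your proof is correct and is exactly the approach the paper intends: it states only the intuition that ``in the induced game $G[X]$ player $\sigma$ can continue to use the same winning strategy that $\sigma$ had in $G$,'' and you have fleshed this out carefully, using the $\sigma$-trap condition precisely where it is needed to ensure the restricted strategy is legal.
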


Intuitively, this holds because in the induced game $G[X]$ player $\sigma$ can continue to use the same winning strategy that $\sigma$ had in $G$.

We end the section with the statements of some technical lemmas that will be useful. Their proofs are routine.


\begin{lemma} \cite{Z98}
If $X$ is a $\sigma$-trap in $G$ and $Y$ is a $\sigma$-trap in $G[X]$, then $Y$ is a $\sigma$-trap in $G$.
\end{lemma}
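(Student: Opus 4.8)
The plan is to verify directly the two defining conditions of a $\sigma$-trap for $Y$ in $G$, keeping careful track of the distinction between the neighborhood of a vertex computed in $G$ and its neighborhood computed in the subgame $G[X]$. Write $N(x) = \{v : (x,v) \in E\}$ for the neighborhood in $G$, and observe that the neighborhood of $x$ in $G[X]$ is exactly $N(x) \cap X$, since the edge set of $G[X]$ is $E \cap X^2$. Since $Y$ is a set of vertices of $G[X]$ we have $Y \subseteq X$, and the partition of vertices into $\sigma$-vertices and $\overline\sigma$-vertices is inherited unchanged from $G$ to $G[X]$.

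First I would dispatch the condition on $\overline\sigma$-vertices, which is immediate: if $x \in Y \cap V_{\overline\sigma}$, then because $Y$ is a $\sigma$-trap in $G[X]$ there is some $y \in Y$ with $(x,y)$ an edge of $G[X]$; since every edge of $G[X]$ is an edge of $G$, this same $y$ witnesses the required outgoing edge for $x$ in $G$. No use of the trap property of $X$ is needed here.

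The condition on $\sigma$-vertices is the only step requiring any thought, and it is where the hypothesis that $X$ is a $\sigma$-trap in $G$ is actually used. Let $x \in Y \cap V_\sigma$. Because $Y$ is a $\sigma$-trap in $G[X]$, the neighborhood of $x$ in $G[X]$ lies in $Y$, that is, $N(x) \cap X \subseteq Y$. On its own this is weaker than the desired $N(x) \subseteq Y$, since a priori $x$ could have edges of $G$ leaving $X$. Here I would invoke that $X$ is a $\sigma$-trap in $G$: since $x \in X \cap V_\sigma$, the first trap condition for $X$ gives $N(x) \subseteq X$. Combining the two inclusions yields $N(x) = N(x) \cap X \subseteq Y$, as required.

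The main (and only mild) obstacle is exactly this bookkeeping: one must not conflate the neighborhood in $G$ with the neighborhood in $G[X]$, and must recognize that closing the gap between $N(x) \cap X \subseteq Y$ and $N(x) \subseteq Y$ is precisely what the trap property of $X$ provides. With both conditions verified, $Y$ is a $\sigma$-trap in $G$.
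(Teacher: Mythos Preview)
Your proof is correct. The paper omits the proof of this lemma, calling it routine; your direct verification of the two defining conditions of a $\sigma$-trap is exactly the intended routine argument, and you correctly identify that the $\sigma$-vertex case is where the hypothesis on $X$ is used.
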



The next lemma states that if we take the $\sigma$ attractor of some set $Y$ and are interested in how it intersects with some $\sigma$-trap $X$, then the intersection is contained in the attractor of $X \cap Y$ in the game restricted to $X$.

\begin{lemma} \cite{Z98}
If $X$ is a $\sigma$-trap in $G$, $Y$ is a set of vertices, and $S = \text{Attr}(G,Y,\sigma)$, then $X \cap S \subseteq \text{Attr}(G[X],X \cap Y, \sigma)$.
\end{lemma}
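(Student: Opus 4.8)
The plan is to induct on the \emph{attractor rank} of vertices, i.e.\ on the iteration at which a vertex first enters the set computed by Algorithm~\ref{alg1} for $\text{Attr}(G,Y,\sigma)$. Concretely, for $v \in S$ let $\text{rank}(v)$ be the least $i$ with $v \in C_i$, where $C_0 = Y$ and $C_{i+1}$ is obtained from $C_i$ by one pass of the loop. I will show by induction on $\text{rank}(v)$ that every $v \in X \cap S$ lies in $\text{Attr}(G[X], X \cap Y, \sigma)$, using at each step the closure property that the attractor is the least fixed point of the same operator relative to $G[X]$ (so a $\sigma$-vertex with some edge into the attractor, or a $\overline\sigma$-vertex with all its edges into the attractor, is itself in the attractor).

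For the base case $\text{rank}(v)=0$ we have $v \in Y$, and since $v \in X$ by hypothesis, $v \in X \cap Y \subseteq \text{Attr}(G[X], X \cap Y, \sigma)$.

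For the inductive step, suppose $v \in X \cap S$ first enters at iteration $i+1$, so some witness places $v$ into $C_{i+1}$ from $C_i$, and I split on the type of $v$. If $v \in V_\sigma$, then $v$ has an edge $(v,w) \in E$ with $w \in C_i$, so $\text{rank}(w) \le i$. Since $X$ is a $\sigma$-trap and $v \in X \cap V_\sigma$, the trap condition $N(v) \subseteq X$ forces $w \in X$; thus $w \in X \cap S$ with smaller rank, so by the induction hypothesis $w \in \text{Attr}(G[X], X \cap Y, \sigma)$. As $(v,w)$ is an edge of $G[X]$ and $v$ is a $\sigma$-vertex with a move into the $G[X]$-attractor, $v$ belongs to it as well. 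If instead $v \in V_{\overline\sigma}$, then $N(v) \subseteq C_i$ in $G$; in particular every $G[X]$-neighbor $w \in N(v) \cap X$ satisfies $w \in C_i \cap X$, hence $\text{rank}(w) \le i$ and, by the induction hypothesis, $w \in \text{Attr}(G[X], X \cap Y, \sigma)$. Since all $G[X]$-edges out of the $\overline\sigma$-vertex $v$ lead into the $G[X]$-attractor, $v$ lies in it. (The trap condition also guarantees $v$ has at least one edge staying in $X$, so $v$ is genuinely a non-dead-end vertex of the well-defined game $G[X]$.)

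The argument is routine, and the only point requiring care is the $\sigma$-vertex case: the attractor in $G$ may a priori have reached $v$ via an edge leaving $X$, and it is precisely the defining property of a $\sigma$-trap — that no $\sigma$-vertex of $X$ has an edge out of $X$ — which keeps the witnessing edge inside $X$ and lets me apply the induction hypothesis within $G[X]$. The $\overline\sigma$-vertex case is essentially automatic, since the requirement ``all $G$-edges into $C_i$'' is stronger than the requirement ``all $G[X]$-edges into the restricted attractor.''
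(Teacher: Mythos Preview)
Your proof is correct; the paper does not actually supply a proof of this lemma, noting only that ``their proofs are routine,'' and your induction on attractor rank is precisely the routine argument one would expect.
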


\begin{lemma}
If $X$ is a $\sigma$-trap in $G$ and $Y$ is a $\overline\sigma$-trap in $G$, then $X \cap Y$ is a $\overline\sigma$-trap in $G[X]$.
\end{lemma}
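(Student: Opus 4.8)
The plan is to set $Z = X \cap Y$ and directly verify the two defining conditions for $Z$ to be a $\overline\sigma$-trap in the induced subgame $G[X]$. Recall that in $G[X]$ the neighborhood of a vertex $x$ is $N(x) \cap X$, and that an edge of $G[X]$ is an edge $(x,y) \in E$ with both endpoints in $X$. So I must check two things: first, that every $\overline\sigma$-vertex of $Z$ has all of its $G[X]$-neighbors inside $Z$; and second, that every $\sigma$-vertex of $Z$ has at least one outgoing $G[X]$-edge that stays in $Z$.

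For the first condition, take $x \in Z \cap V_{\overline\sigma}$. Since $x \in Y$ and $Y$ is a $\overline\sigma$-trap in $G$, the first trap condition gives $N(x) \subseteq Y$. Intersecting with $X$ yields $N(x) \cap X \subseteq Y \cap X = Z$, and the left-hand side is exactly the $G[X]$-neighborhood of $x$. This handles the first condition using only the hypothesis on $Y$.

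For the second condition, take $x \in Z \cap V_{\sigma}$. Because $x \in Y$ and $Y$ is a $\overline\sigma$-trap, the second trap condition supplies some $y \in Y$ with $(x,y) \in E$. I still need $y$ to lie in $X$, so that the edge survives in $G[X]$ and its target lands in $Z$; this is where the hypothesis on $X$ enters. Since $x \in X \cap V_{\sigma}$ and $X$ is a $\sigma$-trap, the first trap condition gives $N(x) \subseteq X$, so in particular $y \in X$. Hence $y \in X \cap Y = Z$ and $(x,y)$ is an edge of $G[X]$ witnessing the second condition.

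The argument is essentially a bookkeeping exercise in the two trap definitions; the only place requiring care---what I would flag as the main obstacle---is the second condition, where a single edge must be forced to remain inside both $X$ and $Y$ simultaneously. The key observation is that the two hypotheses pull in complementary directions: $Y$ being a $\overline\sigma$-trap guarantees the \emph{existence} of an outgoing edge at the $\sigma$-vertex $x$ that stays in $Y$, while $X$ being a $\sigma$-trap guarantees that \emph{every} outgoing edge at $x$ stays in $X$, so the edge produced by the former automatically satisfies the latter. No nonemptiness hypothesis is needed, since both trap conditions are vacuously true on the empty set.
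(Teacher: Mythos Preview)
Your proof is correct and is exactly the routine verification the paper has in mind; the paper in fact omits the proof, stating only that it is routine, and your direct check of the two $\overline\sigma$-trap conditions in $G[X]$ is the natural argument.
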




\subsection{Parity games}

A \emph{Parity Game} $G=(V,E,\rho)$ satisfies the following conditions: $(V,E)$ is a directed graph in which every 
vertex has an outgoing edge, $v_0\in V$ denotes a starting vertex, and $p:V\rightarrow \mathbb{Z}$ is a function assigning priorities to the vertices. 
The parity game is played between two players, Even and Odd, where each player moves the token along a directed edge of $G$ whenever the token is on a vertex of the corresponding 
parity. We say a vertex is even if it has even priority and odd if it has odd priority.
Even's and Odd's moves result in an infinite play:
$P = (v_0,v_1,v_2,...)$
where $(v_i,v_{i+1}) \in E$. Even wins the play if $\limsup_{i \in \mathbb{N}} p(v_i)$ is even and Odd wins otherwise: 
i.e., the largest priority that occurs infinitely often determines the winner of the play.


Note that, given a parity game, we may define the corresponding Muller game by placing $v$ in $V_{\mbox{red}}$ if and only if $p(v)$ is even. Then $S \subseteq V$ has $S \in \R$ if and only if $\max(S)$ is even, and otherwise $\max(S)$ is odd and $S \not \in \R$. The corresponding Muller game is then $(V,V_{\mbox{red}},E,\R)$. Note that a play is winning in the Muller game if and only if it is winning in the parity game.

Not only are Parity games determined, they are \emph{Memorylessly Determined} ~\cite{EJ91}: for every vertex $v \in V$, exactly 
one of the two players has a memoryless strategy that guarantees a win starting from $v$. Moreover, for each player there is a single memoryless strategy which, if followed, will result in a winning play starting from any vertex in that player's winning region; this is a called a memoryless winning strategy. Note that Muller games are not memorylessly determined; they may require a strategy that uses some of the play history.

\section{The Trap-Depth Game}
\subsection{Main Theorem}
\label{main-result}
As mentioned in the introduction, our main result relies on a characterization stemming from chains of alternating subtraps. Each subtrap represents the decision of the corresponding player to further restrict the token's 
movement. This goes on until the final restriction leaves one player incapable of preventing a winning play for their 
opponent. We now formalize this idea. We begin by defining a set of statements related to chains of 
alternating traps. 

Define $R_\sigma$ to be $\R$ if $\sigma$ is Red and $2^V \setminus \R$ otherwise. The statement $S \in R_\sigma$ says that if the set of vertices that occurs infinitely often is $S$, then player $\sigma$ wins. Recall that ${Traps}_{\sigma}(G)$ is the set of \emph{nonempty} $\sigma$-traps in $G$. Our boolean statements $\Delta_\sigma(G,k)$ are defined recursively and have three parameters:  the player $\sigma$, the game $G$, and the iteration (or depth) number $k$.

\begin{definition}
For player $\sigma$, game $G$, and integer $k$, the value of $\Delta_{\sigma}(G, 0)$ is $\false$. For $k > 0$, the value of $\Delta_{\sigma}(G, k)$ is $\true$ if and only if there exists $X \in {Traps}_{\overline\sigma}(G)$ such that
\begin{itemize}
\item $X \in R_\sigma$, and
\item $\forall Y \in {Traps}_{\sigma}(G[X]) \text{ we have } Y \in R_\sigma \text{ or }  \Delta_{\sigma}(G[Y], k-1)$.
\end{itemize}
\end{definition}
Each statement $\Delta_{\sigma}(G, k)$ asserts that $\sigma$ can restrict the token's movement via a trap $X$ 
in such a way that if every vertex in the trap occurs infinitely often, player $\sigma$ wins, i.e. $X \in R_\sigma$, (intuitively then, player $\overline{\sigma}$ must choose to further restrict play) and, no matter how $\overline\sigma$ further restricts 
the token's movement via a subtrap $Y$, either still $Y \in R_\sigma$ 
or we have that $\Delta_{\sigma}(G[Y],k-1)$ is true. So, in particular, $\Delta_{\text{Red}}(G, 1)$ states that there is a 
Blue-trap $X$ in $G$ with $X \in \R$ such that every Red-subtrap $Y$ has $Y \in \R$.

The above definitions make it easy to see that the statements make references to natural structures in Muller games, but they can be rather cumbersome to work with, so we present an equivalent but easier to visualize way to think about them.

\begin{definition} \label{trap_depth_game}
Let $G$ be a Muller game. Define the \textit{Trap-Depth Game on G in which $\sigma$ goes first} as follows: 
in the beginning of the $i^\text{th}$ round ($i \geq 1$)
there will be some current Muller game $G_i$. The game starts with $G_1 = G$. In the $i^\text{th}$ round player $\sigma$ moves
first by choosing a trap $X_i \in {Traps}_{\overline\sigma}(G_i)$ with $X_i \in R_\sigma$. Player $\overline\sigma$ replies by choosing a $\sigma$-trap $Y_i$ in the subgame $G_i[Y_i]$, i.e. $Y_i \in {Traps}_\sigma(G_i[X_i])$, so that $Y_i \in R_{\overline\sigma}$. This completes the $i^{\text{th}}$ round. Define $G_{i+1}=G_i[Y_i]$. The first player that has no legal move loses.
\end{definition}

In a Muller game, this will terminate in at most $\left\lceil \frac{n}{2} \right\rceil$ rounds, as each time a player chooses a trap, a vertex must be removed. If the Muller game is a parity game, then the condition $X \in R_\sigma$ simply states that the largest priority of a vertex in $X$ is of parity $\sigma$. For a parity game, the number of rounds is at most  $\left\lceil \frac{|p(V)|}{2} \right\rceil$, since the size of the largest vertex still in play decreases twice per round. In particular, every play in this game is finite and ends in a win for one of the players. Therefore, the game is determined (i.e. one of the players has a winning strategy).

\begin{lemma}
The value of $\Delta_{\sigma}(G,k)$ is $\true$ if and only if $\sigma$ has a strategy that ensures their opponent loses the 
Trap-Depth Game in which $\sigma$ goes first in at most $k$ rounds (so $\overline\sigma$ would lose on or before the $2k^{\text{th}}$ move).
\end{lemma}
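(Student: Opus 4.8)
The plan is to argue by induction on $k$, proving for every Muller game $G$ and player $\sigma$ that the two conditions coincide. The engine of the argument is that the Trap-Depth Game is \emph{self-similar}: after the first round, in which $\sigma$ plays some $X$ and $\overline\sigma$ responds with some $Y$, the remainder of the play is by definition exactly the Trap-Depth Game on $G[Y]=G_2$ in which $\sigma$ goes first, except that $\sigma$ must now finish within $k-1$ rounds. This is precisely the structure mirrored by the recursion defining $\Delta_\sigma(G,k)$, which refers back to $\Delta_\sigma(G[Y],k-1)$. That parallel is what I would exploit.

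For the base case $k=0$, the value $\Delta_\sigma(G,0)$ is $\false$ by definition, while ``$\sigma$ wins within $0$ rounds'' would require $\overline\sigma$ to lose before any move is made, which is impossible since round $1$ begins with $\sigma$ having to move; so both sides are $\false$. For the inductive step I would fix $k \geq 1$, assume the equivalence for $k-1$ over all Muller games, and first unwind the game side. A strategy for $\sigma$ winning within $k$ rounds must in round $1$ select a first move $X \in {Traps}_{\overline\sigma}(G)$ with $X \in R_\sigma$ (if no such $X$ exists, $\sigma$ cannot even move and loses). Given such an $X$, the legal replies available to $\overline\sigma$ are exactly the traps $Y \in {Traps}_\sigma(G[X])$ with $Y \notin R_\sigma$, i.e.\ $Y \in R_{\overline\sigma}$.

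Two cases then arise. If $\overline\sigma$ has no legal reply---equivalently, every $Y \in {Traps}_\sigma(G[X])$ satisfies $Y \in R_\sigma$---then $\overline\sigma$ loses on the $2$nd move, so $\sigma$ wins within $1 \leq k$ rounds. Otherwise, to guarantee a win within $k$ rounds $\sigma$ must be able to win within $k-1$ rounds in the residual game $G[Y]$ for \emph{every} legal reply $Y$, since by self-similarity that residual play is the Trap-Depth Game on $G[Y]$ in which $\sigma$ goes first. Applying the induction hypothesis to each such $G[Y]$ replaces ``$\sigma$ wins within $k-1$ rounds on $G[Y]$'' by $\Delta_\sigma(G[Y],k-1)$. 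Combining the two cases, $\sigma$ has a first move $X$ yielding a win within $k$ rounds if and only if there is $X \in {Traps}_{\overline\sigma}(G)$ with $X \in R_\sigma$ such that every $Y \in {Traps}_\sigma(G[X])$ satisfies $Y \in R_\sigma$ or $\Delta_\sigma(G[Y],k-1)$. This is verbatim the defining condition of $\Delta_\sigma(G,k)$, completing the induction.

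The main obstacle I expect is the careful bookkeeping around round-counting and the immediate-loss case: one must check that ``$\overline\sigma$ has no legal reply to $X$'' corresponds exactly to ``every subtrap $Y$ of $G[X]$ lies in $R_\sigma$'', and that this is correctly counted as a win within a single round (hence within $k$ rounds for all $k \geq 1$), matching the disjunct $Y \in R_\sigma$ in the definition. A secondary point, needed to make the self-similarity rigorous, is that the substructures generated in later rounds really are traps of the residual games---that subtraps of a trap are themselves traps of the correct player---which is supplied by the technical lemmas preceding this statement. Once these are in place, the inductive matching of the quantifiers and the disjunction is routine.
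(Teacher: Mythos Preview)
Your proposal is correct and is essentially the same approach as the paper, which simply remarks that the equivalence is ``easily verified by identifying player moves with the quantifiers in the expression for $\Delta_\sigma(G,k)$.'' Your induction on $k$ is just a careful unpacking of that identification, with the $\exists X$ corresponding to $\sigma$'s move and the $\forall Y$ to $\overline\sigma$'s reply.
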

This is easily verified by identifying player moves with the quantifiers in the expression for $\Delta_{\sigma}(G,k)$. 
We now arrive at the first main result of this paper:

\begin{theorem}
\label{main_theorem}
Let $G$ be a Muller game. Then $W_{G,\sigma} \neq \emptyset$ if and only if $\sigma$ has a winning strategy in the trap-depth game on $G$ in which $\sigma$ goes first. Moreover, any first move $X$ in a winning strategy by $\sigma$ satisfies $X \subseteq W_\sigma$.
\end{theorem}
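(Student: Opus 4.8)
The plan is to prove the biconditional together with the ``moreover'' clause by a single induction on $|V|$, reducing each implication to one extraction lemma. Throughout I would use three facts freely. (i) A $\sigma$-winning strategy started inside $W_{G,\sigma}$ never leaves $W_{G,\sigma}$ (otherwise $\overline\sigma$ could splice in a strategy winning from the vertex reached in $W_{G,\overline\sigma}$); since $W_{G,\sigma}$ is a $\overline\sigma$-trap it has no dead ends, so $\sigma$ in fact wins from every vertex of $G[W_{G,\sigma}]$. (ii) If $X$ is a $\overline\sigma$-trap of $G$ and $\sigma$ wins from every vertex of $G[X]$, then $X\subseteq W_{G,\sigma}$: replaying the $G[X]$-strategy in $G$, the token cannot leave $X$ (at $\overline\sigma$-vertices all edges stay in the $\overline\sigma$-trap, at $\sigma$-vertices the strategy keeps them in), so the play is a winning $G[X]$-play. (iii) The trap-of-a-trap lemma, so that a $\sigma$-trap of $G[X]$ is again a $\sigma$-trap of $G$. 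The engine is the following, whose proof I postpone. \emph{Extraction Lemma:} if $\rho$ wins from every vertex of a Muller game $H$, then there is a nonempty $\bar\rho$-trap $X$ of $H$ with $X\in R_\rho$ such that $\rho$ wins from every vertex of $H[X]$.

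For the implication ``$\sigma$ wins the trap-depth game on $G$ $\Rightarrow X\subseteq W_{G,\sigma}$'' (which yields both the forward direction and the ``moreover'' clause), I would fix a winning $\sigma$-strategy with first move $X$. By fact (ii) it suffices to show $\sigma$ wins from every vertex of $G[X]$, so suppose not and set $Z=W_{G[X],\overline\sigma}\neq\emptyset$; this is a $\sigma$-trap of $G[X]$, and by fact (i) applied to $G[X]$ and $\overline\sigma$, player $\overline\sigma$ wins from every vertex of $G[Z]$. Applying the Extraction Lemma to $\rho=\overline\sigma$ and $H=G[Z]$ produces a nonempty $\sigma$-trap $Y\subseteq Z$ with $Y\in R_{\overline\sigma}$ on which $\overline\sigma$ wins everywhere; by fact (iii), $Y$ is a $\sigma$-trap of $G[X]$, hence a legal $\overline\sigma$-reply to $X$. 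Since $X$ belongs to a winning $\sigma$-strategy, $\sigma$ wins the trap-depth game on $G[Y]$. Moreover $|Y|<|V|$: if $X\subsetneq V$ this is clear since $Y\subseteq X$, and if $X=V$ then $Y\neq X$ because $Y\in R_{\overline\sigma}$ while $X\in R_\sigma$. The induction hypothesis then gives that $\sigma$ wins from some vertex of $G[Y]$, contradicting that $\overline\sigma$ wins everywhere there.

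For the converse ``$W_{G,\sigma}\neq\emptyset\Rightarrow\sigma$ wins the trap-depth game'', fact (i) gives that $\sigma$ wins from every vertex of $G[W_{G,\sigma}]$, so the Extraction Lemma (with $\rho=\sigma$, $H=G[W_{G,\sigma}]$) produces a nonempty $\overline\sigma$-trap $X\subseteq W_{G,\sigma}$ with $X\in R_\sigma$ on which $\sigma$ wins everywhere; fact (iii) makes $X$ a $\overline\sigma$-trap of $G$, i.e. a legal first move. It remains to beat every $\overline\sigma$-reply: any $Y\in {Traps}_\sigma(G[X])$ with $Y\in R_{\overline\sigma}$ satisfies $Y\subseteq X=W_{G[X],\sigma}$, so Lemma~\ref{traps} gives $W_{G[Y],\sigma}=Y$, i.e. $\sigma$ wins everywhere in $G[Y]$. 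Since $|Y|<|V|$ (by the same case split as above), the induction hypothesis yields a winning $\sigma$-strategy in the trap-depth game on $G[Y]$. Thus $X$ witnesses a winning $\sigma$-strategy on $G$.

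Finally, I would prove the Extraction Lemma by induction on $|V(H)|$ (equivalently, by passing to a minimal $\bar\rho$-trap on which $\rho$ wins everywhere). If $V(H)\in R_\rho$, take $X=V(H)$, which is a $\bar\rho$-trap on which $\rho$ wins everywhere. If instead $V(H)\in R_{\bar\rho}$, the crux is to produce a \emph{proper} $\bar\rho$-trap $X'\subsetneq V(H)$ on which $\rho$ still wins from every vertex; applying the induction hypothesis to $H[X']$ then finishes. To find $X'$ I would use the standard ``try to visit every vertex'' strategy for $\bar\rho$, built from attractors as in \S2.5 of \cite{GTW02}: against it, every play either makes all vertices recur---impossible here, since $V(H)\in R_{\bar\rho}$ would then give $\overline\rho$ the win while $\rho$ wins everywhere---or is eventually trapped in the complement of the $\bar\rho$-attractor of some vertex, which is a proper $\bar\rho$-trap. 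The main obstacle is exactly to upgrade this \emph{per-play} confinement into a \emph{single} proper $\bar\rho$-trap on which $\rho$ wins from \emph{every} vertex; I expect this to require a Zielonka-style stitching of $\rho$'s confining strategy together with its winning strategies on the removed attractors, and it is the step where the real combinatorial work lies.
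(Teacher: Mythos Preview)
Your reduction of the theorem to the Extraction Lemma is clean and correct, and the two implications you derive from it match the paper's Propositions~\ref{prop1} and~\ref{prop2} almost verbatim. The divergence is entirely in how the Extraction Lemma itself is obtained. The paper never proves it for arbitrary Muller games. Instead it proves it only under the hypothesis that $\rho$ has a \emph{memoryless} winning strategy (Lemma~\ref{lemma_maxismine}): one passes to the subgraph $H$ induced by fixing $\rho$'s memoryless strategy and takes $X$ to be a terminal strongly connected component of $H$; this is automatically a $\bar\rho$-trap, and since $\bar\rho$ can make every vertex of $X$ recur while $\rho$ is locked into a winning strategy, $X\in R_\rho$. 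With this memoryless Extraction Lemma the paper proves the theorem for games admitting memoryless winning strategies, and then transfers the conclusion to arbitrary Muller games by invoking finite-memory determinacy as a black box: it forms the product game $G_M=G\times[N]$ in which both players become memoryless, and shows that the trap-depth games on $G$ and on $G_M$ have the same winner.

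The step you flag as the ``main obstacle'' is a genuine gap in your write-up, and it is precisely the step the paper's detour is designed to avoid. Your per-play argument shows that against $\bar\rho$'s attractor-based ``visit everything'' strategy each individual $\rho$-winning play is eventually confined to some proper $\bar\rho$-trap, but different plays may land in different traps, and nothing so far produces a \emph{single} proper $\bar\rho$-trap on which $\rho$ wins from \emph{every} vertex. To close this directly you need the contrapositive of Zielonka's key lemma: if $V(H)\in R_{\bar\rho}$ and for every vertex $v$ the region $W_{H[V\setminus\mathrm{Attr}(H,\{v\},\bar\rho)],\,\bar\rho}$ is nonempty, then $\bar\rho$ wins everywhere in $H$ (the stitched strategy attracts to $v_1,v_2,\dots$ in turn, switching permanently to $\bar\rho$'s subgame-winning strategy if the play ever enters one of those nonempty regions; if it never does, every vertex recurs and $V(H)\in R_{\bar\rho}$ decides the play). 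This is provable and your instinct about ``Zielonka-style stitching'' is exactly right, but carrying it out amounts to reproving the inductive core of Zielonka's theorem. What the paper's route through finite-memory determinacy and the SCC argument buys is the replacement of this combinatorial construction by a one-line appeal to terminal components.
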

So Player ${\sigma}$ has some nonempty winning region in the game $G$ if and only if $\sigma$ has a winning strategy in the 
Trap-Depth Game in which $\sigma$ goes first.

Note the following simple corollary:
\begin{corollary}
The following two statements are equivalent:
\begin{itemize}
\item Parity games can be solved in polynomial time.
\item The player with a winning strategy in the trap-depth game described by a parity game can be determined in polynomial time.
\end{itemize}
\end{corollary}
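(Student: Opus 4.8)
The plan is to read the corollary entirely through Theorem \ref{main_theorem}. The first step is to reformulate the second statement: by Theorem \ref{main_theorem}, together with the determinacy of the trap-depth game noted after Definition \ref{trap_depth_game}, ``determining the winner of the trap-depth game in which $\tau$ moves first on $H$'' is \emph{exactly} the ability to decide whether $W_{H,\tau}\neq\emptyset$, since that game is won by $\tau$ precisely when $W_{H,\tau}\neq\emptyset$ and otherwise by $\overline\tau$. Call this polynomial-time decision procedure the \emph{emptiness oracle} (it may be applied to any parity game, in particular to any subgame of a given $G$). The forward implication is then immediate: if parity games can be solved in polynomial time we simply compute $W_{H,\tau}$ and test it for emptiness, which is what the oracle reports.

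For the reverse implication I would show that the emptiness oracle suffices to compute the full winning partition of $G$ in polynomial time; since a parity game is solved once its winning regions are known, this gives the first statement. The driving engine is Lemma \ref{reducibility}: if we can exhibit any nonempty $X\subseteq W_{G,\sigma}$, then, writing $U=\mathrm{Attr}(G,X,\sigma)$, we have $W_{G,\sigma}=U\cup W_{G[V\setminus U],\sigma}$, and $V\setminus U$ is a $\sigma$-trap, hence a dead-end-free parity game strictly smaller than $G$. Peeling off $U$ and recursing thus terminates in at most $|V|$ rounds, each costing one oracle call (to learn which regions are nonempty) plus a linear-time attractor computation. Everything therefore reduces to one subproblem: \emph{given that the oracle reports $W_{G,\sigma}\neq\emptyset$, produce an explicit nonempty $X\subseteq W_{G,\sigma}$.} The oracle makes this usable because of the following sufficient condition, which I would prove directly: if $X$ is a nonempty $\overline\sigma$-trap in $G$ with $W_{G[X],\overline\sigma}=\emptyset$ (equivalently $W_{G[X],\sigma}=X$, an oracle-checkable property), then $X\subseteq W_{G,\sigma}$. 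Indeed, $W_{G[X],\sigma}=X$ gives $\sigma$ a strategy winning from every vertex of $G[X]$, and since $X$ is a $\overline\sigma$-trap the opponent cannot steer the token out of $X$, so playing this strategy keeps every resulting play inside $X$ and winning for $\sigma$. Note that $X=W_{G,\sigma}$ itself satisfies these hypotheses (by Lemma \ref{traps}-style reasoning $\sigma$ wins everywhere inside its own region), so such an $X$ exists whenever $W_{G,\sigma}\neq\emptyset$.

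The extraction of such an $X$ is where I expect the real difficulty, and it is a genuine search-to-decision gap: the oracle returns only a bit, while the peeling needs an actual vertex set, and the maximal witness $W_{G,\sigma}$ is exactly the object we are trying to compute. The approach I would take is to build a witness greedily rather than by inspecting the exponentially many $\overline\sigma$-traps. Starting from the whole game, I would repeatedly strip the $\overline\sigma$-attractor of the vertices carrying the current largest priority of parity $\overline\sigma$; this operation preserves the property of being a $\overline\sigma$-trap and lowers the top priority, while the oracle certifies along the way whether $\sigma$ already wins everywhere on the surviving trap (i.e.\ whether $W_{G[X],\overline\sigma}=\emptyset$). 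The main obstacle is to prove that some such guided, attractor-based search is guaranteed to expose a nonempty $X$ with $W_{G[X],\overline\sigma}=\emptyset$ after only polynomially many oracle calls, and never to discard winning vertices it must retain. This is precisely the self-reducibility of parity games; once it is established, combining it with the peeling of the previous paragraph computes $W_{G,\sigma}$ in polynomial time, and the equivalence of the two statements follows.
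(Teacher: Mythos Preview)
The paper offers no proof here; it records the statement as a ``simple corollary'' of Theorem~\ref{main_theorem} and moves on. Your forward implication is correct and is presumably all the paper has in mind for that direction. For the reverse direction you rightly observe that, via Theorem~\ref{main_theorem}, the hypothesis is exactly a polynomial-time emptiness oracle for winning regions, and you correctly reduce everything to one search-to-decision step: given that $W_{G,\sigma}\neq\emptyset$, produce a nonempty $X\subseteq W_{G,\sigma}$.

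The gap you flag is real, and the specific search you propose does not close it. Stripping the $\overline\sigma$-attractor of the vertices of largest $\overline\sigma$-priority can discard all of $W_{G,\sigma}$. Take $V=\{a,m,b\}$ with $p(a)=4$, $p(m)=3$, $p(b)=1$ and edges $a\to m$, $m\to a$, $b\to b$; then $W_{G,\mathrm{Even}}=\{a,m\}$ and $W_{G,\mathrm{Odd}}=\{b\}$, but your first strip removes the Odd-attractor of $m$, which is exactly $\{a,m\}$, leaving only $\{b\}$, from which no Even-winning vertex can ever be recovered. A search that does work is: pick any vertex $v$, set $U=V\setminus\mathrm{Attr}(G,\{v\},\overline\sigma)$ (a $\overline\sigma$-trap), and query the oracle on $G[U]$. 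If $W_{G[U],\sigma}\neq\emptyset$, recurse on the strictly smaller game $G[U]$; any witness found there lies in $W_{G,\sigma}$ because $U$ is a $\overline\sigma$-trap. If $W_{G[U],\sigma}=\emptyset$, then $v\in W_{G,\sigma}$: otherwise the $\overline\sigma$-trap $W_{G,\sigma}$ misses $v$, hence misses $\mathrm{Attr}(G,\{v\},\overline\sigma)$, hence lies inside $U$, forcing $W_{G[U],\sigma}\supseteq W_{G,\sigma}\neq\emptyset$. This terminates in at most $|V|$ oracle calls and, combined with the attractor-peeling via Lemma~\ref{reducibility} that you already describe, yields the polynomial reduction.
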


This theorem also motivates a new parameter for parity games:

\begin{definition}
The \emph{Trap-Depth} of a parity game $G$ is the minimum integer $k$ such that $\Delta_{\mbox{Even}}(G,k)$ or $\Delta_{\mbox{Odd}}(G,k)$. 
\end{definition}

Note that this is a parameter that fundamentally depends on both the graph and the priorities of the vertices. Although having bounded trap-depth is much more general, one simple class of parity games that has this property is those with a bounded number of priorities.

The above definition applies equally well to Muller games, though we do not have an algorithmic application. Similarly, one can define the \emph{$\sigma$-trap-depth} of $G$ as the minimum integer $k$ (if it exists) such that 
$\Delta_{\sigma}(G,k)$; so $W_\sigma \neq \emptyset$ 
if and only if the $\sigma$-trap depth of $G$ is at most $\left\lceil \frac{|p(V)|}{2} \right\rceil$. This upper bound can be achieved, as shown by Figure \ref{Figure 2}.
\begin{figure}[hbt]
  \centering
    {\psfig{figure=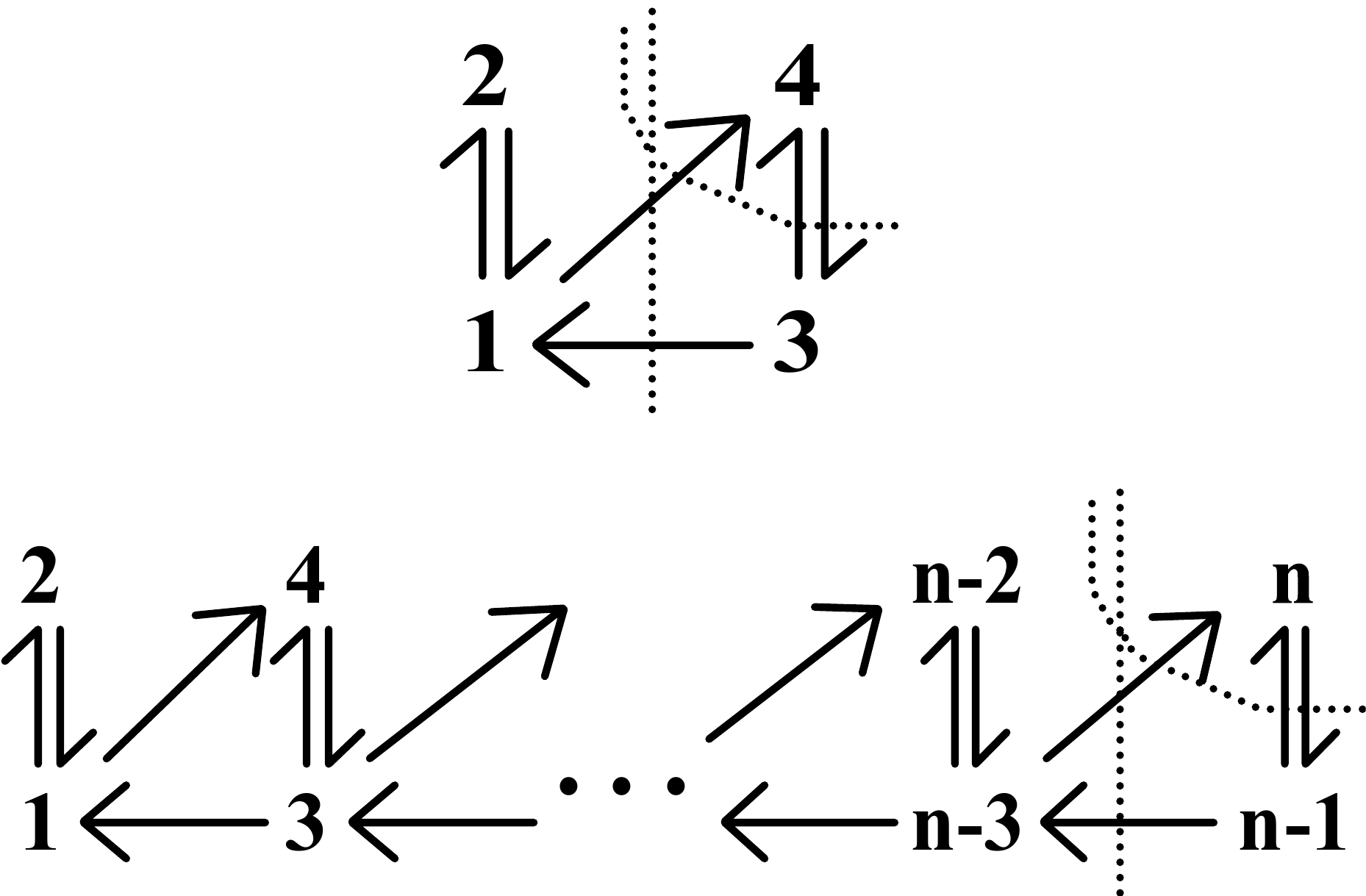,
    width=0.45\textwidth}} 
\caption{{\bf Maximum Trap-Depth:} Above: base case ($G_4$) with trap-depth $2$; Below: $G_{n}$ with $n$ vertices ($n$ is even); both are Even-winning from every vertex (so $\Delta_1(G_n,k)$ is never true for any $k$, by Theorem~\ref{main_theorem}). 
The only Odd-trap is the entire graph, so this must be Even's first move in a trap-depth game. Odd could then remove the right-most top vertex, 
the remaining set being an Even-subtrap. Within this graph, the only remaining appropriate Odd-subtrap 
for Even's next move is the set formed by removing the right-most bottom vertex. We have now 
reduced the game to $G_{n-2}$. Each time we add two vertices we increase the trap-depth by $1$, so the trap depth of $G_{n}$ is exactly $n/2$}
\label{Figure 2}
\end{figure}

\subsection{Proof of Theorem~\ref{main_theorem}}
\label{proof-main-theorem}
\subsubsection{Proof for Memoryless Strategies}

We will first prove the characterization of Muller games (the first two sentences of Theorem \ref{main_theorem}) for games in which player $\sigma$ has a memoryless strategy that wins starting from any vertex in $W_\sigma$. Intuitively, traps do not distinguish between memoried and memoryless strategies; we will formalize this intuition and this will allow us to extend the main theorem to all Muller games.

\begin{lemma} \label{lemma_maxismine}
Let $G$ be a nonempty Muller game with $W_{G,\sigma}=V$, that is in which $\sigma$ wins starting from any vertex, and $\pi$ a memoryless winning strategy for $\sigma$. Then there is a nonempty $\overline\sigma$-trap $T$ in $G$ such that
$W_{G[T],\sigma}=T$, $T \in R_\sigma$, and, if $\pi$ is followed, then any play starting in $T$ will not leave $T$ (i.e. $\pi$ does not prescribe leaving $T$).
\end{lemma}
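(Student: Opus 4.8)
The plan is to extract $T$ directly from the graph of the memoryless strategy itself. First I would form the subgame $H$ induced by $\pi$, in which every $\sigma$-vertex retains only its single outgoing edge prescribed by $\pi$ while every $\overline\sigma$-vertex keeps all of its edges. Since $\pi$ is a strategy and $G$ has no dead-ends, every vertex of $H$ still has an outgoing edge. I would then pass to the condensation of $H$ into strongly connected components and choose any terminal component $T$ (one with no outgoing edges to other components in $H$); such a component exists because $H$ is finite, nonempty, and every vertex has an out-edge, so the condensation DAG has a sink. This nonempty, strongly connected $T$ will be the trap we want.

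Next I would check the trap conditions. Because $T$ is terminal in $H$, no edge of $H$ leaves $T$. For a $\sigma$-vertex $x \in T$ the only $H$-edge is $\pi(x)$, and it stays in $T$, so $\sigma$ has an edge keeping the token inside $T$ and $\pi$ never prescribes leaving $T$. For a $\overline\sigma$-vertex $x \in T$, all of its $G$-edges survive in $H$, and since none leaves $T$, we get $N(x) \subseteq T$. These are exactly the two conditions for $T$ to be a $\overline\sigma$-trap in $G$, and they simultaneously record that, with $\pi$ followed, any play that starts in $T$ stays in $T$: at $\sigma$-vertices the move is $\pi(x) \in T$, and at $\overline\sigma$-vertices every edge lands in $T$.

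I would then verify $W_{G[T],\sigma} = T$. The restriction $\pi|_T$ is a legal $\sigma$-strategy in $G[T]$, since its moves stay in $T$; and any play in $G[T]$ consistent with $\pi|_T$ is simultaneously a play in $G$ consistent with $\pi$ that begins in $T \subseteq V = W_{G,\sigma}$. As $\pi$ wins from every vertex of $G$, such a play is won by $\sigma$, so $\pi|_T$ wins from every vertex of $G[T]$ and hence $W_{G[T],\sigma} = T$. This is just the memoryless specialization of Lemma~\ref{traps}.

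The remaining, and I expect most delicate, step is to show $T \in R_\sigma$. Here I would exploit that $T$ is strongly connected in $H$: I would have $\overline\sigma$ build a play in which $\sigma$ follows $\pi$ while $\overline\sigma$ repeatedly traverses a fixed closed walk through $H[T]$ that passes through every vertex of $T$. Strong connectivity guarantees such a closed walk (concatenate paths among the vertices of $T$), and at each $\sigma$-vertex the forced $\pi$-edge is precisely an edge of $H[T]$, so the walk is a genuine play $P$. By construction $\text{inf}(P) = T$. Since $P$ follows $\pi$ and starts inside $W_{G,\sigma}$, it is winning for $\sigma$, which by definition of $R_\sigma$ means $\text{inf}(P) = T \in R_\sigma$. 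The one point needing care is to confirm that a single periodic walk realizing every vertex of $T$ infinitely often can be chosen entirely within $H[T]$, which is exactly what strong connectivity provides.
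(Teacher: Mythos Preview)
Your proposal is correct and follows essentially the same approach as the paper: form the strategy graph $H$ induced by $\pi$, take a terminal strongly connected component $T$, and use strong connectivity to exhibit a $\pi$-consistent play with $\inf(P)=T$, forcing $T\in R_\sigma$. Your write-up is more explicit than the paper's (in particular you spell out the $\overline\sigma$-trap verification and the $W_{G[T],\sigma}=T$ argument), but the underlying idea is identical.
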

\begin{proof}
Fix a memoryless winning strategy $\pi$ for $\sigma$ in $G$, and take $H$ to be the subgame induced by $\pi$; that is, leave only one edge out of each $\sigma$ vertex, the one corresponding to the strategy $\pi$. Take $T$ to be a strongly connected component (SCC) of $H$ such that $T$ has no edges into any other SCC. Note that $T$ is a $\overline\sigma$-trap in $H$, and so also in $G$. Since $T$ is strongly connected and player $\sigma$ only has one possible move at any vertex, $\overline\sigma$ has a strategy (not necessarily memoryless) such that starting from any vertex in $T$, if the strategy is followed, every vertex in $T$ occurs infinitely often. Then, by the assumption that $\pi$ was winning, we must have $T \in R_\sigma$. By construction, $\pi$ does not prescribe leaving $T$.
\end{proof}

The following two propositions establish the theorem for Muller games in which $\sigma$ has a memoryless winning strategy.

\begin{proposition}\label{prop1}
If $W_{G,\sigma} \neq \emptyset$ and $\sigma$ has a memoryless winning strategy, then $\sigma$ has a winning strategy on the trap-depth game on $G$ in which $\sigma$ goes first.
\end{proposition}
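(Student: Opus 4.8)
The plan is to prove Proposition~\ref{prop1} by exhibiting an explicit winning strategy for $\sigma$ in the trap-depth game, using induction on the size of the game $G$ (measured in number of vertices). The engine of the argument will be Lemma~\ref{lemma_maxismine}, which hands us, under the hypothesis that $\sigma$ wins from everywhere with a memoryless strategy, a nonempty $\overline\sigma$-trap $T$ with $T \in R_\sigma$ and $W_{G[T],\sigma}=T$. The difficulty is that Proposition~\ref{prop1} only assumes $W_{G,\sigma}\neq\emptyset$, not $W_{G,\sigma}=V$, so I must first pass to a subgame on which the lemma applies.

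First I would reduce to the full-win case. Since $W_{G,\sigma}\neq\emptyset$ and $\sigma$ has a memoryless winning strategy $\pi$, the restriction of $\pi$ to $W_{G,\sigma}$ is a memoryless winning strategy in the subgame $G[W_{G,\sigma}]$, provided $W_{G,\sigma}$ is a legitimate subgame (no dead ends). This holds because $W_{G,\sigma}$ is a $\overline\sigma$-trap, so by Lemma~\ref{traps} applied with $X=W_{G,\sigma}$ we have $W_{G[W_{G,\sigma}],\sigma}=W_{G,\sigma}$; that is, $\sigma$ wins from every vertex of the subgame $G[W_{G,\sigma}]$. So I may work inside $G':=G[W_{G,\sigma}]$, where $\sigma$ wins everywhere.

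Now I would describe $\sigma$'s first move and the inductive step together. Apply Lemma~\ref{lemma_maxismine} to $G'$ to obtain a nonempty $\overline\sigma$-trap $T$ with $T\in R_\sigma$ and $W_{G'[T],\sigma}=T$. Since $T$ is a $\overline\sigma$-trap in $G'$ and $W_{G,\sigma}$ is a $\overline\sigma$-trap in $G$, a short check (using the lemma that a $\overline\sigma$-trap of a $\overline\sigma$-trap is a $\overline\sigma$-trap in the original game) shows $T$ is a $\overline\sigma$-trap in $G$; thus $T\in \text{Traps}_{\overline\sigma}(G)$ with $T\in R_\sigma$, making $T$ a legal opening move for $\sigma$. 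I let $\sigma$ play $X_1=T$. Now $\overline\sigma$ must reply with some $Y_1\in\text{Traps}_\sigma(G[T])$ with $Y_1\in R_{\overline\sigma}$; in particular $Y_1\neq\emptyset$. The key point is that $Y_1$ is a proper subset of $T$: if $Y_1=T$ then, since $Y_1$ is a $\sigma$-trap in $G[T]$ and $W_{G[T],\sigma}=T$, Lemma~\ref{traps} would give $W_{G[T][Y_1],\sigma}=Y_1=T$, so $\sigma$ wins from every vertex of $G[Y_1]=G[T]$, contradicting $Y_1\in R_{\overline\sigma}$ via the value of the play in which all of $Y_1$ recurs; more simply, $W_{G[Y_1],\sigma}=Y_1\neq\emptyset$. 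This observation is what lets induction fire: after $\overline\sigma$'s reply the current game is $G_2=G[Y_1]$ with $W_{G[Y_1],\sigma}=Y_1\neq\emptyset$, $\sigma$ still has a memoryless winning strategy there (restriction of $\pi$), and $Y_1$ has strictly fewer vertices than $G'$ since $T\subseteq V$ and we will see $Y_1\subsetneq T$.

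The main obstacle, and the step I would be most careful with, is verifying that $Y_1$ is strictly smaller so that the induction is well-founded, together with confirming that $\sigma$'s winning region in $G_2$ is nonempty and memorylessly witnessed. The cleanest route is: because $T\in R_\sigma$ means the top priority of $T$ has parity $\sigma$ (in the Muller setting, $T\in\R$ iff $\sigma=$ Red) while $Y_1\in R_{\overline\sigma}$ forces $Y_1$ to omit that top vertex or otherwise differ, $Y_1$ cannot equal $T$; hence $|Y_1|<|T|\le|V|$. Then by the inductive hypothesis applied to $G_2=G[Y_1]$ (which has $W_{G[Y_1],\sigma}=Y_1\neq\emptyset$ and inherits the memoryless winning strategy), $\sigma$ has a winning strategy in the trap-depth game on $G[Y_1]$ in which $\sigma$ goes first. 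Prepending the move $X_1=T$ to that strategy yields a winning strategy for $\sigma$ in the trap-depth game on $G$, completing the induction and the proof.
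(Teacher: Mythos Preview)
Your argument is correct and follows essentially the same route as the paper: reduce to the subgame on $W_{G,\sigma}$ (the paper sets $Y_0:=W_{G,\sigma}$), invoke Lemma~\ref{lemma_maxismine} to produce $\sigma$'s move $T\in\text{Traps}_{\overline\sigma}$ with $T\in R_\sigma$ and $W_{G[T],\sigma}=T$, and then use Lemma~\ref{traps} to see that after any reply $Y_1\in\text{Traps}_\sigma(G[T])$ one still has $W_{G[Y_1],\sigma}=Y_1$. The only cosmetic difference is that the paper packages this as an invariant (``$\sigma$ always has a legal move''), whereas you phrase it as induction on $|V|$; since $T\in R_\sigma$ and $Y_1\in R_{\overline\sigma}$ are disjoint conditions, $Y_1\subsetneq T$ and your induction is well-founded. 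One small point worth making explicit: your parenthetical ``restriction of $\pi$'' needs the clause of Lemma~\ref{lemma_maxismine} that $\pi$ does not prescribe leaving $T$; together with $Y_1$ being a $\sigma$-trap in $G[T]$ this forces $\pi(v)\in Y_1$ for every $\sigma$-vertex $v\in Y_1$, so $\pi|_{Y_1}$ is indeed a memoryless winning strategy on $G[Y_1]$ and the inductive hypothesis applies.
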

\begin{proof}

Fix $\pi$ a memoryless winning strategy for $\sigma$. We describe a strategy for $\sigma$ in the trap-depth game so that for every $i \geq 1$ player $\sigma$ has a valid move $H_i$ satisfying that $\pi$ does not prescribe leaving $X_i$ and any potential response $Y_i$ satisfies the invariant $W_{G[Y_i],\sigma}=Y_i$. To get the induction going we define $Y_0:=W_{G,\sigma}$ and note that $W_{G[Y_0],\sigma}=Y_0$. Note that such a strategy ensures that player $\sigma$ always has a valid move and thus wins the trap-depth game.

Suppose $i \geq 0$ rounds have been played, and assume by induction that $\sigma$ wins the Muller game starting from any vertex in $Y_i$. Then, by Lemma \ref{lemma_maxismine}, there is some $\overline\sigma$-trap $X_{i+1}$ in $G[Y_i]$ with $X_{i+1} \in R_\sigma$ such that $W_{G[X_{i+1}],\sigma}=X_{i+1}$ and $\pi$ does not prescribe leaving $X_{i+1}$; have $\sigma$ play such an $X_{i+1}$. Then, if player $\overline\sigma$ has some response $Y_{i+1}$, we have that $Y_{i+1}$ is a $\sigma$-trap in $G[X_{i+1}]$ and so by Lemma~\ref{traps} $W_{G[Y_{i+1}],\sigma}=Y_{i+1}$, as required.
\end{proof}

\begin{proposition}\label{prop2}
If $W_{G,\sigma} = \emptyset$ and player $\overline\sigma$ has a memoryless winning strategy, then $\overline\sigma$ has a strategy that wins the trap-depth game on $G$ in which $\sigma$ goes first.
\end{proposition}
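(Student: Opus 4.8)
The plan is to prove this as the exact dual of Proposition~\ref{prop1}. Here $\sigma$ still moves first, but now it is $\overline\sigma$ who owns the winning region, so the goal shifts: instead of showing that the first mover can keep moving, I must show that the \emph{second} mover $\overline\sigma$ can always reply. Since the trap-depth game is finite (it terminates after at most $\left\lceil n/2 \right\rceil$ rounds, as noted after Definition~\ref{trap_depth_game}), if $\overline\sigma$ never gets stuck then $\sigma$ is eventually the player with no legal move, and hence $\sigma$ loses. First I would record the trivial reformulation: $W_{G,\sigma}=\emptyset$ is the same as $W_{G,\overline\sigma}=V$, so the hypothesis supplies a memoryless strategy $\pi$ for $\overline\sigma$ that wins from every vertex of $G$.

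The heart of the argument is an invariant maintained at the start of each round $i$: the current game $G_i$ satisfies $W_{G_i,\overline\sigma}=V(G_i)$, and a suitable restriction of $\pi$ is a memoryless winning strategy for $\overline\sigma$ on $G_i$. This holds at $G_1=G$ by the reformulation above. For the inductive step, suppose the invariant holds at the start of round $i$ and $\sigma$ plays some $X_i\in {Traps}_{\overline\sigma}(G_i)$ with $X_i\in R_\sigma$; if $\sigma$ has no such move then $\sigma$ loses at once and we are done. Because $X_i$ is a $\overline\sigma$-trap contained in $W_{G_i,\overline\sigma}=V(G_i)$, Lemma~\ref{traps} (applied with the player being $\overline\sigma$) yields $W_{G_i[X_i],\overline\sigma}=X_i$. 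Moreover, since $X_i$ is a $\overline\sigma$-trap, every $\overline\sigma$-vertex of $X_i$ has all its successors inside $X_i$, so $\pi$ restricted to $X_i$ is a well-defined strategy on $G_i[X_i]$, and it is winning there because every play it produces is also a play of $\pi$ in $G_i$ with the same infinity set.

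Next I would apply Lemma~\ref{lemma_maxismine} to the nonempty game $G_i[X_i]$, the player $\overline\sigma$, and the memoryless winning strategy $\pi|_{X_i}$. This produces a nonempty $\sigma$-trap $Y_i$ in $G_i[X_i]$ with $Y_i\in R_{\overline\sigma}$ and $W_{G_i[X_i][Y_i],\overline\sigma}=Y_i$. Such a $Y_i$ is precisely a legal reply for $\overline\sigma$ in the trap-depth game, so I have $\overline\sigma$ play it. Setting $G_{i+1}=G_i[Y_i]$, the equality $W_{G_{i+1},\overline\sigma}=Y_i=V(G_{i+1})$ together with the restriction $\pi|_{Y_i}$ re-establishes the invariant for round $i+1$. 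Hence $\overline\sigma$ can respond to every move of $\sigma$, and since the game terminates, $\sigma$ is forced to be the first player unable to move; therefore $\overline\sigma$ wins.

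The one step needing care — the analogue of the subtlety in Proposition~\ref{prop1} — is confirming that the invariant survives $\sigma$'s move, i.e. that $\overline\sigma$ genuinely always has a reply. This is exactly where Lemma~\ref{traps} does the work: it guarantees that after $\sigma$ restricts play to the $\overline\sigma$-trap $X_i$, player $\overline\sigma$ still wins from every vertex of the smaller game $G_i[X_i]$, which is what lets Lemma~\ref{lemma_maxismine} apply and furnish the trap $Y_i$. The memoryless hypothesis is used only to invoke Lemma~\ref{lemma_maxismine}, just as in the proof of Proposition~\ref{prop1}; everything else is routine bookkeeping about restrictions of $\pi$ to traps.
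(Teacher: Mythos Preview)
Your proof is correct. You maintain the invariant $W_{G_i,\overline\sigma}=V(G_i)$ together with a memoryless winning strategy, use Lemma~\ref{traps} to push the invariant past $\sigma$'s move $X_i$, and then invoke Lemma~\ref{lemma_maxismine} to produce $\overline\sigma$'s reply $Y_i$; finiteness of the trap-depth game then forces $\sigma$ to be the one who runs out of moves.

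The paper's own proof is shorter because it does not rebuild this induction: after $\sigma$'s first move $X$, it observes (via Lemma~\ref{traps}) that $W_{G[X],\overline\sigma}=X$ with $\pi|_X$ memoryless winning, and then notes that the remaining game is exactly the trap-depth game on $G[X]$ in which $\overline\sigma$ goes first --- so Proposition~\ref{prop1}, applied with the roles of $\sigma$ and $\overline\sigma$ swapped, finishes immediately. Your argument is precisely the unrolling of that reduction; the ingredients (Lemmas~\ref{traps} and~\ref{lemma_maxismine}) are identical, you just re-derive the invariant round by round instead of black-boxing Proposition~\ref{prop1}. The paper's version is more economical, while yours is self-contained and makes the mechanism of $\overline\sigma$'s replies explicit.
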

\begin{proof}
Let $X$ be player $\sigma$'s first move. Then, since $X$ is a $\overline\sigma$-trap, we have $W_{G[X],\overline\sigma} = X \neq \emptyset$ by Lemma~\ref{traps}. Note that now we simply play the trap-depth game on $G[X]$ in which $\overline\sigma$ goes first and $\pi|_X$ is a memoryless winning strategy on $G[X]$, and so by the previous proposition we have that $\overline\sigma$ has a winning strategy.
\end{proof}

The previous two propositions show the desired characterization of Muller games, assuming that players have memoryless winning strategies.

\subsubsection{Proof for all Muller Games}

While Muller games do not, in general, have memoryless strategies, a player need only use a finite amount of memory. To formalize this notion, we define a bounded-state strategy.

\begin{definition}
For any Muller game $G$, any positive integer $N$, and any function $M: V \times [N] \rightarrow [N]$, define the $M$-sequence with respect to any play $v_0,v_1,\ldots$ by $M_0 = 0$ and $M_i = M(v_i,M_{i-1})$.
\end{definition}

Intuitively, in the above, $M_i \in [N]$ is the (joint) memory used by the players and $M_{i+1}$ depends only on $M_i$ and on the most recent move.

\begin{definition}
For any Muller game $G$, any positive integer $N$, and any function $M: V \times [N] \rightarrow [N]$, a strategy $\pi_\sigma$ for player $\sigma$ is a bounded-state $M$-strategy if there is some $\pi : [N] \rightarrow V$ so that if $v_0,\ldots$ is any play consistent with $\pi_\sigma$ and $M_0,\ldots$ is the corresponding $M$-sequence, then $\pi_\sigma$ depends only on $M_i$. I.e., there is some function $\pi : [N] \rightarrow V$ so that for each $\sigma$-vertex $v_i$, we have $v_{i+1}=\pi(M_i)$.
\end{definition}

The following theorem is proved in \cite{M93}. It states that, while Muller games may not have memoryless strategies, players need only a bounded amount of memory.

\begin{theorem}
For any Muller game $G$ there is some positive integer $N$ and some $M: V \times [N] \rightarrow [N]$ so that, for each player $\sigma$, there is a bounded-state $M$-strategy $\pi_\sigma$ satisfying that, starting from any vertex in $\sigma$'s winning region, $\pi_\sigma$ is a winning strategy for $\sigma$.
\end{theorem}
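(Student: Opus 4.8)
The plan is to prove this classical finite-memory determinacy result via the \emph{latest appearance record} (LAR) reduction of Muller games to parity games, and then to invoke the memoryless determinacy of parity games already cited as \cite{EJ91}. First I would fix the memory structure. Let the memory states be pairs $(w,h)$, where $w$ is an ordering (without repetition) of a subset of $V$ and $h$ is a distinguished position called the \emph{hit}; the update rule $M$, upon the token moving to a vertex $v$, moves $v$ to the front of $w$ and sets $h$ to the position $v$ previously occupied (with a fresh maximal position if $v$ is new). Intuitively, $w$ records the vertices in order of most-recent visit (most recent first) and $h$ marks how far down the list the token just jumped. There are only finitely many such records, which is the source of the bound; I would then fold the current vertex into the memory, so that the state $M_i$ encodes both the token position and the LAR, matching the paper's requirement that the bounded-state strategy $\pi:[N]\to V$ depend on $M_i$ alone.

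Second, I would build the product parity game $\hat G$ on positions $(v,m)$, with edges inherited from $G$ and the memory component updated by $M$; a position is owned by $\sigma$ exactly when $v\in V_\sigma$. Priorities are assigned by reading off, from a memory state with hit $h$, the set $S_h$ consisting of the top-$h$ vertices of the list $w$, and giving the position a priority that increases with $h$ and whose parity is even if and only if $S_h\in\R$ (Red's winning class). This is a finite parity game whose size is $|V|$ times the number of reachable memory states.

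Third, and this is the heart of the matter, I would prove the combinatorial correctness lemma: along any infinite play, if $h^{*}$ is the largest hit value occurring infinitely often, then $\text{inf}(P)$ in the underlying Muller play equals $S_{h^{*}}$, the top-$h^{*}$ entries of the LAR (which stabilize as a set from some point on). The argument is the standard one: vertices visited only finitely often eventually sink below position $h^{*}$ and stay there, whereas every vertex visited infinitely often is pulled to the front infinitely often and hence sits above $h^{*}$; a hit strictly larger than $h^{*}$ occurring infinitely often would require promoting a vertex lying below all infinitely-often-visited vertices, contradicting maximality. Consequently the maximal priority seen infinitely often is even precisely when $S_{h^{*}}=\text{inf}(P)\in\R$, so the parity winner of the $\hat G$-play coincides with the Muller winner. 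Verifying this equivalence carefully is the step I expect to require the most attention.

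Finally, I would apply memoryless determinacy of parity games (\cite{EJ91}) to $\hat G$: each player $\sigma$ has a memoryless winning strategy, a function of the current product position alone. Since the memory evolves deterministically from a designated initial state $m_0$, product plays project bijectively onto $G$-plays preserving the winner (by the correctness lemma), whence $v\in W_{G,\sigma}$ if and only if the corresponding initial position $(v,m_0)$ lies in $W_{\hat G,\sigma}$. A memoryless product strategy, read as a function of the memory value $M_i$ (which encodes the current vertex together with its LAR), is exactly a bounded-state $M$-strategy in the sense of the definition; taking $N$ to be the number of reachable memory values yields the required $N$ and $M$. Hence the resulting $\pi_\sigma$ wins from every vertex of $W_{G,\sigma}$, completing the proof.
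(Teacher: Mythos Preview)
The paper does not actually prove this theorem; it simply quotes it from McNaughton~\cite{M93} and moves on. Your proposal supplies a genuine proof via the latest-appearance-record reduction to parity games, which is correct and is the standard modern route (essentially the presentation in~\cite{GTW02}). McNaughton's original argument in~\cite{M93} is different: it builds finite-memory strategies directly by a recursive decomposition of the game, without passing through parity games or invoking~\cite{EJ91}. Your approach is more modular---it cleanly separates the memory structure (LAR) from the determinacy input (memoryless determinacy of parity games)---whereas McNaughton's is self-contained but less reusable.

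One small point worth tightening: the paper's definition fixes the initial memory value $M_0=0$ independently of the starting vertex $v_0$, and requires $v_{i+1}=\pi(M_i)$. As written this means $\pi$ cannot see $v_0$ on the very first move. Your remark about ``folding the current vertex into the memory'' is the right fix, but make sure the update rule $M$ is arranged so that $M_i$ already records $v_i$ (e.g., have the LAR absorb the current vertex as part of the update, so that after one step the memory knows where the token sits). With that adjustment your argument goes through cleanly.
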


Given any Muller game $G$, take $N$ a positive integer and $M: V \times [N] \rightarrow [N]$ as in the previous theorem. We define the memoried Muller game associated with $G$, call it $G_M$, to have vertex set $V\times [N]$ (where $N$ depends on $G$ as in the previous theorem). Intuitively, $G_M$ will simulate $G$, but each vertex $v \in V \times [N]$ in the memoried game records the current state of the memory, with $v_1$ representing the current position in $G$. Thus, given $v,w$ vertices in the memoried game, $(v,w)$ is an edge of the memoried game if and only if $(v_1,w_1)$ is an edge in $G$ and $w_2=M(w_1,v_2)$ (here $v_1 \in V$ is the first coordinate of $v$ and $v_2 \in [N]$ is the second coordinate). Define the vertices belonging to player $\sigma$, $V_{\sigma,M}$, by $v \in V_{\sigma,M}$ if and only if $v_1 \in V_\sigma$. Similarly, $S \subseteq V^n$ is winning for Red, i.e. has $S \in \R_M$, if and only if the corresponding vertices are winning for Red in the original Muller game $G$, i.e. if and only if $\{v_1 : v \in S\} \in \R$.

Note that by the previous theorem and the construction of the memoried games, both players have memoryless winning strategies in $G_M$. The remainder of this section argues that the trap-structure of $G_M$ is very similar to that of $G$.

Intuitively, the following lemma says that if, when playing the trap-depth game on $G_M$, player $\overline{\sigma}$ simply pretends it's the trap-depth game on $G$, then any edge out of a $\overline{\sigma}$ vertex that would have existed were the game played on $G$ also exists in the game on $G_M$.

\begin{lemma}
Assume, in a trap-depth game on $G_M$ whenever the current set of vertices is $X_M$ and it is $\overline{\sigma}$'s turn to move, that $\overline{\sigma}$'s move has the following form: taking $X:=\{v_1 : v \in X_M\}$, there is some $\sigma$-trap $Y$ in $G[X]$ so that $\overline{\sigma}$'s move is $X_M \cap (Y \times [N])$. Then, at every point in the game, if the current set of vertices is $X_M$, take $X:=\{v_1 : v \in X\}$. For any $v \in X_M$ with $v_1 \in V_{\overline{\sigma}}$ and for any $u \in X$ so that $(v_1,u)$ is an edge of $G$, there is some $w \in X_M$ so that $(v,w)$ is an edge of $G_M$ and $w_1=u$.
\end{lemma}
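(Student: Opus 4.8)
The plan is to prove this by induction on the number of moves that have been made, taking as the inductive invariant exactly the lifting property asserted in the conclusion. Concretely, for any set $Z_M \subseteq V \times [N]$ with projection $Z := \{v_1 : v \in Z_M\}$, let $(\star)(Z_M)$ abbreviate the statement that for every $v \in Z_M$ with $v_1 \in V_{\overline\sigma}$ and every $u \in Z$ with $(v_1,u) \in E$ there is some $w \in Z_M$ with $w_1 = u$ and $(v,w)$ an edge of $G_M$. Since the current set of vertices is $V \times [N]$ before any move and thereafter alternates between $\sigma$'s moves (which are $\overline\sigma$-traps) and $\overline\sigma$'s moves (which, by hypothesis, are cylinders of the form $X_M \cap (Y \times [N])$), it suffices to check $(\star)$ for the initial set and to show that each kind of move preserves it. The base case is immediate: for $v$ with $v_1 \in V_{\overline\sigma}$ and $(v_1,u) \in E$, the vertex $w := (u, M(u,v_2))$ lies in $V \times [N]$ and is an out-neighbor of $v$ in $G_M$ by the definition of the memoried edges.

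For $\sigma$'s move I would argue as follows. Suppose $(\star)(Z_M)$ holds and $\sigma$ plays a $\overline\sigma$-trap $X_M$ in $G_M[Z_M]$, with projection $X \subseteq Z$. Given $v \in X_M$ with $v_1 \in V_{\overline\sigma}$ and $u \in X$ with $(v_1,u) \in E$, the inductive hypothesis (applied in $Z_M$, which contains $v$ and whose projection contains $u$) produces a $G_M$-neighbor $w$ of $v$ with $w \in Z_M$ and $w_1 = u$. The point is that this only places $w$ in $Z_M$, whereas I need $w \in X_M$; here I would invoke that $v$ is a $\overline\sigma$-vertex inside the $\overline\sigma$-trap $X_M$, so all of its out-edges within $G_M[Z_M]$ stay in $X_M$, forcing $w \in X_M$. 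This is the main obstacle and the only place where the trap condition is genuinely used.

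For $\overline\sigma$'s move the argument should be routine. Assuming $(\star)(X_M)$ and that $\overline\sigma$ plays $Y_M = X_M \cap (Y \times [N])$ for a $\sigma$-trap $Y$ in $G[X]$, I first note that since $Y \subseteq X$ the projection of $Y_M$ is exactly $Y$. Given $v \in Y_M$ with $v_1 \in V_{\overline\sigma}$ and $u \in Y$ with $(v_1,u) \in E$, the inductive hypothesis in $X_M$ yields a $G_M$-neighbor $w \in X_M$ of $v$ with $w_1 = u$; since $w_1 = u \in Y$, membership $w \in X_M \cap (Y \times [N]) = Y_M$ is immediate. Combining the three cases, $(\star)$ holds for the current set at every point of the game, which is the assertion. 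I expect essentially all the difficulty to be concentrated in the $\sigma$-move step above; the remaining content is bookkeeping about projections of cylinder sets and the definition of the edges of $G_M$.
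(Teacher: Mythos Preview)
Your proposal is correct and follows essentially the same approach as the paper: induction on the number of moves, with the base case handled by the explicit construction $w=(u,M(u,v_2))$, the $\sigma$-move case handled via the $\overline\sigma$-trap property of $\sigma$'s chosen set, and the $\overline\sigma$-move case handled via the cylinder form of the move. The paper's proof is organized identically, and your identification of the $\sigma$-move step as the only place where the trap hypothesis does real work is accurate.
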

\begin{proof}
We proceed by induction on the number of plays in the game. In the base case, the game is the whole graph and this is true by construction of $G_M$. Take $X_M$ to be the current set of vertices and $X:=\{v_1 : v \in X_M\}$.

If it is $\sigma$'s turn to move, $\sigma$ chooses some $\overline\sigma$ trap $Y_M$. Taking $Y:=\{v_1 : v \in Y_M\}$, for any $v \in Y_M$ with $v_1 \in V_{\overline{\sigma}}$ and for any $u \in X$ so that $(v_1,u)$ is an edge of $G$, by induction there is some $w \in X_M$ so that $(v,w)$ is an edge of $G_M$ and $w_1=u$. But $Y_M$ is a $\overline\sigma$ trap, so since $v \in Y_M$ and $v$ is a $\overline{\sigma}$ vertex we get $w \in Y_M$.

If it is $\overline\sigma$'s turn to move, $\overline\sigma$ chooses some $\sigma$ trap $Y_M$ of the form $X_M \cap (Y \times [N])$ where $Y$ is a $\sigma$ trap in $X$. Note that $Y=\{v_1 : v \in Y_M\}$, so this notation is consistent with previous notation. Given any $\overline\sigma$ vertex $v \in Y_M$ and any $u \in Y$ so that $(v_1,u)$ is an edge of $G$, by induction there must be some $w \in X_M$ with $w_1 = u$ so that $(v,w)$ is an edge of $G_M$. But then $w \in Y_M$ since $w_1=u \in Y$ and $Y_M = \{w \in X : w_1 \in Y\}$.
\end{proof}

\begin{theorem}
Player $\sigma$ has a winning strategy in a trap-depth game (in which either $\sigma$ or $\overline{\sigma}$ goes first) on $G$ if and only if player $\sigma$ has a winning strategy in a trap-depth game on $G_M$ (in which the same player goes first).
\end{theorem}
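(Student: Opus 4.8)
The plan is to prove the biconditional by a strategy-transfer (``shadow game'') argument in the direction $G \to G_M$ for each player, and then obtain the reverse direction for free from determinacy of the two trap-depth games. Concretely, I would fix which player moves first and establish two transfers: if the first (initiator) player wins the trap-depth game on $G$ then that player wins on $G_M$, and likewise for the second (responder) player. Granting both, the converse $G_M \to G$ follows by contraposition: if $\sigma$ wins on $G_M$ but $\sigma$ does not win on $G$, then by determinacy $\overline\sigma$ wins on $G$, so by the transfer $\overline\sigma$ wins on $G_M$, contradicting determinacy on $G_M$. Thus the two $G \to G_M$ transfers plus determinacy give the full equivalence for each choice of first player, and symmetrically for the other first player.

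Each transfer runs the two games in lockstep using two maps between positions: projection $X_M \mapsto X := \{v_1 : v \in X_M\}$, and lifting $X \mapsto X_M \cap (X \times [N])$. The player whose winning strategy we transfer always \emph{lifts}: every move it is told to make by its $G$-strategy it realizes in $G_M$ as a set of the form $X_M \cap (Y \times [N])$, while the opponent's $G_M$-moves are fed back into the shadow $G$-game by projection. For legality I must verify that (a) each projected set is a legal trap move in $G$ and (b) each lifted set is a legal trap move in $G_M$, and that the parity condition and the ``no legal move'' losing condition are respected on both sides. The condition $R$ transfers immediately, since the winning condition of $G_M$ is defined by projecting first coordinates, so membership of $X_M$ agrees with that of its projection; and ``cannot move'' transfers because a nonempty trap projects (resp.\ lifts) to a nonempty trap of the correct type, so one player is stuck in $G$ exactly when it is stuck in $G_M$.

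The substance is (a) and (b), and here the preceding edge-preservation lemma is exactly the right tool. Because the transferring player always plays lifts $X_M \cap (Y \times [N])$, the lemma's hypothesis holds and it guarantees, at every position, that every edge out of an $\overline\sigma$-vertex that exists in $G$ into the current projection is witnessed by an edge in $G_M$. This single invariant discharges both the $\overline\sigma$-vertex ``all edges stay inside'' clause needed to see that the opponent's projected trap is a genuine $\overline\sigma$-trap in $G$, and the $\overline\sigma$-vertex ``some edge stays inside'' clause needed to see that the transferring player's lift is a genuine $\sigma$-trap in $G_M$. The complementary $\sigma$-vertex clauses are automatic: a $G_M$-edge always projects to a $G$-edge, and in each relevant trap the $\sigma$-vertices are already constrained to keep all (dually, some) of their edges by the trap definition alone.

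The preceding lemma, as stated, controls $\overline\sigma$-vertices when the \emph{responder} lifts, which is precisely what transfers the responder's strategy. To transfer the initiator's strategy I would invoke the mirror-image statement, proved identically with the two players' roles interchanged, giving preservation of $\sigma$-vertex edges when it is the initiator that plays lifts; this discharges the analogous clauses for the initiator's transfer. I expect the main obstacle to be exactly the legality of \emph{projected} traps: the ``all edges inside'' condition for the opponent-type vertices can genuinely fail for a set that is not a full fibered lift, and the only reason it does not fail here is that the transferring player's lifts keep every surviving fiber intact, so the edge-preservation invariant of the lemma survives each round despite the opponent's otherwise arbitrary restrictions. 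Setting up the induction so that the lemma's hypothesis is maintained throughout the simulation, and checking the mirror case, is where essentially all the care goes; once the invariant is in place, conditions (a), (b), the parity condition, and termination are routine.
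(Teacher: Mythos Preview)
Your proposal is correct and follows essentially the same simulation-plus-determinacy architecture as the paper, with one harmless reversal: you transfer winning strategies in the direction $G \to G_M$ (the $G$-winner plays lifts in $G_M$ while the opponent's arbitrary $G_M$-moves are projected back), whereas the paper transfers in the direction $G_M \to G$ (the $G_M$-winner plays projections in $G$ while the opponent's arbitrary $G$-moves are lifted to $G_M$). Both directions use the same lift/projection maps and the same edge-preservation invariant; the paper's choice makes the preceding lemma apply verbatim (since it is $\overline\sigma$'s moves that are lifted), while your choice requires exactly the mirror statement you flag. Either way, the ``symmetrically for the other player'' step that closes the biconditional via determinacy silently invokes the mirror version, so your explicit acknowledgment of it is if anything cleaner than the paper's one-word appeal to symmetry. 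Your identification of the crux---that projected sets are legal traps only because the lifting player keeps full fibers intact, so the lemma's invariant survives each round---matches the paper's use of the lemma precisely.
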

\begin{proof}
Assume player $\sigma$ has a winning strategy in a trap-depth game on $G_M$. Then player $\sigma$ is to play a trap-depth game on $G$ and we wish to show that player $\sigma$ has a winning strategy; we define player $\sigma$'s strategy by emulating the game on $G_M$. With each move, player $\sigma$ will maintain a set of vertices $X_M$ which represents the state of the emulated game on $G_M$. Assume the current set of vertices in the game on $G$ is $X$, and $\sigma$ has maintained the state $X_M$. We will inductively show that $\sigma$ has a strategy that maintains $X = \{v_1 : v \in X_M\}$ and that, starting from $X_M$ with the appropriate player moving, is winning for $\sigma$ in the game $G_M$. In the base case, $X=V$ and $X_M=V_M$.

If it is $\overline{\sigma}$'s turn to move, $\overline{\sigma}$ will pick some $\sigma$ trap $Y \in R_{\overline\sigma}$. Then we claim $Y_M:= X_M \cap \left(Y \times [N]\right)$ is a $\sigma$-trap in $X_M$: since $Y$ is a $\sigma$-trap in $X$, it must be the case that given any $\sigma$ vertex in $Y_M$, any neighbor it had in $X_M$ is also in $Y_M$. Given a $\overline\sigma$ vertex $v$ in $Y_M$, since $Y$ is a $\sigma$-trap we have that $v_1$ has a neighbor in $Y$, and so $v$ has a neighbor $w$ in $Y_M$ by the previous lemma, thus verifying that $Y_M$ is a $\sigma$-trap. Then $Y = \{v_1 : v \in Y_M\}$ so $Y_M \in R_{\overline\sigma,M}$ since $Y \in R_{\overline\sigma}$. Since $X_M$ was winning for $\sigma$, we have $Y_M$ is as well (since any move by $\overline{\sigma}$ must result in a winning position).

If it is $\sigma$'s turn to move, by assumption we are in some winning position $X_M$. Then $\sigma$ may choose some $\overline\sigma$ trap $Y_M \in R_{\sigma,M}$ in $G_M[X_M]$ so that $Y_M$ is winning for $\sigma$. We claim $Y:= \{v_1 : v \in Y_M\}$ is a $\overline\sigma$ trap in $X$. Since $Y_M$ is a $\overline\sigma$ trap in $X_M$, given any $\sigma$ vertex $u \in Y$ choose $v \in Y_M$ with $v_1=u$; $v$ must have some neighbor $w \in Y_M$, so $(v_1,w_1)$ is an edge of $Y$. Given any $\overline\sigma$ vertex $t \in Y$ and any neighbor $u \in Y$, we may choose $v \in Y_M$ with $v_1=t$ and then we have that there is some $w \in X_M$ which is a neighbor of $v$ with $w_1=u$ by the previous lemma, but $Y_M$ is a $\overline\sigma$-trap, so $w \in Y_M$ and so $u = w_1 \in Y$, as desired.

We've shown that, if $\sigma$ has a winning strategy on $G_M$, then $\sigma$ has a winning strategy on $G$. Symmetrically, if $\overline{\sigma}$ has a winning strategy on $G_M$, then $\overline{\sigma}$ has a winning strategy on $G$, thus proving the theorem.
\end{proof}

By combining the previous theorem with Propositions \ref{prop1} and \ref{prop2}, we may remove the assumptions regarding having memoryless winning strategies:

\begin{theorem}
If $W_{G,\sigma} \neq \emptyset$, then player $\sigma$ has a winning strategy on the trap-depth game on $G$ in which $\sigma$ goes first.
\end{theorem}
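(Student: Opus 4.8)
The plan is to reduce to the memoryless case already settled by Proposition~\ref{prop1}, using the memoried game $G_M$ as a bridge and the previous (unlabelled) equivalence theorem to transfer the conclusion back to $G$. The key facts I would lean on are that $G_M$ is a Muller game in which both players have memoryless winning strategies (noted immediately after the bounded-state strategy theorem) and that the trap-depth games on $G$ and on $G_M$ have the same winner for a fixed first player. Given these, the only genuinely new work is transporting the hypothesis $W_{G,\sigma}\neq\emptyset$ into the corresponding statement about $G_M$, so that Proposition~\ref{prop1} becomes applicable there.

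First I would verify that $W_{G,\sigma}\neq\emptyset$ implies $W_{G_M,\sigma}\neq\emptyset$, using the projection correspondence between the two games. Every play in $G_M$ projects, via the first coordinate of each vertex, to a legal play in $G$ (since $(v_1,w_1)\in E$ whenever $(v,w)$ is an edge of $G_M$), and by the definition of the memoried winning condition $\R_M$ the two plays have the same winner. Consequently, a winning $\sigma$-strategy from some $v\in W_{G,\sigma}$ (which exists by determinacy) lifts to a $\sigma$-strategy on $G_M$: the memory second coordinate is a deterministic function of the play so far and so offers $\sigma$ no genuine extra choices, and any $G_M$-play consistent with the lifted strategy projects to a $G$-play consistent with the original winning strategy, hence winning for $\sigma$. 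Therefore every vertex of $G_M$ whose first coordinate is $v$ lies in $W_{G_M,\sigma}$, so $W_{G_M,\sigma}\neq\emptyset$.

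With $W_{G_M,\sigma}\neq\emptyset$ established, and recalling that $\sigma$ has a memoryless winning strategy in $G_M$, I would apply Proposition~\ref{prop1} to $G_M$ to obtain a winning strategy for $\sigma$ in the trap-depth game on $G_M$ in which $\sigma$ goes first. Finally, reading the previous theorem (the equivalence of the trap-depth games on $G$ and $G_M$ for a fixed first player) in the direction from $G_M$ to $G$, I conclude that $\sigma$ has a winning strategy in the trap-depth game on $G$ in which $\sigma$ goes first, which is exactly the claim.

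The main obstacle is the first step, the transfer $W_{G,\sigma}\neq\emptyset \Rightarrow W_{G_M,\sigma}\neq\emptyset$: one must check that neither the enlargement of the vertex set to $V\times[N]$ nor the definition of $\R_M$ can change who wins a play, and that a (possibly memoried) winning strategy on $G$ really does lift to a winning strategy on $G_M$ rather than merely a legal one. Once this projection/lifting correspondence is in hand, the remainder is a direct chaining of Proposition~\ref{prop1} with the equivalence theorem, both of which are already available, so no further combinatorial work on traps is needed.
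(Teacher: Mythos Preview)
Your proposal is correct and follows essentially the same route as the paper: apply Proposition~\ref{prop1} to the memoried game $G_M$ (where memoryless winning strategies exist) and then invoke the equivalence theorem to transfer the trap-depth-game winner back to $G$. The only difference is that you spell out the step $W_{G,\sigma}\neq\emptyset \Rightarrow W_{G_M,\sigma}\neq\emptyset$ via a projection/lifting argument, whereas the paper leaves this implicit in the phrase ``by combining the previous theorem with Propositions~\ref{prop1} and~\ref{prop2}''; your justification of that step is sound.
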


\begin{theorem}
If $W_{G,\sigma} = \emptyset$, then player $\overline\sigma$ has a winning strategy on the trap-depth game on $G$ in which $\sigma$ goes first.
\end{theorem}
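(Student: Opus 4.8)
The plan is to reduce the claim to the memoried game $G_M$, where memoryless winning strategies are available, and then to invoke Proposition~\ref{prop2}. By the memoried-game equivalence theorem (the theorem proved immediately above), $\overline\sigma$ has a winning strategy in the trap-depth game on $G$ in which $\sigma$ goes first if and only if $\overline\sigma$ has a winning strategy in the trap-depth game on $G_M$ in which $\sigma$ goes first. So it suffices to exhibit a winning strategy for $\overline\sigma$ in the trap-depth game on $G_M$ with $\sigma$ moving first, which is exactly what Proposition~\ref{prop2} produces once its two hypotheses are verified for $G_M$ (playing the role of $G$).

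Those two hypotheses are that $\overline\sigma$ has a memoryless winning strategy in $G_M$, and that $W_{G_M,\sigma}=\emptyset$. The first is immediate: by the bounded-state strategy theorem together with the construction of $G_M$, both players have memoryless winning strategies in $G_M$, as already noted. The second, $W_{G_M,\sigma}=\emptyset$, is the substantive step and is where I expect the main work to lie; it should follow from the hypothesis $W_{G,\sigma}=\emptyset$ by transferring winning regions between $G$ and $G_M$.

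To establish $W_{G_M,\sigma}=\emptyset$ I would argue the contrapositive. Suppose some vertex $(v,m)\in V\times[N]$ lay in $W_{G_M,\sigma}$, and fix a winning $\sigma$-strategy from $(v,m)$ in $G_M$. Simulate it in $G$ by having $\sigma$ privately maintain the memory coordinate, initialized to $m$ and updated by the deterministic map $M$. Because $M$ is a function of the most recent vertex and the previous memory, every play of $G$ from $v$ consistent with this simulated strategy lifts uniquely to a play of $G_M$ from $(v,m)$ consistent with the original strategy and with the same first-coordinate projection; since $\R_M$ is defined to be precisely the projection of $\R$, the two plays have the same winner. Hence the simulated strategy wins from $v$ in $G$, so $v\in W_{G,\sigma}$, contradicting $W_{G,\sigma}=\emptyset$.

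The one delicate point in this last argument is that the memory coordinate $m$ is arbitrary, whereas a genuine play of $G_M$ begins with memory $0$; I would need to confirm that the winning condition of $G_M$, being a projection of the ``infinitely often'' condition, is well defined on all of $V\times[N]$ regardless of reachability, so that the starting memory value is irrelevant to who wins a play. With $W_{G_M,\sigma}=\emptyset$ in hand, Proposition~\ref{prop2} yields a winning strategy for $\overline\sigma$ in the trap-depth game on $G_M$ with $\sigma$ first, and the equivalence theorem transports it back to $G$. This mirrors exactly the $W_{G,\sigma}\neq\emptyset$ case, where one instead checks $W_{G_M,\sigma}\neq\emptyset$ (lifting any $v\in W_{G,\sigma}$ to $(v,0)$) and applies Proposition~\ref{prop1}.
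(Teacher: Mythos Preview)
Your proposal is correct and follows exactly the approach the paper intends: reduce to the memoried game $G_M$ via the equivalence theorem and then apply Proposition~\ref{prop2} there, using that both players have memoryless winning strategies in $G_M$. You even fill in a step the paper leaves implicit, namely the verification that $W_{G_M,\sigma}=\emptyset$ follows from $W_{G,\sigma}=\emptyset$ by projecting plays; your ``delicate point'' is not actually an issue, since $G_M$ is simply a Muller game on $V\times[N]$ with winning condition $\R_M$ defined for all subsets, so winning regions are well defined at every vertex regardless of the initial memory coordinate.
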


Assume $T$ is the first-move $\overline\sigma$-trap in the trap-depth game on $G$ where $\sigma$ goes first and that $\sigma$ wins starting from $G[T]$ if $\overline{\sigma}$ goes first. If $X := T \cap W_{\overline\sigma} \neq \emptyset$, then $X$ is a $\sigma$-trap in $G[T]$ with $W_{G[X],\overline\sigma}=X$. So, by Lemma \ref{lemma_maxismine}, if we consider $X_M$ in $G_M$, $\overline\sigma$ has a viable move $Y_M \subseteq X_M$ such that $W_{G_M[Y_M],\overline\sigma}=Y_M$. By our previous arguments we then get that $\overline{\sigma}$ can win in the trap depth game in which $\sigma$ goes first on $G[Y]$ where $Y = \{v_1 : v \in Y_M\}$ is a $\sigma$-trap in $G[X]$. Then $Y$ is a valid move for $\overline\sigma$ in $G[T]$, contradicting the assumption that $\sigma$ can win from $G[T]$ if $\overline\sigma$ goes first.
\begin{corollary}
If $T$ is the first-move of a winning $\sigma$-strategy in the Trap-Depth Game where $\sigma$ goes first,
then $T \subseteq W_{\sigma}$.
\end{corollary}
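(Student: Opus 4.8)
The plan is to argue by contradiction and to reduce everything to the equivalence already furnished by the first part of Theorem~\ref{main_theorem}, applied not to $G$ but to the subgame $G[T]$ and to the \emph{opposing} player $\overline\sigma$. Suppose $T \not\subseteq W_{G,\sigma}$. Since Muller games are determined, $V$ is partitioned into $W_{G,\sigma}$ and $W_{G,\overline\sigma}$, so the set $X := T \cap W_{G,\overline\sigma}$ is nonempty. I will produce a legal reply $Y$ for $\overline\sigma$ to $\sigma$'s opening move $T$ from which $\overline\sigma$ wins the remainder of the trap-depth game; since the trap-depth game is finite and determined, this contradicts the assumption that the strategy with opening move $T$ is winning for $\sigma$.

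First I would upgrade $X$ from a winning set in $G$ to a winning set in $G[T]$. Because $T$ is a $\overline\sigma$-trap, every $\overline\sigma$-vertex of $T$ has all of its out-edges inside $T$; hence any $G$-winning strategy of $\overline\sigma$, played from a vertex of $X$, never prescribes leaving $T$ and is therefore a legal strategy in $G[T]$, while any $G[T]$-play consistent with it is a $G$-play consistent with it, hence winning. This is exactly the argument behind Lemma~\ref{traps}, used in its one-sided form (no hypothesis $T \subseteq W_{G,\overline\sigma}$ is needed for the inclusion), and it yields $X \subseteq W_{G[T],\overline\sigma}$; in particular $W_{G[T],\overline\sigma} \neq \emptyset$. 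Now apply the first part of Theorem~\ref{main_theorem} to the Muller game $G[T]$ with the roles of the players exchanged: since $\overline\sigma$ has a nonempty winning region in $G[T]$, player $\overline\sigma$ has a winning strategy in the trap-depth game on $G[T]$ in which $\overline\sigma$ moves first. Let $Y$ be the first move dictated by this strategy; by Definition~\ref{trap_depth_game}, $Y$ is a nonempty $\sigma$-trap in $G[T]$ with $Y \in R_{\overline\sigma}$.

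It remains to transplant this into the original game. After $\sigma$ opens with $T$ it is $\overline\sigma$'s turn, and $\overline\sigma$ is required to choose precisely a $\sigma$-trap of $G[T]$ lying in $R_{\overline\sigma}$; so $Y$ is a legal reply. The position then reached is the subgame $G[Y]$ with $\sigma$ to move, and since $Y \subseteq T$ we have $G[Y] = (G[T])[Y]$, which is exactly the position reached in the trap-depth game on $G[T]$ immediately after $\overline\sigma$'s opening move $Y$ (again with $\sigma$ to move). Consequently $\overline\sigma$ may continue verbatim with its winning strategy for the trap-depth game on $G[T]$ and win. But a winning strategy for $\sigma$ with opening move $T$ must, against the reply $Y$, win the continuation from this same position; determinacy of the finite trap-depth game makes both conclusions impossible at once. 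Hence $X = \emptyset$ and $T \subseteq W_{G,\sigma}$.

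The main obstacle, and the step I would be most careful about, is the bookkeeping that makes the transplantation exact: one must check that the reply $Y$ is legal in the original game (which follows because $\overline\sigma$'s obligatory move type after $\sigma$ plays $T$ coincides with $\overline\sigma$'s opening move type in the trap-depth game on $G[T]$) and that the induced subgames literally agree, $G[Y] = (G[T])[Y]$ with the same player to move, so that a strategy on $G[T]$ transfers without modification. I would also flag that there is no circularity: the equivalence invoked here is only the first part of Theorem~\ref{main_theorem}, established earlier via Propositions~\ref{prop1} and~\ref{prop2} together with the reduction to the memoried game $G_M$, independently of the present statement. An alternative, more self-contained route avoids re-invoking Theorem~\ref{main_theorem} and instead passes to $G_M$, where $\overline\sigma$ has a memoryless winning strategy, and applies Lemma~\ref{lemma_maxismine} to extract $\overline\sigma$'s reply directly; but the argument above seems cleaner.
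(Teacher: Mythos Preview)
Your argument is correct. Both you and the paper argue by contradiction from $X := T \cap W_{G,\overline\sigma} \neq \emptyset$ and produce a legal reply $Y$ for $\overline\sigma$ from which $\overline\sigma$ wins the remaining trap-depth game, contradicting that $T$ was the opening move of a winning $\sigma$-strategy. The difference lies in how the reply $Y$ is manufactured. You observe that $W_{G[T],\overline\sigma} \neq \emptyset$ and then simply invoke the already-established equivalence of Theorem~\ref{main_theorem} on the subgame $G[T]$ with the roles swapped; the first move of $\overline\sigma$'s winning strategy in the trap-depth game on $G[T]$ is the desired $Y$. The paper instead takes the route you mention as an alternative: it notes that $X$ is a $\sigma$-trap in $G[T]$ with $W_{G[X],\overline\sigma}=X$, passes to the memoried game $G_M$, applies Lemma~\ref{lemma_maxismine} there to obtain $Y_M$, and projects back to get $Y$. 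Your route is shorter and avoids re-entering the $G_M$ machinery, at the cost of black-boxing the earlier equivalence; the paper's route is more explicit about where $Y$ comes from but repeats work that the equivalence already encapsulates. Your check that there is no circularity is correct: the first part of Theorem~\ref{main_theorem} was fully established via Propositions~\ref{prop1} and~\ref{prop2} and the $G_M$ reduction before this corollary.
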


This completes the proof of Theorem \ref{main_theorem}.

It is interesting to understand how these nested traps will interact with modifications to the graph. The following theorem says that via one such modification not much information is lost; this is particularly useful if one wishes to run the algorithms discussed in the next section.

\begin{theorem}\label{intersectTrap}
Let $G$ be a Muller game. Assume that, in the trap depth game on $G$, $X$ is a valid first move for $\sigma$ that allows $\sigma$ to guarantee a win in at most $k$ rounds and that $A$ is a $\sigma$-trap so that $A \cap X$ is non-empty. Then there is $Y \subseteq X \cap A$ where $Y$ is a valid first move for $\sigma$ that allows $\sigma$ to win in at most $k$ rounds on the trap depth game on $G[A]$.
\end{theorem}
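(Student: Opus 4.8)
The plan is to argue directly from the recursive definition of $\Delta_\sigma$ rather than from winning strategies. Unpacking the hypothesis, ``$X$ is a valid first move allowing $\sigma$ to win in at most $k$ rounds'' means exactly that $\Delta_\sigma(G,k)$ holds and is witnessed by $X$: so $X$ is a $\overline\sigma$-trap in $G$, $X \in R_\sigma$, and every nonempty $\sigma$-trap $Z$ of $G[X]$ satisfies $Z \in R_\sigma$ or $\Delta_\sigma(G[Z],k-1)$. The conclusion we want, ``$Y \subseteq X \cap A$ is a valid first move for $\sigma$ in $G[A]$ winning in at most $k$ rounds,'' is likewise a witness statement for $\Delta_\sigma(G[A],k)$. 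The natural candidate for $Y$ is $A\cap X$ itself, and the only clause that can fail for this candidate is the membership requirement $Y \in R_\sigma$.

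First I would record two structural facts about $A\cap X$ from the last two lemmas of Section~\ref{sec:Mullergames}. Since $A$ is a $\sigma$-trap and $X$ a $\overline\sigma$-trap of $G$, the intersection lemma gives that $A\cap X$ is a $\overline\sigma$-trap in $G[A]$; applying the same lemma with the players interchanged (viewing $X$ as a trap for $\overline\sigma$ and $A$ as a trap for $\sigma$) gives that $A\cap X$ is a $\sigma$-trap in $G[X]$. Because $A\cap X \subseteq A$ and $A\cap X\subseteq X$ we have $G[A][A\cap X] = G[X][A\cap X] = G[A\cap X]$, so a trap inside $A\cap X$ is literally the same object in all three ambient games, and trap transitivity turns any $\overline\sigma$-trap of $G[A\cap X]$ into a $\overline\sigma$-trap of $G[A]$. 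Consequently it suffices to produce a witness for $\Delta_\sigma(G[A\cap X],k)$: such a witness $Y$ is automatically a $\overline\sigma$-trap of $G[A]$ contained in $A\cap X$, lies in $R_\sigma$, and its quantified clause is unchanged because $G[A][Y] = G[Y]$.

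The key step is to feed $A\cap X$ itself into the universally quantified clause of $\Delta_\sigma(G,k)$. As just shown, $A\cap X$ is a nonempty $\sigma$-trap of $G[X]$, hence an admissible choice of $Z$ in that clause, which therefore forces the dichotomy: either $A\cap X \in R_\sigma$, or $\Delta_\sigma(G[A\cap X],k-1)$ holds. In the first case I set $Y=A\cap X$ and must check its clause, namely that every nonempty $\sigma$-trap $Z'$ of $G[A\cap X]$ with $Z'\notin R_\sigma$ satisfies $\Delta_\sigma(G[Z'],k-1)$; but by transitivity such a $Z'$ is a $\sigma$-trap of $G[X]$, so this is precisely an instance of the clause already guaranteed by $\Delta_\sigma(G,k)$. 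In the second case I take $Y$ to be the witness of $\Delta_\sigma(G[A\cap X],k-1)$ handed to me by the definition, and a one-line monotonicity remark (an immediate induction on $j$ shows $\Delta_\sigma(H,j)\Rightarrow \Delta_\sigma(H,j+1)$ for every $H$) upgrades it to a witness of $\Delta_\sigma(G[A\cap X],k)$.

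The main obstacle is exactly the possibility that $A\cap X \notin R_\sigma$: intersecting with $A$ can, in parity-game terms, remove the vertex carrying the largest priority and flip its parity, so the naive choice $Y=A\cap X$ need not be a legal move. The crux of the argument is the observation that this failure is not fatal but instead activates the second horn of the definition's clause, yielding a strictly shallower witness $\Delta_\sigma(G[A\cap X],k-1)$ inside $A\cap X$. I expect the only delicate points to be bookkeeping about which ambient game each trap lives in, and the boundary $k=1$ — where the second horn cannot arise, since it would demand $\Delta_\sigma(\cdot,0)$, which is $\false$, forcing $A\cap X \in R_\sigma$ and the first case; everything else is a direct unwinding of the definitions together with the two trap lemmas.
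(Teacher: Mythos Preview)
Your proof is correct and follows essentially the same route as the paper's. Both arguments hinge on the observation that $A\cap X$ is simultaneously a $\overline\sigma$-trap in $G[A]$ and a $\sigma$-trap in $G[X]$, and then split into the same two cases according to whether $A\cap X\in R_\sigma$; the paper phrases the two cases in the game-theoretic language of moves and responses in the trap-depth game, while you unwind the equivalent $\Delta_\sigma$ predicate, but the content of each case is identical.
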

\begin{proof}
Since $X$ is a $\overline{\sigma}$ trap in $G$ and $A$ is a $\sigma$ trap in $G$, we have $X \cap A$ is a $\overline{\sigma}$ trap in $G[A]$. Furthermore, $X \cap A$ is a $\sigma$ trap in $G[X]$.

If $X \cap A$ is in $R_{\overline{\sigma}}$ then $X \cap A$ is a valid move for $\overline{\sigma}$ in the trap depth game on $G$ after $\sigma$ plays $X$. Therefore, $\sigma$ must have a response $Y$ that leads to a win in at most $k-1$ rounds; then $Y$ is a $\overline{\sigma}$ trap in $X \cap A$ and therefore also in $A$, so it is a valid first move for $\sigma$ in $G[A]$.

Otherwise, $X \cap A$ is in $R_\sigma$. We will make $X \cap A$ player $\sigma$'s first move in the trap depth game on $G[A]$. Assume $\overline{\sigma}$ has a response $X'$ so that $\sigma$ cannot win from $X'$ in at most $k-1$ rounds. Then $X'$ is a $\sigma$ trap in $G[X \cap A]$ and therefore also in $G[X]$, so $X'$ is a valid response for $\overline{\sigma}$ in the trap depth game on $G$ to the play $X$, contradicting the assumption.
\end{proof}

Finally, in preparation for the next section, we translate the above into the language of parity games.

Define the ``max'' of a set of vertices to be those vertices in the set with maximum priority. Recall that ${Traps}_{\sigma}(G)$ 
is the set of nonempty $\sigma$-traps in $G$. Then the condition $X \in R_\sigma$ becomes $\max(X) \subseteq V_\sigma$. For example, we may rewrite the statements $\Delta$:

\noindent $\Delta_{\sigma}(G, 0) :=$ FALSE ;\\
$\Delta_{\sigma}(G, k+1) := [\exists X \in {Traps}_{\overline\sigma}(G) \text{ such that } \max (X) \subseteq V_{\sigma}] \text{ and }$ \\
\indent{\indent{\indent{$[\forall Y \in {Traps}_{\sigma}(G[X]) \text{ we have } 
( \Delta_{\sigma}(G[Y], k) \text{ or } \max(Y) \subseteq V_{\sigma})]$.}}}

In the definition of trap-depth game, for example, when it is player $\sigma$'s turn, player $\sigma$ will choose a $\overline\sigma$ trap whose largest priority is of parity $\sigma$.

Recall for a parity game $G$ that $\text{TDA}(G,\sigma,k)$ returns the largest set $X$ which, as a first move for $\sigma$, allows $\sigma$ to win in at most $k$ rounds in the trap-depth game on $G$. Theorem \ref{intersectTrap} tells us that the $k$th trap-depth algorithm is robust in the following sense: if one determines that some vertices are winning for $\sigma$ and removes their attractor from the graph, either one removes all of $\text{TDA}(G,\sigma,k)$ or else one can find the rest of $\text{TDA}(G,\sigma,k)$ by repeatedly running the $k$th trap depth algorithm on the remaining set.

\section{Trap-Depth Algorithms for Parity Games} \label{parityalg}

In this section of the paper, all discussions are with regards to parity games. We present a collection of algorithms that return subsets of the vertices of a parity game, culminating in the Trap-Depth Algorithm (TDA). We will have two versions of TDA (which take different inputs). We will discuss the first of the algorithms later. The characterization of the second algorithm, TDA$(G,\sigma,k)$, follows easily from the first, and it takes as its inputs a parity game $G$, a player $\sigma$, and an integer $k$. We will ultimately show the following characterization of the second TDA algorithm:

\begin{theorem} \label{theorem_TDA}
TDA$(G,\sigma,k)$ returns the largest (possibly empty) set $S$ so that if $\sigma$ uses $S$ as a first move, $\sigma$ can guarantee a win in at most $k$ moves in the trap-depth game on $G$.
\end{theorem}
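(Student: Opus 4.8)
The plan is to prove the characterization by induction on $k$, tying the recursive structure of the algorithm directly to the recursive definition of the boolean formulas $\Delta_\sigma(G,k)$ and their game-theoretic reading (by the lemma preceding Theorem~\ref{main_theorem}, $\Delta_\sigma(G,k)$ holds exactly when $\sigma$ wins the trap-depth game in at most $k$ rounds, which is how I read the ``$k$ moves'' budget here). Before the induction I would record the structural fact that makes ``the largest set'' meaningful. A $\overline\sigma$-trap is closed under union (every $\overline\sigma$-vertex retains all its neighbors and every $\sigma$-vertex at least one), and the condition $\max(X)\subseteq V_\sigma$ asserts only that the top priority occurring in $X$ has parity $\sigma$, which survives unioning two sets whose top priorities both have parity $\sigma$. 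Thus the candidate first moves are closed under the operations needed; the genuinely nontrivial claim — that the full ``winning in at most $k$ rounds'' property is inherited by unions — I would not attack directly but extract a posteriori from the maximality half of the induction.

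For the inductive step I would unwind what it means for a $\overline\sigma$-trap $X$ with $\max(X)\subseteq V_\sigma$ to be a first move winning in at most $k$ rounds: by the definition of $\Delta_\sigma(G,k)$, every legal response of $\overline\sigma$ — that is, every $\sigma$-subtrap $Y$ of $G[X]$ with $\max(Y)\subseteq V_{\overline\sigma}$ — must satisfy $\Delta_\sigma(G[Y],k-1)$. Call such a $Y$ \emph{bad} if $\max(Y)\subseteq V_{\overline\sigma}$ and $\Delta_\sigma(G[Y],k-1)$ fails; then $X$ wins in at most $k$ rounds precisely when $G[X]$ contains no bad $\sigma$-subtrap. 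Accordingly the algorithm should begin from the largest $\overline\sigma$-trap of the correct top parity (the complement of the appropriate attractor of the offending high-priority vertices) and then iteratively delete the attractor of each bad subtrap it detects, recursing at depth $k-1$ to decide badness. The inductive hypothesis is exactly what certifies that the depth-$(k-1)$ recursion correctly recognizes the $Y$ with $\Delta_\sigma(G[Y],k-1)$.

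The two directions of correctness then run as follows. Writing $S$ for the returned set, for soundness I would exhibit $\sigma$'s strategy: play $S$ as the first move; given any legal response $Y$, the fixpoint guarantees $Y$ is not bad, so $\Delta_\sigma(G[Y],k-1)$ holds and, by the inductive hypothesis on $G[Y]$ together with Lemma~\ref{traps}, $\sigma$ wins the remaining game within $k-1$ rounds, hence at most $k$ rounds overall. For maximality, given any valid first move $X$ I would show $X\subseteq S$ by contradiction: were a vertex of $X$ deleted during the run, I would invoke the robustness Theorem~\ref{intersectTrap} to transport $X$'s winning first move into the restricted game witnessing that deletion (using that complements of attractors are traps, as in Lemma~\ref{reducibility}), thereby producing a bad subtrap inside $G[X]$ and contradicting that $X$ wins in at most $k$ rounds. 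Soundness and maximality together identify $S$ as the unique largest valid first move and, retroactively, show these moves form a join-closed family.

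The step I expect to be the main obstacle is maximality, precisely because $\overline\sigma$'s responses $Y$ need not lie inside any single previously-handled region — a bad subtrap may straddle several deletions, so the naive assertion that ``a deleted vertex witnesses a bad subtrap within $X$'' is not immediate. Theorem~\ref{intersectTrap} is the lever, since it carries a winning first move from $G$ into a trap restriction, but the delicate bookkeeping is to organize the deletions so that at each stage the surviving set is a trap of the right type for the theorem to apply and so that the depth parameter drops in lockstep with the recursion. I would also need to verify that the outer $\max(X)\subseteq V_\sigma$ constraint is discharged by a single attractor removal that does not interfere with the bad-subtrap fixpoint, so that termination of the fixpoint yields exactly a $\Delta_\sigma(G,k)$ witness rather than a proper subset of one.
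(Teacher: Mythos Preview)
The gap is that your proof is for an algorithm other than TDA. You describe the algorithm as ``begin from the largest $\overline\sigma$-trap of the correct top parity\ldots and then iteratively delete the attractor of each bad subtrap it detects,'' a natural Zielonka-style fixpoint, and your soundness/maximality arguments are tailored to that picture. But the actual TDA$(G,\sigma,k)$ does not work this way: it inherits the B\"uchi-game template (Algorithm~\ref{buchi}), maintaining a shrinking set $T\subseteq V_\sigma$ of \emph{target} vertices, computing $C=\text{SeqAttr}(G,T,\sigma,\text{ParAlg})$, and pruning from $T$ those vertices without an edge back into $C$. The depth-$(k{-}1)$ recursion lives inside GenAttr, applied not to a candidate bad $\sigma$-subtrap of the running set but to the $\lambda$-restriction of the complement of a safe attractor, and this happens once per priority level inside SeqAttr. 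None of the intermediate sets appearing in a run is manifestly a $\sigma$-trap or $\overline\sigma$-trap in $G$, so the hypotheses of Theorem~\ref{intersectTrap} are not available at the points where you invoke it, and the ``deletion of a bad-subtrap attractor'' events you plan to intercept simply do not occur in the algorithm as written.

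The paper's proof is accordingly organized around the actual call structure. It abstracts the depth-$(k{-}1)$ call as an arbitrary subroutine ParAlg that is ``nice with traps'' and proves a general Theorem~\ref{GeneralTDA} characterizing TDA$(G,\sigma,\text{ParAlg})$. That theorem is obtained by threading three properties---monotonicity under restriction to $\overline\sigma$-traps, completeness when the full vertex set is already good, and soundness of the returned set---up the chain SafeAttr $\to$ GenAttr $\to$ SeqAttr $\to$ TDA, with a separate lemma for nearly every subroutine/property pair. Only after this is done does the induction on $k$ enter, and Theorem~\ref{intersectTrap} is used only once, to verify the ``nice with traps'' hypothesis for the next level of recursion. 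Your outer inductive scheme and the soundness-plus-maximality split are sound instincts, but the body of the argument has to track the four-level algorithmic structure; the single-fixpoint picture you sketch does not match what TDA computes.
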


Note that, by Theorem \ref{main_theorem}, this implies in particular that TDA$(G,\sigma,k) \subseteq W_\sigma$.

\subsection{B\"uchi Games}

Due to the complexity of the TDA, we will first introduce some simpler algorithms. Understanding the simpler algorithms will help significantly in understanding the TDA. The overall structure of the TDA resembles that of a classical algorithm for solving B\"uchi games, which we present here. A B\"uchi game is a parity game in which all of the priorities are either $0$ or $1$\footnote{This is actually a simpler problem than B\"uchi games. The way we defined parity games, the player that chooses where to move the token from a given vertex $v$ is just based on the parity of $p(v)$. If we had given an alternative, equivalent, definition of parity games in such a way that these two notions were separated, i.e. if the player who chooses where to move the token may depend on $v$ itself, then parity games in which all priorities are either 0 or 1 would be B\"uchi games. Simpler algorithms exist under our simplified definition of B\"uchi games. However, the algorithm for solving the original problem is instructive, even when applied to the simplified B\"uchi games, so we present it here.}. Therefore, Odd wins a B\"uchi game if and only if Odd has a strategy that reaches vertices of priority $1$ infinitely many times. The algorithm takes as input a B\"uchi game $G$ and returns the winning region for Odd.

\begin{algorithm}
\caption{B\"uchi$(G=(V,E,p))$}
\begin{algorithmic}[1]

\STATE$\sigma:=\mbox{Odd}$
\STATE$T_{\text{prev}}:=\emptyset$
\STATE$T_{\text{cur}}:=V_{\sigma}$
\WHILE{$T_{\text{cur}} \neq T_{\text{prev}}$}
\STATE$T_{\text{prev}}:=T_{\text{cur}}$
\STATE$C:=\text{Attr}(G,T_{\text{prev}},\sigma)$
\STATE$T_{\text{cur}}:=T_{\text{prev}}\setminus\{v \in V_\sigma: N(v) \cap C=\emptyset\}$
\ENDWHILE
\RETURN $C$

\end{algorithmic}
\label{buchi}
\end{algorithm}

The classical algorithm for B\"uchi games begins each iteration of its while-loop with a set of target vertices $T_{\text{prev}} \subseteq V_{\mbox{Odd}}$. It then computes the attractor of $T_{\text{prev}}$. This provides the largest set of vertices from which Odd has a strategy for reaching the set $T_{\text{prev}}$. The attractor by itself, however, does not provide any strategy for $\sigma$ after the token reaches $T_{\text{prev}}$. To remedy this, the algorithm then tests each vertex in $T_{\text{prev}}$ to check if it has the option to continue this strategy by returning the token back to the attractor of $T_{\text{prev}}$. If not, then that vertex is removed from being a target. This process repeats until the set $T_{\text{prev}}$ stabilizes. 

We will first observe that the algorithm outputs a subset of Odd's winning region. To see this, take $C$ to be the output of the algorithm and $T$ to be the final set of target vertices (so that $C=\text{Attr}(G,T,\mbox{Odd})$). Consider the following strategy for Odd: from any vertex of $T$, Odd chooses to enter $C$ (this is possible by the termination condition of the algorithm). From any vertex of $C \setminus T$, Odd follows a strategy to reach $T$. This guarantees that the vertices of $T$ are visited infinitely often; since these vertices have priority $1$, this is a winning strategy for Odd.

Conversely, Even has a winning strategy from any vertex not in $C$. To see this, let $T_0,T_1,\ldots$ be the sequence of values of $T_{\text{cur}}$ at the beginning of the while loop in the execution of B\"uchi$(G)$ (with the final value repeated). Take $C_i = \mbox{Attr}(G,T_i,\mbox{Odd})$. Note that $T_i$, and therefore $C_i$, is a decreasing sequence. We claim that any odd vertex in $C_i$ must be in $T_i$. If some Odd vertex $w$ were in $C_i$ but not in $T_i$, then there must be some edge from $w$ into $C_i$. However, $w$ must have been removed from $T_j$ for some $j < i$, which means there is no edge from $w$ into $C_j$. However, the sequence of $C_j$ is decreasing, a contradiction.

We now proceed by induction to show that Even has a winning strategy from any vertex not in $C_i$. Note that $T_0 = V_{\mbox{Odd}}$. Therefore, any vertex not in $C_0$ is not in the attractor of $V_{\mbox{Odd}}$, so from any such vertex Even has a strategy that never visits any vertex of priority $1$. Take for an inductive hypothesis that, for some $i$, any vertex not in $C_i$ is winning for Even. Then $T_{i+1}=T_i \setminus \{v \in V_{\mbox{Odd}} : N(v) \cap C_i = \emptyset\}$. Assume for contradiction that there is some vertex $v$ that is not in $C_{i+1}$ so that $v$ is winning for Odd. Then $v$ must be in $C_i \setminus C_{i+1}$, as otherwise $v$ is winning for Even by the inductive hypothesis. Since $v$ is not in $C_{i+1}$, Even has a strategy starting from $v$ that avoids $T_{i+1}$ indefinitely. Then consider Even playing the following strategy: as long as the token remains in $C_i$, Even plays any strategy that avoids entering $T_{i+1}$. If the token ever leaves $C_i$, then Even has a winning strategy and uses it. Since $v$ is winning for Odd, Odd must have some winning strategy. Consider the play induced by Even playing the aforementioned strategy and Odd playing a winning strategy. This play cannot leave $C_i$, as otherwise Even wins. However, some vertex $w \in C_i$ of priority $1$ must be visited. Since Even avoids $T_{i+1}$ indefinitely, we must have that $w$ is not in $T_{i+1}$. Since all Odd vertices in $C_i$ are in $T_i$, we therefore have that $w$ is in $T_i \setminus T_{i+1}$. However, by definition of $T_{i+1}$, this means that there are no edges from $w$ into $C_i$, so the next vertex in the play is not in $C_i$, a contradiction. This completes the proof of the correctness of the classical B\"uchi games algorithm.

\subsection{$k=1$}

The algorithm for solving trap-depth $1$ parity games closely resembles that for B\"uchi games. The main difference is that, rather than taking the attractor of the target set, we take a ``safe" version of the attractor. This takes a parameter $\lambda$; the $\lambda$-safe attractor in $G$ of a set $X$ for player $\sigma$ is the set of vertices from which $\sigma$ has a strategy that guarantees $X$ will be reached and that, in the process, no vertices (excluding those in $X$) of priority at least $\lambda$ are visited.

\begin{algorithm} \caption{SafeAttr$(G=(V,E,p),\lambda,X,\sigma)$}
\begin{algorithmic}[1]

\STATE $C_{\text{prev}}:= \emptyset$
\STATE $C_{\text{cur}} := X$
\WHILE{$C_{\text{cur}} \neq C_{\text{prev}}$}
\STATE $C_{\text{prev}}:=C_{\text{cur}}$
\STATE $C_{\text{cur}}:=C_{\text{prev}} \cup \{v \in V_\sigma : p(v)<\lambda \wedge N(v) \cap C_{\text{prev}} \neq \emptyset\} \newline 
\cup \{v \in V_{\overline\sigma} : p(v)<\lambda \wedge N(v) \subseteq C_{\text{prev}}\}$
\ENDWHILE
\RETURN $C_{\text{cur}}$
\end{algorithmic}
\label{alg2}
\end{algorithm}

At each iteration of the ``while'' loop, the set $C_{\text{cur}}$ (initially $X$) is enlarged by adding any vertices (of priority less than $\lambda$) in $V_{\sigma}$ or in $V_{\overline\sigma}$ that, respectively, have an edge going into $C_{\text{cur}}$ or have only edges going into $C_{\text{cur}}$. Note the similarities to the attractor, Algorithm~\ref{alg1}.

Indeed, one sees that $\text{SafeAttr}(G,\lambda,X,\sigma) = \text{Attr}(G,X,\sigma)$ if $\lambda \geq p(\text{max}(V \setminus X))$. And, just like the regular Attractor, one sees that the Safe Attractor stabilizes its own output; i.e., $\text{SafeAttr}(G,\lambda,\text{SafeAttr}(G,\lambda,X,\sigma),\sigma) = \text{SafeAttr}(G,\lambda,X,\sigma)$.

However, it is not obvious how to directly substitute the safe attractor into Algorithm \ref{buchi}, as there does not appear to be a canonical choice for the parameter $\lambda$. This motivates the Sequential Safe Attractor algorithm, which, in the trap-depth $k=1$ case, iteratively applies the $\lambda$-safe attractor to $\sigma$-vertices of priority at least $\lambda$. Recall that we defined the max of a set of vertices to be the vertices of largest priority in that set. Below, if $S$ is a set of vertices that all have the same priority, then $p(S)$ is that priority (rather than the singleton containing that priority).

\begin{algorithm}
\caption{SeqAttr$_1(G=(V,E,p),X,\sigma)$}
\begin{algorithmic}[1]
\STATE $W:=V_\sigma \cap X$
\STATE $C :=\emptyset$
\WHILE{$W \neq \emptyset$}
\STATE $S := \max(W)$
\STATE $C:= \text{SafeAttr}(G,p(S),C \cup S,\sigma)$
\STATE $W:=W \setminus C$
\ENDWHILE
\RETURN $C$
\end{algorithmic}
\end{algorithm}

At the beginning of the ``while'' loop above, we have a set $C$ and a list $W$ of target vertices to process. Each iteration of the loop calls the Safe Attractor Algorithm. The Sequential Attractor removes the issue of a priority bound inside the Safe Attractor. For any vertex $v \in V$, SeqAttr$_1(G,X,\sigma)$ tests if $\sigma$ has a strategy to move the token from $v$ towards some $w \in X$ in which any resulting path did not visit any vertices of priority at least $p(w)$.

The algorithm for solving trap-depth $1$ parity games, TDA$_1(G,\sigma)$, simply substitutes the Sequential Safe Attractor for the Attractor in the B\"uchi games algorithm, Algorithm \ref{buchi}.

\begin{algorithm}
\caption{TDA$_1(G=(V,E,p),\sigma)$}
\begin{algorithmic}[1]

\STATE$T_{\text{prev}}:=\emptyset$
\STATE$T_{\text{cur}}:=V_{\sigma}$
\WHILE{$T_{\text{cur}} \neq T_{\text{prev}}$}
\STATE$T_{\text{prev}}:=T_{\text{cur}}$
\STATE$C:=\text{SeqAttr}_1(G,T_{\text{prev}},\sigma)$
\STATE$T_{\text{cur}}:=T_{\text{prev}}\setminus\{v \in V_\sigma: N(v) \cap C=\emptyset\}$
\ENDWHILE
\RETURN $C$

\end{algorithmic}
\end{algorithm}

TDA$_1(G,\sigma)$ returns the largest $\overline{\sigma}$ trap $X$ in $G$ so that every $\sigma$-subtrap $Y$ has that the vertices of largest priority in $Y$ belong to player $\sigma$. We will sketch a proof of this fact; a complete proof will follow from the arguments in the next section. The proof will argue three different points, from which the characterization of TDA$_1(G,\sigma)$ follows immediately:

\begin{enumerate}
\item (Monotonicity): If $A$ is a $\overline\sigma$-trap in $G$, then we have TDA$_1(G[A],\sigma) \subseteq \text{TDA}_1(G,\sigma)$.

\item (Completeness): If $V$ satisfies that every $\sigma$-trap in $V$ has a vertex whose maximum priority is of parity $\sigma$, then TDA$_1(G=(V,E,p),\sigma) = V$.

\item (Soundness): TDA$(G,\sigma)$ is a $\overline\sigma$-trap in $G$ whose maximum priority is of parity $\sigma$ and which satisfies that every $\sigma$-subtrap has maximum priority $\sigma$.
\end{enumerate}

To argue monotonicity, one first argues that the Safe Attractor Algorithm is monotonic with respect to its parameter $\lambda$, its input set $X$, and sometimes with respect to the parity game. 

Explicitly, if $\lambda_1 \leq \lambda_2$, if $X_1 \subseteq X_2$, and if $A$ is a $\overline\sigma$-trap in $G$, then 
$$\text{SafeAttr}(G[A],\lambda_1,X_1,\sigma) \subseteq \text{SafeAttr}(G,\lambda_2,X_2,\sigma).$$ This is intuitive, but will be proven carefully in the next section.

From this, monotonicity of SeqAttr$_1$ easily follows. If $X_1 \subseteq X_2$ and if $A$ is a $\overline\sigma$-trap in $G$, then SeqAttr$_1(G[A],X_1,\sigma) \subseteq \text{SeqAttr}_1(G,X_2,\sigma)$. A generalization of this will be proven in the next section.

Finally, from this, point (1), monotonicity of TDA$_1$, easily follows. Again, a generalization is carefully proved in the next section.

Now we will argue completeness. Assume the whole vertex set $V$ satisfies that every $\sigma$-trap in $V$ has maximum priority of parity $\sigma$. Then we claim that SeqAttr$_1(G,V,\sigma) = V$. This will imply that TDA$_1$ terminates after the first call to SeqAttr and returns $V$, as desired. To see that SeqAttr$_1(G,V,\sigma)=V$, consider the first iteration of the ``WHILE" loop: since $V$ is a $\sigma$-trap in $V$, the first iteration computes the $\sigma$-safe-attractor of $\max(V)$ with $\lambda=p(\max(V))$. However, since these are the vertices of maximum priority, this is the same as computing the attractor. The remaining set is a $\sigma$-trap in $V$, and so has maximum priority of parity $\sigma$. By induction, this continues until $W$ is empty and SeqAttr$_1$ returns all of $V$.

Finally, we argue the soundness of TDA$_1$. At the end of the execution of TDA$_1$, there is some final set of target vertices $T$ satisfying every vertex of $T$ has an edge into SeqAttr$_1(G,T,\sigma)$. By our observations about SeqAttr$_1$, starting from any vertex in TDA$_1(G,\sigma)$, $\sigma$ has some strategy to reach some vertex $w$ in $T$ so that along the way only priorities less than $p(w)$ are visited. Furthermore, by the terminating condition of TDA$_1$, every vertex in $T$ has an edge back into TDA$_1(G,\sigma)$. This gives that TDA$_1(G,\sigma)$ is indeed a $\overline\sigma$-trap. Furthermore, starting from the largest vertex in any $\sigma$-trap, $\sigma$ may follow the strategy that allows $\sigma$ to reach some vertex $w$ in $T$ without seeing larger vertices along the way; $\sigma$ cannot force the play to leave the trap, and therefore the largest priority in the trap must be $w$, a $\sigma$-vertex, as desired.

\subsection{General Trap-Depth Algorithm}

The main difference between the general Trap-Depth Algorithm (TDA) and TDA$_1$ is that we change the call to Safe Attractor in TDA$_1$ to a stronger algorithm, the Generalized Safe Attractor. Indeed, we prove a more general result than Theorem \ref{theorem_TDA}, allowing us to strengthen any algorithm that satisfies certain conditions.

\begin{definition}
Let ParAlg$(G,\sigma)$ be an algorithm that takes as input a parity game $G$ and a player $\sigma$ and returns a subset of the vertices of $G$. We say ParAlg is \emph{nice with traps} if:

\begin{itemize}
\item For any parity game $G$ and any $\overline{\sigma}$ trap $A$ in $G$, ParAlg$(G[A],\sigma)\subseteq \text{ ParAlg}(G,\sigma)$.
\item For any parity game $G$, taking $S = \text{ParAlg}(G,\sigma)$, for any $\overline{\sigma}$-trap $A$ containing $S$, ParAlg$(G[A],\sigma)=S$, and for any $\sigma$-trap $A$ intersecting $S$, ParAlg$(G[A],\sigma)$ is nonempty.
\item ParAlg$(G,\sigma)$ always returns a $\overline{\sigma}$-trap whose largest priority belongs to player $\sigma$.
\end{itemize}
\end{definition}

If ParAlg is nice with traps, then we can strengthen it via the TDA.

\begin{theorem}\label{GeneralTDA}
Let ParAlg$(G,\sigma)$ be an algorithm that takes as input a parity game $G$ and a player $\sigma$ and returns a subset of the vertices of $G$. If ParAlg is nice with traps, then TDA$(G,\sigma,\text{ParAlg})$ returns the largest $\overline{\sigma}$-trap $X$ whose maximum priority is of parity $\sigma$ so that, for every $\sigma$-subtrap $Y$, either the maximum priority of $Y$ belongs to $\sigma$ or we have ParAlg$(G[Y],\sigma)$ is nonempty.
\end{theorem}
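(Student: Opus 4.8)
The plan is to prove Theorem \ref{GeneralTDA} by recursion on $k$, mirroring the recursive definition of $\Delta_\sigma(G,k)$ and the structure of TDA$_1$. The algorithm TDA$(G,\sigma,\text{ParAlg})$ should be understood as follows: it runs a B\"uchi-style fixpoint loop where the inner ``attractor'' step is replaced by a recursive Generalized Safe Attractor that, at each priority level $\lambda$, either absorbs vertices through a safe attractor (as in SeqAttr$_1$) or, for a $\sigma$-subtrap whose maximum priority is wrong, recurses using ParAlg on that subtrap to certify $\sigma$ can still win. The key structural claim to isolate and prove first is that the Generalized Safe Attractor itself is \emph{nice with traps} whenever ParAlg is; this is the inductive engine. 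Concretely, I would verify the three bullet conditions (monotonicity under restriction to $\overline\sigma$-traps, the stabilization/nonemptiness condition under $\sigma$- and $\overline\sigma$-trap restrictions, and the output being a $\overline\sigma$-trap with max priority of parity $\sigma$) for the strengthened algorithm, assuming they hold for ParAlg.

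First I would establish \textbf{monotonicity}: if $A$ is a $\overline\sigma$-trap in $G$ then $\text{TDA}(G[A],\sigma,\text{ParAlg}) \subseteq \text{TDA}(G,\sigma,\text{ParAlg})$. This reduces, exactly as sketched in the $k=1$ case, to monotonicity of SafeAttr in its parameters (the inequality $\text{SafeAttr}(G[A],\lambda_1,X_1,\sigma) \subseteq \text{SafeAttr}(G,\lambda_2,X_2,\sigma)$ for $\lambda_1 \le \lambda_2$, $X_1 \subseteq X_2$), combined with the inductive hypothesis that ParAlg is monotone, propagated through the sequential loop and the outer B\"uchi loop. Second, I would prove \textbf{completeness}: if $V$ already has the property that every $\sigma$-trap $Y$ either has $\max(Y) \subseteq V_\sigma$ or $\text{ParAlg}(G[Y],\sigma) \neq \emptyset$, then the algorithm returns all of $V$. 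As in TDA$_1$, this amounts to showing the Generalized Safe Attractor of $V$ equals $V$: processing vertices in decreasing priority order, each maximal block is either attracted (when its max is of parity $\sigma$) or handled by the recursive ParAlg call on the corresponding $\sigma$-subtrap, which by hypothesis is nonempty and, by niceness, captures enough of it; the residual set remains a $\sigma$-trap with the same property, so induction on the number of distinct priorities closes the argument.

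Third comes \textbf{soundness}: the returned set $S$ is a $\overline\sigma$-trap with $\max(S) \subseteq V_\sigma$, and every $\sigma$-subtrap $Y$ of $S$ satisfies $\max(Y) \subseteq V_\sigma$ or $\text{ParAlg}(G[S][Y],\sigma) \neq \emptyset$. The $\overline\sigma$-trap property and the max-priority property follow from the terminating condition of the outer loop together with the output guarantee of the Generalized Safe Attractor (the third bullet of niceness, inherited recursively). For the subtrap condition I would argue contrapositively: if some $\sigma$-subtrap $Y$ had $\max(Y) \not\subseteq V_\sigma$ and $\text{ParAlg}(G[Y],\sigma) = \emptyset$, then the largest vertex of $Y$ is a $\overline\sigma$-vertex that the safe-attractor construction could never have absorbed and that the recursive branch could never have certified, so it would have been pruned from the target set $T$, contradicting $Y \subseteq S$. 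Finally, combining the three properties exactly as in the $k=1$ discussion yields that $S$ is the \emph{largest} such $\overline\sigma$-trap: any other $\overline\sigma$-trap $X'$ with the stated property satisfies completeness on $G[X']$, so $X' = \text{TDA}(G[X'],\sigma,\text{ParAlg}) \subseteq S$ by monotonicity.

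The main obstacle I anticipate is the \emph{soundness direction for the recursive ParAlg branch}, specifically reconciling the two ways a $\sigma$-subtrap can be ``good'' (wrong-parity max absorbed via a nonempty ParAlg certificate versus correct-parity max absorbed via a plain safe attractor) while keeping the invariants stable as the outer loop prunes targets. The delicate point is that restricting to the final target set $T$ and its safe attractor must not destroy the ParAlg certificates produced on deeper subtraps; this is precisely where the second niceness bullet (ParAlg behaves well under both $\overline\sigma$-trap enlargement and $\sigma$-trap intersection) and Theorem \ref{intersectTrap} do the heavy lifting, guaranteeing that a winning first move intersected with a $\sigma$-trap still yields a winning first move. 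Ensuring these certificates compose correctly across the recursion — rather than each individual fixpoint computation — is the crux of the argument.
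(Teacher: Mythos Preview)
Your three-part decomposition into monotonicity, completeness, and soundness, and the way you propose to combine them to get ``largest,'' is exactly the paper's approach, and your sketches of each piece track the paper's chain of lemmas (SafeAttr $\to$ GenAttr $\to$ SeqAttr $\to$ TDA for monotonicity; showing SeqAttr returns all of $V$ for completeness; the contrapositive showing a bad $\sigma$-trap's max vertex never enters SeqAttr for soundness). Two minor course-corrections: first, the theorem as stated has no parameter $k$ --- it is a single-level statement about TDA given an arbitrary ParAlg that is nice with traps, so ``recursion on $k$'' is the wrong frame (the recursion on $k$ is how Theorem~\ref{theorem_TDA} is later derived \emph{from} this one); second, you do not need to show GenAttr is itself ``nice with traps,'' nor invoke Theorem~\ref{intersectTrap} inside soundness --- the paper's soundness argument is more direct, showing that if $X$ is a $\sigma$-trap with $\overline\sigma$-parity max and $\text{ParAlg}(G[X],\sigma)=\emptyset$, then $\max(X)$ is disjoint from each successive GenAttr call in SeqAttr, hence from the TDA output.
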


Recursively applying Theorem \ref{GeneralTDA} will give Theorem \ref{theorem_TDA}.

We will introduce terminology to talk about the conditions on the set $X$ from Theorem \ref{GeneralTDA}.

\begin{definition}
Let ParAlg$(G,\sigma)$ be an algorithm that takes as input a parity game $G$ and a player $\sigma$ and returns a subset of the vertices of $G$. Given a parity game $G$ and a set of vertices $X$, we say $X$ is $\emph{good}$ for ParAlg with respect to player $\sigma$ if $X$ is a $\overline{\sigma}$-trap whose maximum priority is of parity $\sigma$ so that for every $\sigma$-subtrap $Y$, either the maximum priority of $Y$ belongs to $\sigma$ or we have ParAlg$(G[Y],\sigma)$ is nonempty.
\end{definition}

When the context is clear, we will simply say that $X$ is good.

We said that TDA is obtained from TDA$_1$ by replacing the call to safe attractor, so let us first present the Generalized Safe Attractor Algorithm. The idea behind the safe attractor is to guarantee reaching a target set of vertices in a $\lambda$-safe way, so that along the way we don't see vertices of priority at least $\lambda$. The idea behind the generalized safe attractor is to guarantee that, if $\sigma$ fails to reach a target set of vertices, then $\sigma$ wins the play; this all happens in a manner that is careful with vertices of priority at least $\lambda$, which we informally refer to as ``$\lambda$-safe".

In order to ensure that everything that is done is $\lambda$-safe, we will at some point remove all vertices of priority at least $\lambda$ from the game by taking the restriction to vertices of lower priority: define the $\lambda$-restriction of a parity game Restrict$(G=(V,E,p),\lambda,\sigma):=V \setminus \text{Attr} (G,\left\{ v \in V : p(v) \geq \lambda \right\},\overline\sigma )$. In words, the only vertices that remain in Restrict$(G,\lambda,\sigma)$ are those from which $\sigma$ can ensure that all the priorities in any resulting play are less than $\lambda$.

We are now ready to introduce the Generalized Safe Attractor. It takes as input a parity game $G$, a number $\lambda$, a set of target vertices $X$, a player $\sigma$, and an algorithm ParAlg$(G,\sigma)$. It is most useful to think of the context where ParAlg is nice with traps.

\begin{algorithm}

\caption{GenAttr$(G=(V,E,p),\lambda,X,\sigma,\text{ParAlg})$}

\begin{algorithmic}[1]

\STATE $C_{\text{prev}}:= \emptyset$
\STATE $C_{\text{cur}} := X$
\WHILE{$C_{\text{cur}} \neq C_{\text{prev}}$}
\STATE $C_{\text{prev}}:=C_{\text{cur}}$
\STATE $S:= \text{SafeAttr}(G,\lambda,C_{\text{prev}},\sigma)$
\STATE $V' := \text{Restrict}(G[V \setminus S],\lambda,\sigma)$
\STATE $C_{\text{cur}}:=S \cup \text{ParAlg}(G[V'],\sigma)$
\ENDWHILE
\RETURN $C_{\text{cur}}$

\end{algorithmic}

\end{algorithm}

At the general step of the ``while'' loop, we begin with a set $C_{\text{prev}}$ of vertices which we want to reach in a $\lambda$-safe way. The loop then calls the Safe Attractor Algorithm. SafeAttr$(G,\lambda,C_{\text{prev}},\sigma)$ returns the largest collection of vertices $S$ from which $\sigma$ has a strategy to force the token into $C_{\text{prev}}$ such that the token only hits vertices of priority smaller than $\lambda$ along the way. Once the set $S$ has been found, we check if, given that $\overline{\sigma}$ avoids $X$, player $\sigma$ has any \emph{winning} strategy given by ParAlg (which is $\lambda$-safe) on the remaining set $V \setminus S$; we add this to $S$ to get $C_{\text{cur}}$. Each iteration either adds vertices to $C_{\text{cur}}$ or terminates the loop. Since $\left|C_{\text{cur}}\right|$ cannot increase indefinitely, GenAttr eventually halts.

For a vertex $v$ and a subset $X$, $v$ will be in GenAttr if and only if there is a $\sigma$-strategy to move the token from $v$ towards $X$ such that, depending on $\overline\sigma$'s moves, \emph{either} the token eventually reaches $X$ \emph{or} the token reaches a vertex from which ParAlg provides a winning strategy for $\sigma$; in both cases, all the vertices visited by the token have priority less than $\lambda$, except perhaps the ones in $X$.

As the name suggests, the Generalized Safe Attractor is a generalization of the Safe Attractor. Consider the case where ParAlg$(G,\sigma)$ simply returns the empty set for every input. Under this condition, we claim that GenAttr$(G,\lambda,X,\sigma,\text{ParAlg}) = \text{SafeAttr}(G,\lambda,X,\sigma)$. Later we will show that SafeAttr stabilizes its own output; this immediately gives that, if ParAlg is always empty, a call to GenAttr will have at the end of the first ``WHILE" loop $C_{\text{cur}}$ equal to SafeAttr, and will subsequently terminate with this output.

Both the general case of the sequential safe attractor algorithm and the TDA are analogous to the ones before, except the sequential safe attractor calls the generalized safe attractor. As before, if $S$ is a set of vertices that all have the same priority, then we write $p(S)$ to denote that priority.

\begin{algorithm}
\caption{SeqAttr$(G=(V,E,p),X,\sigma,\text{ParAlg})$}
\begin{algorithmic}[1]
\STATE $W:=V_\sigma \cap X$
\STATE $C :=\emptyset$
\WHILE{$W \neq \emptyset$}
\STATE $S := \max(W)$
\STATE $C:= \text{GenAttr}(G,p(S),C \cup S,\sigma,\text{ParAlg})$
\STATE $W:=W \setminus C$
\ENDWHILE
\RETURN $C$
\end{algorithmic}
\label{alg3}
\end{algorithm}

For any $v \in \text{SeqAttr}(G,X,\sigma,\text{ParAlg})$, $\sigma$ has a strategy to ensure that, if the token ever reaches some $w \in X$, it hits only vertices of priority smaller than $p(w)$ along the way and, if it never reaches $X$, then $\sigma$ wins.

TDA simply calls the sequential safe attractor.

\begin{algorithm}
\caption{TDA$(G=(V,E,p),\sigma,\text{ParAlg})$}
\begin{algorithmic}[1]
\STATE$T_{\text{prev}}:=\emptyset$
\STATE$T_{\text{cur}}:=V_{\sigma}$
\WHILE{$T_{\text{cur}} \neq T_{\text{prev}}$}
\STATE$T_{\text{prev}}:=T_{\text{cur}}$
\STATE$C:=\text{SeqAttr}(G,T_{\text{prev}},\sigma,\text{ParAlg})$
\STATE$T_{\text{cur}}:=T_{\text{prev}}\setminus\{v \in V_\sigma: N(v) \cap C=\emptyset\}$
\ENDWHILE
\RETURN $C$

\end{algorithmic}
\label{alg4}
\end{algorithm}

As before, TDA calls SeqAttr on progressively smaller sets of target vertices $T_{\text{cur}}$.

One easily sees that the set $C$ output by TDA$(G,\sigma,\text{ParAlg})$ is the Sequential Attractor of some final collection $T \subseteq V_{\sigma}$ of target vertices. If $\sigma$ follows the strategy given by the sequential attractor on $C$, the resulting play will either be winning for $\sigma$, as guaranteed by the conditions on ParAlg, or reach $T$ infinitely often, and the largest priority will belong to $\sigma$ and so the play will be winning for $\sigma$.

\subsection{Correctness}

We now outline the proof of Theorem \ref{GeneralTDA}. We will prove three propositions from which Theorem \ref{GeneralTDA} will follow immediately. Note that, in the second statement below (completeness), $V$ is the whole vertex set of the parity game. 

\begin{theorem}
Let ParAlg$(G,\sigma)$ be an algorithm that takes as input a parity game $G$ and a player $\sigma$ and returns a subset of the vertices of $G$. Assume ParAlg is nice with traps.

\begin{enumerate}

\item (Monotonicity): If $A$ is a $\overline\sigma$-trap in $G$, then we have TDA$(G[A],\sigma,\text{ParAlg}) \subseteq \text{TDA}(G,\sigma,\text{ParAlg})$.

\item (Completeness): If $V$ is good, then TDA$(G,\sigma,\text{ParAlg}) = V$.

\item (Soundness): TDA$(G,\sigma,\text{ParAlg})$ is good.

\end{enumerate}
\end{theorem}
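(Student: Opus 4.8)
The plan is to prove the three statements in the layered manner already used for TDA$_1$, deriving each property of TDA from the corresponding property of SeqAttr, which in turn rests on GenAttr and ultimately on SafeAttr; the three clauses in the definition of \emph{nice with traps} will be invoked at exactly the three places where GenAttr appeals to ParAlg.

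\emph{Monotonicity.} First I would prove the SafeAttr monotonicity announced in the text: for $\lambda_1 \le \lambda_2$, $X_1 \subseteq X_2$, and $A$ a $\overline\sigma$-trap in $G$, one has SafeAttr$(G[A],\lambda_1,X_1,\sigma) \subseteq$ SafeAttr$(G,\lambda_2,X_2,\sigma)$, by induction on the while-loop of the left-hand run, checking that each admitted vertex also passes the (weaker) admission test on the right. I would also record that Restrict$(G,\lambda,\sigma)$, being the complement of a $\overline\sigma$-attractor, is itself a $\overline\sigma$-trap, and is monotone in the same arguments. Lifting to GenAttr, I would fix the right-hand output $C^\ast$ and show by induction on the left-hand iterations that every $C_{\text{cur}}$ of the smaller run lies in $C^\ast$: since $C^\ast$ is SafeAttr-closed, SafeAttr monotonicity carries the current set through the safe-attractor step, and the ParAlg contribution is carried through by the first nice-with-traps clause (monotonicity of ParAlg). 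Monotonicity of SeqAttr follows because the two runs sweep the same decreasing priority levels with nested target sets, and monotonicity of TDA (statement~1) then follows from that of SeqAttr exactly as in the B\"uchi/TDA$_1$ argument, the target sets $T_{\text{cur}}$ being computed by the same shrinking fixpoint.

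\emph{Completeness.} Assuming $V$ is good, it suffices to show SeqAttr$(G,V_\sigma,\sigma,\text{ParAlg}) = V$, for then the TDA outer loop discards no vertex and returns $V$. I would induct down the priority levels processed by SeqAttr. Goodness gives $\max(V) \subseteq V_\sigma$, so the top level takes $\lambda = p(\max V)$, where SafeAttr reduces to the ordinary attractor and Restrict is trivial; thus GenAttr first absorbs Attr$(G,\max(V),\sigma)$. Its complement is a $\sigma$-trap, hence a $\sigma$-subtrap of the good set $V$, so by goodness it is absorbed at the next stage: either its top priority is of parity $\sigma$ (and is pulled in when SeqAttr reaches that level) or ParAlg on it is nonempty (and GenAttr pulls it in through the ParAlg branch). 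The second nice-with-traps clause is what prevents the ParAlg branch from capturing only a fragment of a good region: on a $\sigma$-trap meeting the previous output ParAlg stays nonempty, and on a $\overline\sigma$-trap containing it ParAlg is stable, which together drive the inner GenAttr fixpoint to exhaust that region. Iterating yields $C = V$.

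\emph{Soundness, and the main obstacle.} To see the output $C$ is good I would check three things. $C$ is a $\overline\sigma$-trap: SafeAttr admits a $\overline\sigma$-vertex only when all its edges already point inward, and by the third nice-with-traps clause each ParAlg contribution is itself a $\overline\sigma$-trap, so the unions in GenAttr and SeqAttr preserve the property; with the TDA terminating condition (every vertex of the final target set $T$ has an edge back into $C$), $\sigma$ can keep the token in $C$ forever. The maximum priority of $C$ is of parity $\sigma$: at each SeqAttr level only $\sigma$-targets of priority $\lambda$ are admitted and the third clause keeps ParAlg contributions below $\lambda$, so no $\overline\sigma$-vertex can be the global maximum. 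Finally, every $\sigma$-subtrap $Y$ of $C$ has $\sigma$-maximum or ParAlg$(G[Y],\sigma)\neq\emptyset$: using the strategic reading of SeqAttr --- from any vertex $\sigma$ forces the token either to a target seeing only smaller priorities, or into a region where ParAlg wins --- a $\sigma$-subtrap with $\overline\sigma$-maximum that $\sigma$ cannot leave must meet such a ParAlg region, and the second clause then makes ParAlg nonempty on $Y$. I expect the genuine obstacle, common to all three parts, to be the GenAttr step, where the ParAlg branch is grafted onto the safe attractor through the Restrict operation: one must match the restricted subgame $G[V']$ and the traps it induces against the nice-with-traps hypotheses, and in the monotonicity argument in particular verify that the $V'$ of the smaller run embeds as a $\overline\sigma$-trap in the $V'$ of the larger (modulo vertices already absorbed into the safe attractor) so that the first clause transfers ParAlg's output faithfully from the subgame back to $G$.
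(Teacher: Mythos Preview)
Your layered plan (SafeAttr $\to$ GenAttr $\to$ SeqAttr $\to$ TDA for each of the three properties) is exactly the paper's structure, and you have correctly located the real work in the GenAttr step.

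There is, however, a gap in your proposed resolution of the monotonicity obstacle. You say the first nice-with-traps clause will ``transfer ParAlg's output faithfully'' once the smaller $V^*$ (modulo $S'$) is seen to embed as a $\overline\sigma$-trap in the larger $V'$. That embedding does hold, but the first clause then only yields $\text{ParAlg}(G[V^*\setminus S'],\sigma)\subseteq \text{ParAlg}(G[V'],\sigma)=\emptyset$; it says nothing directly about $T^*=\text{ParAlg}(G[V^*],\sigma)$, which is what you actually need to place inside $S'$. To finish you must also note that $V^*\setminus S'$ is a $\sigma$-trap in $G[V^*]$ (no $\sigma$-vertex of $V^*$ has an edge into $S'$ without itself lying in $S'$, by SafeAttr-closure of $S'$ and $p(V^*)<\lambda_1\le\lambda_2$) and then invoke the \emph{second} clause: if $T^*$ met this $\sigma$-trap, ParAlg on it would be nonempty, contradicting what the first clause just gave. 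The paper runs essentially this contradiction, though it works directly with $T^*\setminus S'$ rather than $V^*\setminus S'$; in either variant both the first and second clauses are needed, not just the first.

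A minor note on completeness: you bring in the second clause to ``drive the inner GenAttr fixpoint to exhaust that region,'' but the paper avoids this entirely by inspecting only the terminal state of GenAttr. There, because $\lambda$ dominates all remaining priorities, SafeAttr coincides with Attr and Restrict is vacuous, so the termination condition reads $\text{ParAlg}(G[V\setminus S],\sigma)=\emptyset$ on a genuine $\sigma$-trap $V\setminus S$, and goodness of $V$ alone forces $\max(V\setminus S)\subseteq V_\sigma$; no nice-with-traps hypothesis is required for this part.
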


We will now build up the machinery to prove the above.

\subsubsection{General Lemmas}

We begin with some general lemmas that will be useful; they are all reasonably simple to prove and we omit their proofs.

The first of these notes that SafeAttr is equal to Attr for $\lambda$ large enough.

\begin{lemma}
$\text{SafeAttr}(G,\lambda,X,\sigma)=\text{Attr}(G,X,\sigma)$ if $\lambda > p(\max(V \setminus X))$.
\end{lemma}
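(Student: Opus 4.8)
The plan is to show both inclusions by unwinding the definitions of $\text{SafeAttr}$ and $\text{Attr}$ (Algorithms~\ref{alg2} and~\ref{alg1}) and comparing them iteration by iteration. The two algorithms are identical except that $\text{SafeAttr}$ adds the side-condition $p(v) < \lambda$ to every vertex it absorbs. So the whole claim reduces to the observation that, under the hypothesis $\lambda > p(\max(V \setminus X))$, this side-condition is vacuous for precisely the vertices the algorithms ever consider adding.

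First I would make that observation precise. A vertex $v$ is a candidate for being added (in either algorithm) only if $v \in V \setminus X$, since both algorithms start from $C_{\text{cur}} = X$ and only ever enlarge the set. For any such $v$ we have $p(v) \le p(\max(V \setminus X)) < \lambda$, so the condition $p(v) < \lambda$ in the update step of $\text{SafeAttr}$ is automatically satisfied. Hence the guard $p(v) < \lambda$ never excludes any vertex that $\text{Attr}$ would have included.

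Next I would run the two loops in lockstep and argue by induction on the loop iteration that the value of $C_{\text{cur}}$ is the same in both algorithms. In the base case both are initialized to $X$. For the inductive step, suppose $C_{\text{prev}}$ agrees in both runs; then the $\sigma$-vertices with $N(v) \cap C_{\text{prev}} \neq \emptyset$ and the $\overline\sigma$-vertices with $N(v) \subseteq C_{\text{prev}}$ are the same candidate sets, and by the observation above the extra conjunct $p(v) < \lambda$ is satisfied by every such candidate (each lies in $V \setminus X$). Thus the two update steps produce identical sets, the loops terminate at the same point, and the returned sets coincide.

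The step I expect to require the most care is confirming that the guard is vacuous for \emph{every} candidate vertex across \emph{all} iterations, not just the first: I must be sure that the inductively maintained set $C_{\text{cur}}$ never extends beyond $V$ and that every newly considered vertex genuinely lies in $V \setminus X$, so that the bound $p(v) < \lambda$ applies. This is immediate because both algorithms are monotone and remain within $V$, but it is the one place where the strictness of the inequality $\lambda > p(\max(V \setminus X))$ is actually used. Everything else is a routine lockstep comparison, which is presumably why the paper relegates this to an omitted-proof lemma.
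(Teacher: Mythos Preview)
Your proposal is correct; the paper omits the proof entirely (listing this among lemmas that ``are all reasonably simple to prove''), and your lockstep induction comparing the two fixpoint iterations is precisely the natural argument one would supply. The only point worth a word of caution is that vertices of $X$ may literally satisfy the set-builder predicates in the update step, but since $X \subseteq C_{\text{prev}}$ throughout they contribute nothing new to the union---which is exactly the content of your observation that only vertices of $V \setminus X$ matter.
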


The next lemma notes that the algorithms are stable when run on their own outputs. The second and third statements follow from the ones prior.

\begin{lemma} Let $A$ be a $\overline{\sigma}$ trap.
\begin{enumerate}
	\item {If $S := \text{SafeAttr}(G,\lambda,X,\sigma) \subseteq A$, then
$$S=\text{SafeAttr}(G,\lambda,S,\sigma)=\text{SafeAttr}(G[A],\lambda,S,\sigma)$$}
	\item{If $S := \text{GenAttr}(G,\lambda,X,\sigma,\text{ParAlg}) \subseteq A$, then
$$S=\text{GenAttr}(G,\lambda,S,\sigma,\text{ParAlg})=\text{GenAttr}(G[A],\lambda,S,\sigma,\text{ParAlg})$$}
	\item{If $S := \text{SeqAttr}(G,\lambda,X,\sigma,\text{ParAlg}) \subseteq A$, then
$$S=\text{SeqAttr}(G,\lambda,S,\sigma,\text{ParAlg})=\text{SeqAttr}(G[A],\lambda,S,\sigma,\text{ParAlg})$$}
\end{enumerate}
\end{lemma}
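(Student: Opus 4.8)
The plan is to treat all three parts uniformly: each of $\text{SafeAttr}$, $\text{GenAttr}$, and $\text{SeqAttr}$ returns the value stabilized by a \emph{while}-loop, so its output $S$ is a fixed point of the one-step map carried out in the loop body. Running the algorithm again with $S$ as its target therefore halts after a single iteration, which immediately yields the first equality in each part (the one comparing the algorithm on $G$ with itself). The real content is in the second equality, where we must show that performing the same computation inside the restricted game $G[A]$, with $S \subseteq A$ and $A$ a $\overline\sigma$-trap, again reproduces $S$. I would prove part (1) directly, then bootstrap part (2) from (1) and part (3) from (2), matching the subroutine calls, since $\text{GenAttr}$ invokes $\text{SafeAttr}$ and $\text{SeqAttr}$ invokes $\text{GenAttr}$.

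For part (1), write $F$ for the one-step expansion used by $\text{SafeAttr}(G,\lambda,\cdot,\sigma)$ and $F_A$ for the analogous map on $G[A]$. Since $S=\text{SafeAttr}(G,\lambda,X,\sigma)$ is a fixed point, $F(S)=S$, giving $\text{SafeAttr}(G,\lambda,S,\sigma)=S$. To see $\text{SafeAttr}(G[A],\lambda,S,\sigma)=S$, I would compare $F$ and $F_A$ at $S$: for a $\sigma$-vertex the inclusion test depends only on whether $N(v)\cap S\neq\emptyset$, which is unaffected by the restriction because $S\subseteq A$; for a $\overline\sigma$-vertex $v\in A$ the $\overline\sigma$-trap property gives $N(v)\subseteq A$, so its neighborhood in $G[A]$ is all of $N(v)$ and the test $N(v)\subseteq S$ is likewise unchanged. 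Hence $F_A(S)=F(S)=S$. Equivalently, the monotonicity of $\text{SafeAttr}$ stated earlier sandwiches $\text{SafeAttr}(G[A],\lambda,S,\sigma)$ between $S$, which it must contain, and $\text{SafeAttr}(G,\lambda,S,\sigma)=S$.

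For part (2), the fixed-point equation for the $\text{GenAttr}$ loop reads $S=\text{SafeAttr}(G,\lambda,S,\sigma)\cup\text{ParAlg}(G[V'],\sigma)$, where $V'$ is the $\lambda$-restriction of $G[V\setminus \text{SafeAttr}(G,\lambda,S,\sigma)]$. Because $\text{SafeAttr}$ always contains its target, this union forces $\text{SafeAttr}(G,\lambda,S,\sigma)=S$, so part (1) applies to the inner safe-attractor step and the first equality is again immediate. For the restricted game I would run one loop iteration of $\text{GenAttr}(G[A],\lambda,S,\sigma,\text{ParAlg})$: its safe-attractor step returns $S$ by part (1), and it then forms $V'_A$, the $\lambda$-restriction of $G[A\setminus S]$, and adjoins $\text{ParAlg}(G[V'_A],\sigma)$. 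The key step is to show this last set lands inside $S$. Using that $A$ is a $\overline\sigma$-trap together with the attractor-restriction lemma, the high-priority $\overline\sigma$-attractor taken inside $G[A\setminus S]$ contains the restriction to $A$ of the one taken inside $G[V\setminus S]$, so $V'_A$ is a $\overline\sigma$-subtrap of the set $V'$ from the $G$-computation; niceness of $\text{ParAlg}$ (monotonicity along $\overline\sigma$-traps) then gives $\text{ParAlg}(G[V'_A],\sigma)\subseteq \text{ParAlg}(G[V'],\sigma)\subseteq S$. Combined with the safe-attractor step, this produces exactly $S$, proving the $G[A]$ equality.

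Part (3) is then a routine induction on the loop of $\text{SeqAttr}$: it is a finite sequence of $\text{GenAttr}$ calls on the $\max$-priority layers of the target, and re-running it on $S$ reproduces the same sequence of calls, each stabilizing by part (2); restricting to $A$ leaves every one of those $\text{GenAttr}$ calls fixed, again by part (2). I expect the genuine obstacle to be precisely the invariance claim inside part (2): since $V\setminus S$ need not itself be a $\sigma$-trap, one cannot simply quote a trap-intersection lemma to compare the two $\lambda$-restrictions, and the careful bookkeeping of the high-priority $\overline\sigma$-attractor, together with the $\overline\sigma$-trap structure of $A$ and the monotonicity guaranteed by niceness of $\text{ParAlg}$, is where the real work lies.
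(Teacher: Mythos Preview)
Your proposal is correct and matches the paper's approach exactly: the paper omits the proof, remarking only that the lemmas are ``reasonably simple'' and that ``the second and third statements follow from the ones prior,'' which is precisely the bootstrapping structure you adopt. Your fixed-point argument for part~(1), the reduction of part~(2) to part~(1) via the observation that termination of $\text{GenAttr}$ forces $\text{SafeAttr}(G,\lambda,S,\sigma)=S$ and $\text{ParAlg}(G[V'],\sigma)=\emptyset$, and the comparison of the two $\lambda$-restrictions using the $\overline\sigma$-trap structure of $A$ together with niceness of $\text{ParAlg}$, supply exactly the details the paper leaves implicit.

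One small remark: in part~(2) you phrase the goal as showing $\text{ParAlg}(G[V'_A],\sigma)\subseteq S$, but since $V'_A\subseteq A\setminus S$ this set is disjoint from $S$, so the real conclusion is that it is empty; your argument via $V'_A$ being a $\overline\sigma$-trap inside $V'$ delivers this. Also, your sketch of part~(3) as ``routine induction'' is acceptable given the paper's own brevity, though note that the target set $V_\sigma\cap S$ may differ from $V_\sigma\cap X$, so the cleanest route is through the reformulated $\text{SeqAttr}_P$ with a common priority set, as the paper does when proving monotonicity of $\text{SeqAttr}$ later.
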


A similar statement holds for the TDA. Observe first that TDA$(G,\sigma,k)$ is a $\overline{\sigma}$ trap in $G$ with maximum vertex of parity $\sigma$.

\begin{lemma}
Take $S:= \text{TDA}(G,\sigma,\text{ParAlg})$. Then
$$S=\text{TDA}(G[S],\sigma,\text{ParAlg}).$$
\end{lemma}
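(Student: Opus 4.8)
The plan is to prove the identity by unwinding a single execution of $\text{TDA}(G[S],\sigma,\text{ParAlg})$ and showing that its while-loop stabilizes immediately, returning $S$. First I would record two facts already available: that the output $S = \text{TDA}(G,\sigma,\text{ParAlg})$ equals $\text{SeqAttr}(G,T,\sigma,\text{ParAlg})$ for the final target set $T = T_{\text{prev}}$ at which TDA's loop terminates (the returned $C$ is exactly the last SeqAttr call), and that $S$ is a $\overline\sigma$-trap in $G$ (the observation stated just before the lemma). Because $S$ is a $\overline\sigma$-trap, $G[S]$ is a well-defined subgame, and the defining property of a $\overline\sigma$-trap guarantees that every $\sigma$-vertex $v \in S$ has at least one neighbour lying in $S$; this last point will be the engine of the argument.

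Next I would invoke the preceding stability lemma for SeqAttr with $A := S$. Since $S = \text{SeqAttr}(G,T,\sigma,\text{ParAlg}) \subseteq S$, that lemma yields $\text{SeqAttr}(G[S],S,\sigma,\text{ParAlg}) = S$. Observing that SeqAttr consults its target argument only through its $\sigma$-vertices (the first line sets $W := V_\sigma \cap X$), this is the same as $\text{SeqAttr}(G[S], V_\sigma \cap S, \sigma, \text{ParAlg}) = S$. Hence the very first call to SeqAttr inside $\text{TDA}(G[S],\sigma,\text{ParAlg})$ --- which uses the initial target set $T_{\text{cur}} = V_\sigma \cap S$, the $\sigma$-vertex set of $G[S]$ --- already produces $C = S$.

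It then remains to check that the loop does not shrink the target set, so that it terminates on this first pass. The update sets $T_{\text{cur}} := T_{\text{prev}} \setminus \{v \in V_\sigma : N(v) \cap C = \emptyset\}$ with $C = S$. For any $\sigma$-vertex $v \in S$, the $\overline\sigma$-trap property gives a neighbour of $v$ inside $S$, i.e. $N(v) \cap S \neq \emptyset$; since in $G[S]$ the neighbours of $v$ are precisely $N(v) \cap S$ and $C = S$, the condition $N(v) \cap C = \emptyset$ fails for every such $v$. Thus no vertex is removed, $T_{\text{cur}} = T_{\text{prev}}$, the loop guard fails, and TDA returns $C = S$, giving $\text{TDA}(G[S],\sigma,\text{ParAlg}) = S$. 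The degenerate case $S = \emptyset$ is immediate, as the loop body never executes.

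I expect the only delicate point to be the bookkeeping about which target set each SeqAttr call receives: one must note that passing $S$ versus $V_\sigma \cap S$ as the target produces identical output (since SeqAttr restricts its processing to $\sigma$-vertices), and that the $\sigma$-vertex set of $G[S]$ is exactly $V_\sigma \cap S$. Everything else reduces to the already-established stability of SeqAttr on its own output and the defining property of a $\overline\sigma$-trap, so no genuinely new combinatorics is involved.
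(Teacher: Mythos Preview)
Your argument is correct. The paper omits the proof of this lemma entirely (it is one of the ``general lemmas'' whose proofs are said to be routine), so there is nothing to compare against; your approach---invoking the preceding stability lemma for SeqAttr with $A=S$, noting that SeqAttr depends only on the $\sigma$-vertices of its target, and then using the $\overline\sigma$-trap property of $S$ to show the TDA loop stabilizes after one pass---is exactly the natural route and uses precisely the ingredients the paper has set up.
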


As before, it is also true that $\text{TDA}(G,\sigma,\text{ParAlg}) = \text{TDA}(G[A],\sigma,\text{ParAlg})$ where $A$ is any $\overline{\sigma}$ trap containing $\text{TDA}(G,\sigma,\text{ParAlg})$; this follows easily from the characterization of the TDA, but is more difficult to prove given only what we have established so far.

\subsubsection{Monotonicity}

We now prove monotonicity of SafeAttr for the inputs $X,\lambda$ and sometimes for the graph $G$:

\begin{lemma}
If $X_1 \subseteq X_2$, $\lambda_1 \leq \lambda_2$, and $A$ is a $\overline{\sigma}$ trap with $X_1 \subseteq A$, then SafeAttr$(G[A],\lambda_1,X_1,\sigma) \subseteq \text{SafeAttr}(G,\lambda_2,X_2,\sigma)$.
\end{lemma}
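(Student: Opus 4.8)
The plan is to exploit that SafeAttr is a monotone least-fixed-point computation: both invocations build an increasing chain of sets by the \emph{same} rule, so it suffices to track one chain inside the limit of the other. I would write $S := \text{SafeAttr}(G,\lambda_2,X_2,\sigma)$ for the right-hand side; since the algorithm halts exactly when $C_{\text{cur}} = C_{\text{prev}}$, the set $S$ is a fixed point of the $(G,\lambda_2)$-operator, i.e. $S = X_2 \cup \{v \in V_\sigma : p(v) < \lambda_2 \wedge N(v) \cap S \neq \emptyset\} \cup \{v \in V_{\overline\sigma} : p(v) < \lambda_2 \wedge N(v) \subseteq S\}$, where $N$ denotes neighborhoods in $G$. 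Let $D_0 \subseteq D_1 \subseteq \cdots$ be the successive values of $C_{\text{cur}}$ in the run of $\text{SafeAttr}(G[A],\lambda_1,X_1,\sigma)$, so $D_0 = X_1$ and $D_{i+1}$ is obtained from $D_i$ by the $(G[A],\lambda_1)$-operator. Since this chain stabilizes to $\text{SafeAttr}(G[A],\lambda_1,X_1,\sigma)$, it is enough to prove by induction on $i$ that $D_i \subseteq S$.

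The base case is immediate, as $D_0 = X_1 \subseteq X_2 \subseteq S$. For the inductive step I would assume $D_i \subseteq S$ and take a newly added $v \in D_{i+1} \setminus D_i$. The partition of vertices is inherited by $G[A]$, so $v$ is a $\sigma$-vertex (resp. $\overline\sigma$-vertex) in $G[A]$ exactly when it is one in $G$, and in either case $p(v) < \lambda_1 \leq \lambda_2$. If $v$ is a $\sigma$-vertex, then some neighbor of $v$ in $G[A]$ lies in $D_i$; as $G[A]$-neighborhoods are contained in $G$-neighborhoods and $D_i \subseteq S$, we get $N(v) \cap S \neq \emptyset$ in $G$, whence $v \in S$ because $S$ is a fixed point of the $(G,\lambda_2)$-operator.

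The one delicate point — and the only place the hypotheses do real work — is the case where $v$ is a $\overline\sigma$-vertex, added because \emph{all} of its $G[A]$-neighbors lie in $D_i$. Here I must rule out that $v$ has additional neighbors in $G$ that leave $A$ and are not in $S$, and this is exactly what the assumption that $A$ is a $\overline\sigma$-trap supplies: by the trap definition every $\overline\sigma$-vertex of $A$ keeps all its outgoing edges inside $A$, so $N(v) = N_{G[A]}(v) \subseteq D_i \subseteq S$; combined with $p(v) < \lambda_2$ the fixed-point equation for $S$ gives $v \in S$, closing the induction. The main obstacle is thus confined to this universally-quantified clause of the operator, whereas the existential $\sigma$-vertex clause is monotone under enlarging the graph with no further assumption; the monotonicity in $X$ and in $\lambda$ rides along freely through the base case and the priority comparison $\lambda_1 \leq \lambda_2$.
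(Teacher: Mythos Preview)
Your proof is correct and follows essentially the same approach as the paper: an induction along the iterates of the left-hand SafeAttr computation, using that a $\overline\sigma$-trap keeps all outgoing edges of $\overline\sigma$-vertices, so that the universally-quantified clause transfers from $G[A]$ to $G$. The only cosmetic difference is that the paper tracks both chains $C_i \subseteq C_i'$ in parallel, whereas you compare the left chain directly against the terminal fixed point $S$ on the right; both variants amount to the same monotone least-fixed-point argument.
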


\begin{proof}
Take $C_0,C_1,\ldots$ to be the values of $C_{\text{cur}}$ at the beginning of each ``while'' loop in the execution of SafeAttr$(G[A],\lambda_1,X_1,\sigma)$ (with the final value repeating) and take $C_0',C_1',\ldots$ similarly from the execution of SafeAttr$(G,\lambda_2,X_2,\sigma)$. Then $C_0\subseteq C_0'$. 

Note that, by definition of a trap, if $v \in A \cap V_\sigma$ then $N_{G[A]}(v) \subseteq N_G(v)$, and if $v \in A \cap V_{\overline{\sigma}}$ then $N_{G[A]}(v) = N_G(v)$.

If $C_i \subseteq C_i'$ then we have

$$\{v \in A \cap V_\sigma : p(v)<\lambda_1 \wedge N_{G[A]}(v) \cap C_i \neq \emptyset\} \subseteq \{v \in V_\sigma : p(v) < \lambda_2 \wedge N_G(v) \cap C_i' \neq \emptyset\}$$
$$\{v \in A \cap V_{\overline\sigma} : p(v)<\lambda_1 \wedge N_{G[A]}(v) \subseteq C_{i}\} \subseteq \{v \in V_{\overline\sigma} : p(v)<\lambda_2 \wedge N_G(v) \subseteq C_i'\}$$

and so $C_{i+1} \subseteq C_{i+1}'$, and by induction this holds for all $i$.
\end{proof}

We now simultaneously address three monotonicity properties for GenAttr: monotonicity of the output with respect to the inputs $X,\lambda$ and also sometimes with respect to the graph $G$.

The next lemma is a weak monotonicity property for GenAttr, saying that one iteration of the `WHILE' loop in the algorithm will be contained in the GenAttr of the stronger inputs; combining this with the previous lemma will give monotonicity properties for GenAttr.

\begin{lemma}
Assume $X_1 \subseteq X_2$ and $\lambda_1 \leq \lambda_2$. Assume $A$ is a $\overline{\sigma}$ trap in $G$ and $X_1 \subseteq A$. Take
$$S^*:=\text{SafeAttr}(G[A],\lambda_1,X_1,\sigma)$$
$$V^*:=\text{Restrict}(G[A\setminus S^*],\lambda_1,\sigma)$$
$$T^*:=\text{ParAlg}(G[V^*],\sigma)$$
Then $S^* \cup T^* \subseteq \text{GenAttr}(G,\lambda_2,X_2,\sigma,\text{ParAlg})$.
\end{lemma}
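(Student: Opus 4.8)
The plan is to run the two algorithms in parallel and compare the strong-input execution after a single pass of its \texttt{WHILE}-loop. Write $R:=\text{GenAttr}(G,\lambda_2,X_2,\sigma,\text{ParAlg})$, and let $C$ be the value of $C_{\text{cur}}$ at the end of the first iteration of its loop; since $C_{\text{cur}}$ only grows, $R\supseteq C$, so it suffices to land $S^*\cup T^*$ inside $C$. Put $S_1:=\text{SafeAttr}(G,\lambda_2,X_2,\sigma)$, $V_1':=\text{Restrict}(G[V\setminus S_1],\lambda_2,\sigma)$ and $T_1:=\text{ParAlg}(G[V_1'],\sigma)$, so $C=S_1\cup T_1$. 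Throughout I would lean on three recorded facts: the monotonicity of $\text{SafeAttr}$ in $(X,\lambda)$ and under passage to a $\overline\sigma$-trap; that $\text{Restrict}(\cdot,\lambda,\sigma)$ is the complement of a $\overline\sigma$-attractor, hence a $\overline\sigma$-trap all of whose vertices have priority $<\lambda$; and that $\text{ParAlg}$ is nice with traps, in particular monotone under restricting to a $\overline\sigma$-trap subgame.

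First I would dispose of $S^*$. The $\text{SafeAttr}$ monotonicity lemma applies verbatim to the hypotheses $X_1\subseteq X_2$, $\lambda_1\le\lambda_2$, and $A$ a $\overline\sigma$-trap with $X_1\subseteq A$, giving $S^*=\text{SafeAttr}(G[A],\lambda_1,X_1,\sigma)\subseteq\text{SafeAttr}(G,\lambda_2,X_2,\sigma)=S_1\subseteq C$.

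The substance is $T^*\subseteq C$, and the natural move is to split $V^*$ along $S_1$ and feed the pieces to $\text{ParAlg}$. The overlap is harmless: $V^*\cap S_1\subseteq S_1\subseteq C$. For $Z:=V^*\setminus S_1$ I would first show $Z$ is a $\overline\sigma$-trap in $G[V\setminus S_1]$ contained in $V_1'$. The key local observation is that every $\sigma$-vertex $w\in V^*$ has priority $<\lambda_1\le\lambda_2$, so if $w$ had any edge into $S_1$ it would itself be placed in $S_1$ by $\text{SafeAttr}$; hence $\sigma$-vertices of $Z$ send all their edges into $Z$. For a $\overline\sigma$-vertex $w\in Z$, every $G$-edge of $w$ lands in $S^*\cup V^*$ (because $V^*$ is a $\overline\sigma$-trap in $G[A\setminus S^*]$ and $A$ is a $\overline\sigma$-trap in $G$, so $\overline\sigma$-vertices keep all their edges inside $A$); after deleting the edges into $S_1\supseteq S^*$, every surviving edge of $w$ stays in $Z$, and at least one survives since $w\notin S_1$. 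Thus $Z$ is a $\overline\sigma$-trap in $G[V\setminus S_1]$; as all its vertices have priority $<\lambda_2$, player $\sigma$ can keep play inside $Z$ while avoiding priorities $\ge\lambda_2$, so $Z\subseteq V_1'$, and by the trap-in-a-trap lemma $Z$ is a $\overline\sigma$-trap in $G[V_1']$. Monotonicity of $\text{ParAlg}$ then yields $\text{ParAlg}(G[Z],\sigma)\subseteq\text{ParAlg}(G[V_1'],\sigma)=T_1\subseteq C$.

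The step I expect to be the main obstacle is the remaining bridge, $T^*\setminus S_1\subseteq\text{ParAlg}(G[Z],\sigma)$, i.e.\ relating the output of $\text{ParAlg}$ on $G[V^*]$ to its output on the smaller game $G[Z]$. The structural input is clean: $Z$ is a $\sigma$-trap in $G[V^*]$ (its complement $V^*\cap S_1$ is exactly the $\sigma$-attractor absorbed by $\text{SafeAttr}$), so the output $P:=\text{ParAlg}(G[V^*],\sigma)$, being a $\overline\sigma$-trap, meets $Z$ in $P\cap Z=T^*\setminus S_1$, which—as the intersection of a $\sigma$-trap with a $\overline\sigma$-trap—is a $\overline\sigma$-trap of $G[Z]$ whose vertices remain $\sigma$-winning in $G[Z]$ (a $\sigma$ winning strategy in $G[V^*]$ confines to the $\sigma$-trap $Z$ once $\overline\sigma$'s escapes into $V^*\cap S_1$ are unavailable). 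The difficulty is that passing to $G[Z]$ deletes the $\sigma$-attractor $V^*\cap S_1$, and the nice-with-traps axioms supply monotonicity only for $\overline\sigma$-traps and mere nonemptiness for $\sigma$-traps; so upgrading ``these vertices are $\sigma$-winning in $G[Z]$'' to ``these vertices are returned by $\text{ParAlg}(G[Z],\sigma)$'' is exactly where the stability and niceness properties must be combined carefully, exploiting that the discarded overlap is already in $S_1\subseteq C$ (and, should the equality only materialize in the limit, invoking the full $\text{GenAttr}$ fixpoint rather than its first iteration). Once the bridge is secured, $T^*\subseteq(V^*\cap S_1)\cup(T^*\setminus S_1)\subseteq S_1\cup T_1=C$, which together with $S^*\subseteq S_1$ gives $S^*\cup T^*\subseteq C\subseteq R$.
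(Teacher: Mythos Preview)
Your proposal has a real gap at exactly the point you flag as the ``main obstacle.'' Working with the \emph{first} iteration $C=S_1\cup T_1$ forces you to prove the containment $T^*\setminus S_1\subseteq\text{ParAlg}(G[Z],\sigma)$, and the nice-with-traps axioms simply do not yield this: when you pass from $G[V^*]$ to its $\sigma$-trap $Z$, the second axiom promises only that $\text{ParAlg}(G[Z],\sigma)$ is \emph{nonempty}, not that it contains $T^*\cap Z$. There is no stability or Lemma~\ref{reducibility}-type statement for ParAlg that upgrades nonemptiness to the set inclusion you need, so the bridge cannot be closed along this route. (Also, your phrasing ``$\sigma$-vertices of $Z$ send all their edges into $Z$'' is overstated, since $A$ is a $\overline\sigma$-trap and $\sigma$-vertices may have $G$-edges leaving $A$; fortunately only ``at least one edge into $Z$'' is required for the $\overline\sigma$-trap claim, and that does hold.)

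The paper avoids the obstacle by the move you mention only parenthetically: it compares against the \emph{fixpoint} $C'$ of $\text{GenAttr}(G,\lambda_2,X_2,\sigma,\text{ParAlg})$ rather than the first iteration. At termination one has $C'=S':=\text{SafeAttr}(G,\lambda_2,C',\sigma)$ and $T':=\text{ParAlg}(G[V'],\sigma)=\emptyset$. Now argue by contradiction on $T^*\setminus S'\neq\emptyset$, and work with $T^*\setminus S'$ itself rather than $V^*\setminus S'$. Two structural facts suffice: (i) $T^*\setminus S'$ is a $\overline\sigma$-trap in $G[V']$ (essentially your argument for $Z$, restricted to $T^*$), and (ii) $T^*\setminus S'$ is a $\sigma$-trap in $G[T^*]$, since the only vertices removed are those with priority $<\lambda_2$ having an edge into $S'$, i.e.\ $T^*\setminus S'=T^*\setminus\text{Attr}(G[T^*],S',\sigma)$. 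By stability $\text{ParAlg}(G[T^*],\sigma)=T^*$, so the $\sigma$-trap clause of niceness gives $\text{ParAlg}(G[T^*\setminus S'],\sigma)\neq\emptyset$; by (i) and $\overline\sigma$-trap monotonicity this nonempty set sits inside $T'=\emptyset$, a contradiction. The point is that at the fixpoint you only need \emph{nonemptiness}, which is precisely what the $\sigma$-trap axiom supplies, whereas your first-iteration plan demands a containment the axioms do not provide.
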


\begin{proof}
Take $C'$ to be the value of $C_{\text{cur}}$ at the end of the execution of GenAttr$(G,\lambda_2,X_2,\sigma,\text{ParAlg})$. Taking:

$$S':=\text{SafeAttr}(G,\lambda_2,C',\sigma)$$
$$V':=\text{Restrict}(G[V\setminus S'],\lambda_2,\sigma)$$
$$T':=\text{ParAlg}(G[V'],\sigma)$$
we have that $C'=S'$ and $T'$ is empty.

We get that $S^* \subseteq S' = C'$ by monotonicity of the safe attractor, and so we need only show that $T^* \subseteq C'$. Assume, for sake of contradiction, that $T^*$ is not a subset of $C'$, and so in particular $T^* \setminus S' \neq \emptyset$.

In the following, we refer to traps in induced subgraphs that are not necessarily subgames, i.e. they may have vertices without outgoing edges. The definition of trap remains unchanged.

We claim that $T^* \setminus S'$ is a $\overline{\sigma}$ trap in $G[V']$. 
Take $B := \{v \in A : p(v) \geq \lambda_1\}$. We know that $T^*$ is a $\overline{\sigma}$ trap in $G[V^*]$. Then note that $V^* = (A \setminus S^*) \setminus \text{Attr}(G[A \setminus S^*],B,\overline{\sigma})$, and so we get that $V^*$ is a $\overline{\sigma}$ trap in $G[A \setminus S^*]$ and so $T^*$ is a $\overline{\sigma}$ trap in $G[A \setminus S^*]$. Since the edges of $G[A \setminus S']$ are a subset of the edges of $G[A \setminus S^*]$, we get that any $\overline{\sigma}$ vertex in $T^* \setminus S'$ has no edges leaving $T^* \setminus S'$ in the graph $G[A \setminus S']$. Given any $\sigma$ vertex in $T^* \setminus S'$, since $S' = \text{SafeAttr}(G,\lambda_2,S',\sigma)$ and since $\forall v \in V^* \, p(v) < \lambda_1 \leq \lambda_2$, the $\sigma$ vertex had no edges into $S'$ in $G[A]$ (for otherwise it would be contained in $S'$) and so the $\sigma$ vertex must have some edge into $T^* \setminus S'$ and so we get that indeed $T^* \setminus S'$ is a $\overline{\sigma}$ trap in $G[A \setminus S']$, and so also in $G[V \setminus S']$ (since $A$ is a $\overline{\sigma}$-trap in $V$). We have $T^* \setminus S'$ is a $\overline\sigma$-trap in $G[V \setminus S']$ with no vertices of priority at least $\lambda$; any such structure must be a $\overline\sigma$-trap in $G[V']$.

Because $T^*$ has no vertices of priority at least $\lambda_2$, we have $T^* \setminus S' = T^* \setminus \text{Attr}(G[T^*],S',\sigma)$, and so $T^* \setminus S'$ is a $\sigma$-trap in $G[T^*]$. Since we know ParAlg is nice with traps, we have that ParAlg$(G[T^*],\sigma)=T^*$. Therefore, again since ParAlg is nice with traps, ParAlg$(G[T^* \setminus S'],\sigma)$ is nonempty. Finally, ParAlg$(G[T^* \setminus S'],\sigma) \subseteq \text{ParAlg}(G[V'],\sigma)$, but this contradicts the assumption that $T' = \emptyset$. Therefore, we must have that $T^* \subseteq S'$.
\end{proof}

\begin{lemma}
If $X_1 \subseteq X_2$ and $\lambda_1 \leq \lambda_2$ and $A$ is a $\overline{\sigma}$ trap in $G$ with $X_1 \subseteq A$, then GenAttr$(G[A],\lambda_1,X_1,\sigma,\text{ParAlg}) \subseteq \text{GenAttr}(G,\lambda_2,X_2,\sigma,\text{ParAlg})$.
\end{lemma}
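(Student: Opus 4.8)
The plan is to bootstrap the previous single-iteration (``weak monotonicity'') lemma into full monotonicity by inducting along the iterates of the left-hand GenAttr, while using the stability lemma to hold the right-hand side fixed. First I would set $W := \text{GenAttr}(G,\lambda_2,X_2,\sigma,\text{ParAlg})$ and record, via the stability lemma (its second item) applied with the ambient $\overline\sigma$-trap taken to be the whole vertex set $V$, the identity
$$W = \text{GenAttr}(G,\lambda_2,W,\sigma,\text{ParAlg}).$$
This self-referential form of $W$ is exactly what makes a one-step bound iterable.

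Next I would name the successive values $C_0 = X_1, C_1, C_2, \dots$ of $C_{\text{cur}}$ at the start of each `WHILE' iteration in the execution of GenAttr$(G[A],\lambda_1,X_1,\sigma,\text{ParAlg})$ (with the final value repeating); the algorithm's output is their common limit. Two bookkeeping facts are immediate: every $C_i \subseteq A$, since the entire computation takes place inside $G[A]$, and the step producing $C_{i+1}$ from $C_i$ sets $S := \text{SafeAttr}(G[A],\lambda_1,C_i,\sigma)$, then $V' := \text{Restrict}(G[A \setminus S],\lambda_1,\sigma)$, and finally $C_{i+1} := S \cup \text{ParAlg}(G[V'],\sigma)$. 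The main claim is then $C_i \subseteq W$ for all $i$, proved by induction. The base case is $C_0 = X_1 \subseteq X_2 \subseteq W$, the last inclusion holding because GenAttr begins with $C_{\text{cur}} = X_2$ and only grows. For the inductive step, assuming $C_i \subseteq W$, I would invoke the weak monotonicity lemma with small input $C_i$, large input $W$, bounds $\lambda_1 \le \lambda_2$, and trap $A$; its hypotheses $C_i \subseteq W$ and $C_i \subseteq A$ both hold. Under these substitutions the lemma's quantities $S^*, V^*, T^*$ coincide exactly with $S$, $V'$, and $\text{ParAlg}(G[V'],\sigma)$, so its conclusion reads $C_{i+1} \subseteq \text{GenAttr}(G,\lambda_2,W,\sigma,\text{ParAlg}) = W$, the final equality being the stability identity above. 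Passing to the limit delivers the inclusion.

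The main obstacle is precisely the gap between the single-iteration nature of the weak monotonicity lemma and the multi-iteration object on the left: one cannot directly run the lemma with small input $C_i$ against the original large input $X_2$, because the iterates $C_i$ need not remain inside $X_2$. Replacing the large input by the GenAttr-stable set $W$ circumvents this, but it then forces two careful checks, namely that the iterates stay within $A$ so the lemma's trap hypothesis applies at each step, and that the lemma's one-step output $S^* \cup T^*$ is literally $C_{i+1}$ rather than merely comparable to it. Once these two identifications are confirmed, the induction closes with no further calculation.
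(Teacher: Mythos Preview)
Your proposal is correct and follows essentially the same approach as the paper: induct on the iterates $C_i$ of the left-hand GenAttr, invoke the weak monotonicity lemma at each step, and collapse the right-hand side using the stability identity $W = \text{GenAttr}(G,\lambda_2,W,\sigma,\text{ParAlg})$. Your version is in fact slightly more streamlined than the paper's, which routes through the intermediate set $\text{GenAttr}(G,\lambda_2,C_{i-1},\sigma,\text{ParAlg})$ before applying stability, whereas you apply the weak monotonicity lemma directly with large input $W$.
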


\begin{proof}
Take $C_0,C_1,\ldots$ to be the values of $C_{\text{cur}}$ at the beginning of each ``while'' loop in the execution of GenAttr$(G[A],\lambda_1,X_1,\sigma,\text{ParAlg})$ (with the final value repeating). Take

$$S_i:=\text{SafeAttr}(G[A],\lambda_2,C_i,\sigma)$$
$$V_i:=\text{Restrict}(G[A\setminus S_i],\lambda_2,\sigma)$$
$$T_i:=\text{ParAlg}(G[V_i],\sigma)$$

We will proceed by induction on $i$ to show that $C_i \subseteq \text{GenAttr}(G,n_2,X_2,\sigma,\text{ParAlg})$. Note this holds for $C_0$ since $C_0 = X_1 \subseteq X_2$. Then, for $i > 0$, we have $C_i = T_{i-1} \cup S_{i-1}$ and by the previous lemma and inductive hypothesis we get:
$$T_{i-1} \cup S_{i-1} \subseteq \text{GenAttr}(G,\lambda_2,C_{i-1},\sigma,\text{ParAlg}) \subseteq $$
$$\text{GenAttr}(G,\lambda_2,\text{GenAttr}(G,\lambda_2,X_2,\sigma,\text{ParAlg}),\sigma,\text{ParAlg}) =$$
$$\text{GenAttr}(G,\lambda_2,X_2,\sigma,\text{ParAlg}),$$ completing the proof.
\end{proof}

We will now proceed to show monotonicity of SeqAttr. While the original definition was slightly more natural, the following reformulation of SeqAttr will be more useful. We leave it to the reader to verify that the following reformulation of SeqAttr is equivalent to the original. It follows immediately from monotonicity and stability of GenAttr.

\begin{lemma}
If $P'$ is a finite collection of integers and $p(X \cap V_\sigma) \subseteq P'$ then, taking $P$ to be the priorities in $P'$ of parity $\sigma$, the following algorithm has the same output as SeqAttr.
\begin{algorithm}

\caption{SeqAttr$_P(G=(V,E,p),X,\sigma,\text{ParAlg})$}

\begin{algorithmic}[1]
\STATE $C :=\emptyset$
\STATE $Q :=P$
\WHILE{$Q \neq \emptyset$}
\STATE $\lambda := \max(Q)$
\STATE $Q := Q \setminus \{\lambda\}$
\STATE $S := \{v \in X : p(v) = \lambda\}$
\STATE $C:= \text{GenAttr}(G,\lambda,C \cup S,\sigma,\text{ParAlg})$
\ENDWHILE
\RETURN $C$

\end{algorithmic}
\label{alg3rev}
\end{algorithm}
\end{lemma}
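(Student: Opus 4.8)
The plan is to prove the two algorithms agree by running them in parallel through the priorities in strictly decreasing order and matching up their steps. First I would isolate three facts about GenAttr. (a) GenAttr returns its target unchanged when that target is empty, since the while-guard $C_{\text{cur}}\neq C_{\text{prev}}$ fails immediately; more generally GenAttr always contains its target input, because $C_{\text{cur}}$ starts at $X$ and is non-decreasing (SafeAttr contains its own target). (b) By the stability lemma (with $A=G$), if $C$ is itself the output of a GenAttr call at level $\mu$, then $\text{GenAttr}(G,\mu,C,\sigma,\text{ParAlg})=C$. (c) By the monotonicity lemma for GenAttr (with $A=G$, $X_1=X_2=C$), $\text{GenAttr}(G,\lambda,C,\sigma,\text{ParAlg})\subseteq \text{GenAttr}(G,\mu,C,\sigma,\text{ParAlg})$ whenever $\lambda\le\mu$. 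Combining (a)--(c) gives the key \emph{no-op claim}: if $C$ is the output of some GenAttr call at level $\mu$ and $\lambda\le\mu$, then $\text{GenAttr}(G,\lambda,C,\sigma,\text{ParAlg})=C$ (the empty case is immediate from (a); otherwise (c) gives $\subseteq$, (b) gives equality at level $\mu$, and (a) gives $\supseteq$).

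Next I would record the structural facts that permit the alignment. In SeqAttr the set $W=V_\sigma\cap X$ only shrinks, and after an iteration whose $\max(W)$ has priority $\lambda$ the chosen set $S$ lies inside the new $C$ and is deleted from $W$; hence the priorities SeqAttr actually processes form a strictly decreasing sequence, each of parity $\sigma$ and lying in $p(X\cap V_\sigma)$, so they form a subsequence of $P$. SeqAttr$_P$, on the other hand, sweeps through all of $P$ in decreasing order. The decisive elementary identity is that at a priority $\lambda$ of parity $\sigma$ the GenAttr inputs coincide once the accumulators agree: SeqAttr feeds $C\cup\big(\{v\in V_\sigma\cap X:p(v)=\lambda\}\setminus C\big)=C\cup\{v\in V_\sigma\cap X:p(v)=\lambda\}$, while SeqAttr$_P$ feeds $C\cup\{v\in X:p(v)=\lambda\}$, and since $\lambda$ has parity $\sigma$ we have $\{v\in X:p(v)=\lambda\}=\{v\in V_\sigma\cap X:p(v)=\lambda\}$.

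I would then induct on $P=\{p_1>p_2>\cdots>p_r\}$ with the hypothesis that after SeqAttr$_P$ processes $p_1,\dots,p_t$, its accumulator equals the accumulator of SeqAttr after all of its iterations of priority $\ge p_t$; the base case is $\emptyset=\emptyset$. For the inductive step there are two cases. If SeqAttr actually processes $p_t$, both algorithms enter the $p_t$-step with the same $C$ by the hypothesis, the GenAttr inputs coincide by the identity above, and hence so do the outputs. If SeqAttr skips $p_t$ (either $p_t\notin p(X\cap V_\sigma)$, or every priority-$p_t$ $\sigma$-vertex of $X$ was already absorbed into $C$), then $\{v\in X:p(v)=p_t\}\subseteq C$, so SeqAttr$_P$'s GenAttr input is just the current $C$; since $C$ is the output of the GenAttr call at the previous level $p_{t-1}\ge p_t$ (or is empty when $t=1$), the no-op claim shows SeqAttr$_P$ leaves $C$ fixed, matching SeqAttr. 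Taking $t=r$ gives equality of the final outputs.

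The hard part will be the ``extra'' priorities in $P\setminus p(X\cap V_\sigma)$ and the priorities whose vertices have already been swallowed: one must guarantee that SeqAttr$_P$ does not overshoot by manufacturing vertices from an empty target at a high level, which a priori it could via the ParAlg term inside GenAttr. This is exactly where fact (a)---that $\text{GenAttr}(G,\lambda,\emptyset,\sigma,\text{ParAlg})=\emptyset$ because the loop guard fails at once---together with the no-op claim is indispensable; everything else is the bookkeeping of the alignment and the set identity $C\cup(S\setminus C)=C\cup S$.
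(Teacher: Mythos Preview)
Your proof is correct and is precisely the argument the paper has in mind: the paper states only that the equivalence ``follows immediately from monotonicity and stability of GenAttr,'' and your facts (b), (c) together with (a) and the resulting no-op claim are exactly those two ingredients made explicit. Your step-by-step alignment along the decreasing priorities in $P$, splitting into the ``processed'' and ``skipped'' cases, is the natural way to cash out that one-line justification.
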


Intuitively, we simply run the GenAttr for every priority in $P$, which just adds redundancy by the assumption $p(X\cap V_\sigma) \subseteq P$: if ever in the original formulation of SeqAttr some call GenAttr$(G,\lambda,C,\sigma,\text{ParAlg})$ were made, then in the above version some call will be made with the same parameter $\lambda$.

We now show monotonicity properties for SeqAttr with respect to the input $X$ and also sometimes with respect to the graph $G$:

\begin{lemma}
If $X_1 \subseteq X_2$ and $A$ is a $\overline \sigma$-trap in $G$ such that $X_1 \subseteq A$, then SeqAttr$(G[A],X_1,\sigma,\text{ParAlg}) \subseteq \text{SeqAttr}(G,X_2,\sigma,\text{ParAlg})$
\end{lemma}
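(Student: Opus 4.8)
The plan is to reduce both sides to the reformulated version $\text{SeqAttr}_P$ introduced above, so that the two executions iterate over the \emph{same} fixed list of priorities, and then to run a single induction that pushes the already-established monotonicity of GenAttr through one priority at a time.

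First I would fix a common priority list. Let $P' := p(V)$ be the (finite) set of all priorities occurring in $G$, and let $P$ be the sub-collection of $P'$ of priorities of parity $\sigma$. Since $X_1 \subseteq A \subseteq V$ and $X_2 \subseteq V$, we have $p(X_1 \cap V_\sigma) \subseteq P'$ and $p(X_2 \cap V_\sigma) \subseteq P'$ (and $p$ restricted to the vertices of $G[A]$ still takes values in $P'$). Hence the reformulation lemma applies to both executions with this single $P'$, giving
\[
\text{SeqAttr}(G[A],X_1,\sigma,\text{ParAlg}) = \text{SeqAttr}_P(G[A],X_1,\sigma,\text{ParAlg})
\]
and
\[
\text{SeqAttr}(G,X_2,\sigma,\text{ParAlg}) = \text{SeqAttr}_P(G,X_2,\sigma,\text{ParAlg}).
\]
It therefore suffices to compare the two $\text{SeqAttr}_P$ runs, which now share an identical while-loop schedule: both process the priorities of $P$ in decreasing order.

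Then I would induct on the loop. Let $C_0 = C_0' = \emptyset$, and let $C_j, C_j'$ denote the value of $C$ after the $j$-th iteration of the run on $G[A]$ with target $X_1$ and of the run on $G$ with target $X_2$, respectively. Writing $\lambda$ for the priority processed in iteration $j$, the runs set $S := \{v \in X_1 : p(v) = \lambda\}$ and $S' := \{v \in X_2 : p(v) = \lambda\}$; since $X_1 \subseteq X_2$ we get $S \subseteq S'$, and together with the inductive hypothesis $C_{j-1} \subseteq C_{j-1}'$ this gives $C_{j-1} \cup S \subseteq C_{j-1}' \cup S'$. Because GenAttr on $G[A]$ always returns a subset of $A$ and $S \subseteq X_1 \subseteq A$, we also have $C_{j-1} \cup S \subseteq A$. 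The two updates are $C_j = \text{GenAttr}(G[A],\lambda,C_{j-1}\cup S,\sigma,\text{ParAlg})$ and $C_j' = \text{GenAttr}(G,\lambda,C_{j-1}'\cup S',\sigma,\text{ParAlg})$, so the monotonicity lemma for GenAttr (applied with $\lambda_1 = \lambda_2 = \lambda$, smaller input $C_{j-1}\cup S$, larger input $C_{j-1}'\cup S'$, and the $\overline\sigma$-trap $A$) yields $C_j \subseteq C_j'$. Taking final values gives the claimed inclusion.

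The only real point to verify — and the place where I expect the argument could go wrong if stated carelessly — is that the hypotheses of the GenAttr monotonicity lemma are genuinely met at every step: namely that $A$ is a $\overline\sigma$-trap \emph{containing} the smaller input $C_{j-1}\cup S$, and that the fixed common ordering of $P$ really keeps the two loops in lock-step. Both are immediate once $P$ has been fixed: the containment follows from $X_1 \subseteq A$ together with GenAttr returning subsets of the ambient vertex set, and the synchronization is exactly what the reformulation lemma buys us. Everything else is a direct appeal to the already-proved monotonicity of GenAttr.
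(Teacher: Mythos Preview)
Your proof is correct and follows essentially the same approach as the paper: fix a common priority list so that $\text{SeqAttr}_P$ synchronizes the two runs, then induct step-by-step invoking the monotonicity of GenAttr. The only cosmetic difference is your choice of $P' = p(V)$ versus the paper's tighter $P = p(X_2 \cap V_\sigma)$; both satisfy the reformulation lemma's hypothesis, and if anything your write-up is slightly more careful in checking that $C_{j-1}\cup S \subseteq A$ before applying the GenAttr monotonicity lemma.
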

\begin{proof}
Take $P = p(X_2 \cap V_\sigma)$. Take $C_i,Q_i$ to be the values of $C,Q$ respectively at the beginning of the $i$th iteration of the ``WHILE'' loop in the execution of SeqAttr$_P(G[A],X_1,\sigma,\text{ParAlg})$. Take

$$\lambda_i = \max(Q_i)$$
$$S_i = \{v \in X_1 : p(v) = \lambda_i\}$$

Similarly take $C_i',Q_i',\lambda_i',S_i'$ for the execution of SeqAttr$_P(G,X_2,\sigma,\text{ParAlg})$. Since $Q_0 = Q_0'$ and $Q_{i+1} = Q_i \setminus \max(Q_i)$ and $Q_{i+1}' = Q_i' \setminus \max(Q_i')$ we get $Q_i = Q_i'$ and $\lambda_i = \lambda_i'$ for all $i$. Then $S_i \subseteq S_i'$ since $X_1 \subseteq X_2$. We now proceed by induction to show $C_i \subseteq C_i'$. We have $C_0 = C_0' = \emptyset$ and 
$$C_{i+1} = \text{GenAttr}(G[A],\lambda_i,S_i \cup C_i,\sigma,\text{ParAlg}) \subseteq$$ 
$$\text{GenAttr}(G,\lambda_i',S_i' \cup C_i',\sigma,\text{ParAlg}) = C_{i+1}'$$ 

\end{proof}

We now present the monotonicity theorem for TDA:

\begin{theorem}
If $A$ is a $\overline\sigma$-trap in $G$, then we have TDA$(G[A],\sigma,\text{ParAlg}) \subseteq \text{TDA}(G,\sigma,\text{ParAlg})$.
\end{theorem}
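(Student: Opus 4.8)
The plan is to mirror the inductive, iterate-tracking strategy used in the preceding monotonicity proofs (for SafeAttr, GenAttr, and SeqAttr), but now at the level of the outer ``while''-loop of TDA, leaning on the SeqAttr monotonicity lemma just established as the main engine. Concretely, I would let $T_0^A, T_1^A, \ldots$ denote the successive values of $T_{\text{cur}}$ at the top of the ``while''-loop in the execution of TDA$(G[A],\sigma,\text{ParAlg})$ (with the final value repeated indefinitely, as in the earlier proofs), and set $C_i^A := \text{SeqAttr}(G[A],T_i^A,\sigma,\text{ParAlg})$; similarly let $T_0, T_1, \ldots$ and $C_i := \text{SeqAttr}(G,T_i,\sigma,\text{ParAlg})$ come from the execution of TDA$(G,\sigma,\text{ParAlg})$. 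The goal is to prove by induction that $T_i^A \subseteq T_i$ for every $i$, with $C_i^A \subseteq C_i$ falling out along the way; since both sequences eventually stabilize, comparing the stabilized values of $C$ yields $\text{TDA}(G[A],\sigma,\text{ParAlg}) \subseteq \text{TDA}(G,\sigma,\text{ParAlg})$.

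For the base case, the vertex set of $G[A]$ is $A$, so $T_0^A = A \cap V_\sigma \subseteq V_\sigma = T_0$. For the inductive step, assume $T_i^A \subseteq T_i$. Since $T_i^A \subseteq A$ and $A$ is a $\overline\sigma$-trap in $G$, the SeqAttr monotonicity lemma (applied with $X_1 = T_i^A$ and $X_2 = T_i$) gives $C_i^A \subseteq C_i$ immediately. It remains to push the inclusion through the target-removal step, which is where I would spend the most care: I must show that every vertex surviving into $T_{i+1}^A$ also survives into $T_{i+1}$. So take $v \in T_{i+1}^A$; then $v \in T_i^A \subseteq T_i$, and because $v$ was not removed in the $G[A]$-execution there is a neighbor $w \in N_{G[A]}(v) \cap C_i^A$. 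Using the trap observation that $N_{G[A]}(v) \subseteq N_G(v)$ together with $C_i^A \subseteq C_i$, this same $w$ witnesses $N_G(v) \cap C_i \neq \emptyset$, so $v$ is not removed in the $G$-execution and hence $v \in T_{i+1}$.

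The one genuinely delicate point, and the step I expect to be the main obstacle, is the bookkeeping around the fact that the two executions need not run for the same number of iterations and that the relevant removal conditions reference neighborhoods in two different graphs. The ``final value repeats'' convention handles the length mismatch cleanly, and the trap inclusion $N_{G[A]}(v) \subseteq N_G(v)$ (which holds with equality for $\overline\sigma$-vertices and as containment for $\sigma$-vertices) resolves the neighborhood discrepancy; once these are in place the induction closes without further calculation. Finally, choosing $i$ large enough that both executions have terminated gives $\text{TDA}(G[A],\sigma,\text{ParAlg}) = C_i^A \subseteq C_i = \text{TDA}(G,\sigma,\text{ParAlg})$, completing the proof.
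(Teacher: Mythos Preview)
Your proposal is correct and follows essentially the same approach as the paper: both proofs track the iterates $T_i$ and $C_i$ in the two executions, establish $T_i^A \subseteq T_i$ by induction using SeqAttr monotonicity for the $C$-inclusion and the containment $N_{G[A]}(v) \subseteq N_G(v)$ for the target-removal step, and conclude by comparing the stabilized values. The paper's write-up is slightly terser (it phrases the removal step as a set-difference chain rather than an element chase), but the argument is identical.
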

\begin{proof}
Let $T_0,T_1,\ldots$ be the values of $T_{\text{cur}}$ at the beginning of the ``WHILE'' loop in the execution of $\text{TDA}(G[A],\sigma,\text{ParAlg})$. Take $C_i := \text{SeqAttr}(G[A],T_i,\sigma,\text{ParAlg})$. Similarly define $T_i',C_i'$ for TDA$(G,\sigma,\text{ParAlg})$. We proceed by induction to show $T_i \subseteq T_i'$. This holds for $T_0,T_0'$ since $A \cap V_\sigma \subseteq V_\sigma$. Then, by monotonicity of SeqAttr, we get $C_i \subseteq C_i'$. To obtain $T_{i+1}$ and $T_{i+1}'$ from $T_i,T_i'$, respectively, any vertex in $T_i,T_i'$ without an edge into $C_i,C_i'$ is removed. The edges of $G$ are a superset of those of $G[A]$ and $C_i'$ is a superset of $C_i$, so we get:
$$T_{i+1}= T_i \setminus \{v \in V_\sigma \cap A : N_{G[A]}(v) \cap C_i = \emptyset\} = T_i \setminus \{v \in T_i: N_{G[A]}(v) \cap C_i = \emptyset\} \subseteq$$
$$T_i' \setminus \{v \in T_i' : N_G(v) \cap C_i' = \emptyset\} = T_i' \setminus \{v \in V_\sigma : N_G(v) \cap C_i' = \emptyset\} = T_{i+1}'.$$
\end{proof}

\subsubsection{Completeness}

\begin{lemma}
If $V$ is good for ParAlg with respect to $\sigma$ and if $T \subseteq V$ and $\lambda$ are such that $p(\max(V \setminus T)) < \lambda$, then taking $S:=\text{GenAttr}(G,T,\lambda,\sigma,\text{ParAlg})$ we have $S = \text{Attr}(G,S,\sigma)$ and if $V\setminus S \neq \emptyset$ then $\max(V \setminus S) \subseteq V_\sigma$.
\end{lemma}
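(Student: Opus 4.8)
The plan is to read $S := \text{GenAttr}(G,\lambda,T,\sigma,\text{ParAlg})$ (so the target set is $T$ and the priority bound is $\lambda$; I take the argument order in the statement to be a transposition) and to extract both conclusions from the fixpoint reached at termination. First I would record two easy containments. Since $C_{\text{cur}}$ is initialized to $T$ and each iteration replaces it by $\text{SafeAttr}(G,\lambda,C_{\text{prev}},\sigma)\cup\text{ParAlg}(\cdots)\supseteq C_{\text{prev}}$, the sequence of values is nondecreasing, so $T\subseteq S$. Because $p(\max(V\setminus T))<\lambda$, every vertex of priority at least $\lambda$ already lies in $T\subseteq S$; hence every vertex of $V\setminus S$ has priority strictly less than $\lambda$. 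This single observation drives both conclusions.

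Next I would analyze the terminating iteration. At termination $C_{\text{prev}}=C_{\text{cur}}=S$, so $\text{SafeAttr}(G,\lambda,S,\sigma)\cup\text{ParAlg}(G[V'],\sigma)=S$ with $V'=\text{Restrict}(G[V\setminus S],\lambda,\sigma)$. Because SafeAttr always contains its input, $S\subseteq\text{SafeAttr}(G,\lambda,S,\sigma)\subseteq S$, giving the fixpoint $S=\text{SafeAttr}(G,\lambda,S,\sigma)$ and forcing $\text{ParAlg}(G[V'],\sigma)\subseteq S$. (This is exactly the self-stability of GenAttr recorded in the stability lemma.) For the first conclusion I use that the fixpoint of SafeAttr means no admissible vertex can be added: no $\sigma$-vertex of priority $<\lambda$ outside $S$ has an edge into $S$, and no $\overline\sigma$-vertex of priority $<\lambda$ outside $S$ has all its edges into $S$. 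Since every vertex of $V\setminus S$ has priority $<\lambda$, these two statements say precisely that $V\setminus S$ is a $\sigma$-trap: every $\sigma$-vertex in $V\setminus S$ keeps all edges inside, and every $\overline\sigma$-vertex in $V\setminus S$ retains at least one edge inside. As the complement of a $\sigma$-attractor is a $\sigma$-trap and conversely, this is equivalent to $S=\text{Attr}(G,S,\sigma)$, which is the first claim.

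For the second conclusion, suppose $V\setminus S\neq\emptyset$. Because $V\setminus S$ contains no vertex of priority at least $\lambda$, the set $\{v\in V\setminus S:p(v)\geq\lambda\}$ is empty, so its $\overline\sigma$-attractor in $G[V\setminus S]$ is empty and therefore $V'=\text{Restrict}(G[V\setminus S],\lambda,\sigma)=V\setminus S$. Now $\text{ParAlg}(G[V'],\sigma)$ is a subset of the vertices of $G[V']=G[V\setminus S]$, hence disjoint from $S$; combined with $\text{ParAlg}(G[V'],\sigma)\subseteq S$ from the previous paragraph, this yields $\text{ParAlg}(G[V\setminus S],\sigma)=\emptyset$. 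Finally I invoke the hypothesis that $V$ is good: $V\setminus S$ is a nonempty $\sigma$-trap in $G$ (equivalently a $\sigma$-subtrap of the whole $\overline\sigma$-trap $V$), so goodness gives that either $\max(V\setminus S)\subseteq V_\sigma$ or $\text{ParAlg}(G[V\setminus S],\sigma)\neq\emptyset$. The latter has just been ruled out, so $\max(V\setminus S)\subseteq V_\sigma$, as required.

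I expect the only delicate point to be the bookkeeping in the terminating iteration: pinning down that the returned $S$ satisfies both $S=\text{SafeAttr}(G,\lambda,S,\sigma)$ and that the ParAlg contribution vanishes, and that $\text{Restrict}$ does not shrink $V\setminus S$. Everything after that is the standard equivalence between ``$S$ is attractor-closed'' and ``$V\setminus S$ is a trap,'' plus one application of goodness. In particular the hypothesis $p(\max(V\setminus T))<\lambda$ is used only to force $V\setminus S$ to lie entirely below priority $\lambda$, which is what makes the safe attractor behave like an ordinary attractor there and makes $\text{Restrict}$ act trivially.
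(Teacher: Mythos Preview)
Your proposal is correct and follows essentially the same route as the paper: both arguments use the terminating condition of GenAttr to obtain $S=\text{SafeAttr}(G,\lambda,S,\sigma)$, upgrade this to $S=\text{Attr}(G,S,\sigma)$ via the hypothesis $p(\max(V\setminus T))<\lambda$, deduce that ParAlg is empty on $V\setminus S$, and then invoke goodness. Your write-up is simply more explicit than the paper's about why $T\subseteq S$, why $V'=V\setminus S$, and why the ParAlg contribution vanishes (being simultaneously contained in and disjoint from $S$).
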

\begin{proof}
We consider that by the terminating condition for the GenAttr algorithm, we must have
$$S = \text{SafeAttr}(G,S,\lambda,\sigma).$$
Note that $\text{SafeAttr}(G,S,\lambda,\sigma)=\text{Attr}(G,S,\sigma)$ (since $\lambda > p(\max(V \setminus T))$) and so we get
$$S = \text{Attr}(G,S,\sigma).$$
Since $p(\max(V \setminus S)) < \lambda$, we also get by the terminating condition for GenAttr that $\text{ParAlg}(V \setminus S,\sigma) = \emptyset$, but, by assumption, since $V \setminus S$ is a $\sigma$ trap in $G$, either $V\setminus S = \emptyset$ (in which case we are done) or $\max(V \setminus S) \subseteq V_\sigma$.
\end{proof}

\begin{lemma}
If $V$ is good for ParAlg with respect to $\sigma$, then SeqAttr$(G,V_\sigma,\sigma,\text{ParAlg}) = V$
\end{lemma}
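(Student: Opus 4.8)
The plan is to track the successive values of the loop variables in $\text{SeqAttr}(G,V_\sigma,\sigma,\text{ParAlg})$ and to feed each iteration into the previous lemma. Write $C_0 = \emptyset, C_1, C_2,\ldots$ for the values of the variable $C$ at the start of each pass through the ``WHILE'' loop, and note that the corresponding value of $W$ is always $W_i = V_\sigma \setminus C_i$ (this persists because $C_{i}\subseteq C_{i+1}$). The invariant I would carry by induction on $i$ is: either $V\setminus C_i = \emptyset$, or $\max(V\setminus C_i)\subseteq V_\sigma$. This is precisely the conclusion furnished by the previous lemma, so maintaining it is what drives the whole argument. The base case $C_0=\emptyset$ holds because $V$ is good, which in particular says $\max(V)\subseteq V_\sigma$.

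The key bookkeeping step, which I would isolate and reuse at every iteration, is the identification $\max(W_i) = \max(V\setminus C_i)$. Indeed, the invariant says the highest-priority vertices of $V\setminus C_i$ all lie in $V_\sigma$, hence in $V_\sigma\setminus C_i = W_i$; since $W_i\subseteq V\setminus C_i$, these are exactly the highest-priority vertices of $W_i$. Consequently the set $S := \max(W_i)$ selected by the algorithm equals $\max(V\setminus C_i)$, and $\lambda := p(S)$ equals $p(\max(V\setminus C_i))$. It follows that every vertex of $V\setminus(C_i\cup S)$ has priority strictly below $\lambda$, i.e. $p(\max(V\setminus(C_i\cup S))) < \lambda$ (if $V\setminus(C_i\cup S)=\emptyset$ then $C_i\cup S = V$, so $C_{i+1}=V$ and we are already done).

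With this priority bound established, the hypotheses of the previous lemma are met with $T = C_i\cup S$ and this $\lambda$. Applying it to $C_{i+1} := \text{GenAttr}(G,\lambda,C_i\cup S,\sigma,\text{ParAlg})$ yields $C_{i+1}=\text{Attr}(G,C_{i+1},\sigma)$ together with the dichotomy $V\setminus C_{i+1}=\emptyset$ or $\max(V\setminus C_{i+1})\subseteq V_\sigma$, which re-establishes the invariant at $i+1$. For termination I would note that $S\subseteq W_i$ is nonempty while $S\subseteq C_{i+1}$, so $W_{i+1}=V_\sigma\setminus C_{i+1}$ is a strict subset of $W_i$; thus the loop halts.

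Finally I would read off the conclusion. At termination $W=\emptyset$, i.e. $V_\sigma\subseteq C$ for the returned set $C$. Were $V\setminus C$ nonempty, the invariant would give $\max(V\setminus C)\subseteq V_\sigma$, producing a $\sigma$-vertex outside $C$ and contradicting $V_\sigma\subseteq C$; hence $V\setminus C=\emptyset$ and $\text{SeqAttr}$ returns $V$. The only delicate point, and the step I would be most careful with, is the equality $\max(W_i)=\max(V\setminus C_i)$: it is what supplies the priority bound needed to invoke the previous lemma, and it rests essentially on the inductive fact (itself guaranteed by that lemma, and ultimately by the goodness of $V$) that the maximum priority among the still-uncovered vertices is always attained at a $\sigma$-vertex.
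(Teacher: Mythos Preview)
Your proof is correct and follows essentially the same approach as the paper's: both track the loop variables, maintain the invariant that the maximum priority among the still-uncovered vertices lies in $V_\sigma$, and appeal to the previous lemma at each iteration to propagate this invariant. Your version is more explicit about the bookkeeping (particularly the identity $W_i = V_\sigma\setminus C_i$ and the equality $\max(W_i)=\max(V\setminus C_i)$), but the skeleton of the argument is the same.
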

\begin{proof}
Taking $W_0,W_1,\ldots$ to be the values of $W$ at the beginning of each while loop in the execution of SeqAttr$(G,V,\sigma,\text{ParAlg})$, take
$$S_i := \max(W_i)$$
$$C_i := \text{GenAttr}(G,p(S_i),C_{i-1} \cup S_i,\sigma,\text{ParAlg})$$
then we have by the previous lemma $\max(W_0) = \max(V)$. Then, by induction, if $\max(W_i) \in V_\sigma$, we have either $W_{i+1} = \emptyset$ or $\max(W_{i+1}) = \max(V \setminus C_i)$ and so the SeqAttr will not terminate until $V \subseteq C_i$.
\end{proof}

\begin{theorem}
If $V$ is good for ParAlg with respect to $\sigma$, then TDA$(G,\sigma,\text{ParAlg}) = V$.
\end{theorem}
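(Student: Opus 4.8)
The plan is to reduce the statement almost entirely to the immediately preceding lemma, which already asserts that $\text{SeqAttr}(G,V_\sigma,\sigma,\text{ParAlg}) = V$ whenever $V$ is good for ParAlg with respect to $\sigma$. All that remains is to trace the outer ``WHILE'' loop of the TDA algorithm (Algorithm \ref{alg4}) and show that it performs exactly one call to SeqAttr before terminating with output $V$.

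First I would observe that goodness of $V$ forces $\max(V) \subseteq V_\sigma$, so $V_\sigma \neq \emptyset$ and the loop runs at least once. On the first pass, the variable $T_{\text{prev}}$ is set to the initial value $V_\sigma$ of $T_{\text{cur}}$, and the algorithm computes $C := \text{SeqAttr}(G,V_\sigma,\sigma,\text{ParAlg})$. By the previous lemma this equals $V$. (Note that the SeqAttr call is stated with target set $V_\sigma$, but since SeqAttr begins by intersecting its input with $V_\sigma$, passing either $V_\sigma$ or $V$ is immaterial.)

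Next I would examine the update $T_{\text{cur}} := T_{\text{prev}} \setminus \{v \in V_\sigma : N(v) \cap C = \emptyset\}$. Since $C = V$ and every vertex of the parity game has at least one outgoing edge, we have $N(v) \cap C = N(v) \neq \emptyset$ for every $v \in V_\sigma$; hence no vertex is removed and $T_{\text{cur}} = T_{\text{prev}} = V_\sigma$. The loop guard $T_{\text{cur}} \neq T_{\text{prev}}$ therefore fails, the loop halts after its first iteration, and TDA returns $C = V$, as required.

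The substantive content is carried entirely by the preceding SeqAttr-saturation lemma; the only thing requiring any care here is the termination argument, namely that the target set $T_{\text{cur}}$ cannot shrink once SeqAttr already covers all of $V$. This is immediate from the standing no-dead-ends hypothesis on parity games, so I anticipate no genuine obstacle: the proof is simply the observation that a single application of the lemma fills the whole vertex set and thereby stabilizes the loop.
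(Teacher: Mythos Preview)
Your proposal is correct and follows exactly the paper's approach: the paper's own proof is a single sentence invoking the preceding SeqAttr lemma and noting that TDA therefore terminates after one iteration returning $V$. You have simply spelled out the termination step (that no vertex is removed from $T_{\text{cur}}$ because $C=V$ and there are no dead ends) in more detail than the paper bothers to, but the idea is identical.
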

\begin{proof}
The previous lemma immediately gives that the TDA will terminate after the first iteration of the ``WHILE'' loop and return $V$, since SeqAttr will return the whole set of vertices.
\end{proof}

\subsubsection{Soundness}
\begin{lemma}
If $X$ is a $\sigma$-trap in $G$ and if $X \cap Y = \emptyset$ then $X \cap \text{SafeAttr}(G,Y,\lambda,\sigma) = \emptyset$.
\end{lemma}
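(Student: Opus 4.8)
The plan is to prove this by induction on the iterations of the while loop in the algorithm, maintaining the invariant that the $\sigma$-trap $X$ stays disjoint from the growing set $C_{\text{cur}}$. (Here I read $Y$ as the target set and $\lambda$ as the priority bound, matching the signature of the Safe Attractor algorithm.) Let $C_0, C_1, C_2, \ldots$ denote the successive values of $C_{\text{cur}}$, so that $C_0 = Y$ and each $C_{i+1}$ is obtained from $C_i$ by adjoining the $\sigma$-vertices of priority below $\lambda$ that have a neighbor in $C_i$, together with the $\overline\sigma$-vertices of priority below $\lambda$ all of whose neighbors lie in $C_i$. I would show $X \cap C_i = \emptyset$ for every $i$; since $\text{SafeAttr}$ returns the stabilized value of $C_{\text{cur}}$, this immediately yields the claim.

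For the base case, $X \cap C_0 = X \cap Y = \emptyset$ by hypothesis. For the inductive step, assume $X \cap C_i = \emptyset$ and suppose toward a contradiction that some $v \in X$ is added at stage $i+1$. I would split into the two cases matching the two ways a vertex can enter the safe attractor. If $v \in V_\sigma$, then $v$ was added because $N(v) \cap C_i \neq \emptyset$, so $v$ has a neighbor $w \in C_i$; but $v$ is a $\sigma$-vertex of the $\sigma$-trap $X$, so the first clause of the trap definition gives $N(v) \subseteq X$, forcing $w \in X \cap C_i = \emptyset$, a contradiction. If $v \in V_{\overline\sigma}$, then $v$ was added because $N(v) \subseteq C_i$; but the second clause of the trap definition supplies some $y \in X$ with $(v,y) \in E$, so $y \in N(v) \cap X \subseteq C_i \cap X = \emptyset$, again a contradiction. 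Hence $X \cap C_{i+1} = \emptyset$, completing the induction.

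I do not expect a genuine obstacle here: the two defining properties of a $\sigma$-trap line up exactly with the two cases of the attractor update, and the priority side-condition $p(v) < \lambda$ only makes it \emph{harder} for a vertex to be added, so it can be dropped when proving an upper bound on the attractor. The single point worth stating carefully is that the argument is entirely insensitive to $\lambda$ — it applies verbatim to the ordinary $\text{Attr}$ as well — which reflects the underlying semantics: since $X$ is a $\sigma$-trap, player $\overline\sigma$ can keep the token inside $X$ forever, so $\sigma$ can never force it into the disjoint set $Y$, safely or otherwise. I would mention this semantic reading only as intuition and carry out the induction on the algorithm itself, since $\text{SafeAttr}$ is the object the later lemmas manipulate directly.
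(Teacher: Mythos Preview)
Your induction is correct and complete. The paper's own proof is a one-liner that factors through the ordinary attractor: it observes $\text{SafeAttr}(G,Y,\lambda,\sigma) \subseteq \text{Attr}(G,Y,\sigma)$ and then invokes the standard fact that a $\sigma$-trap disjoint from $Y$ is disjoint from the $\sigma$-attractor of $Y$. You instead unfold that standard fact directly against the SafeAttr update rule. The two arguments are essentially the same underneath---indeed you already note that your induction is insensitive to $\lambda$ and hence proves the $\text{Attr}$ statement too---so the only real difference is packaging: the paper reuses an established lemma to get a two-line proof, while your version is self-contained and makes explicit how the two clauses of the trap definition match the two cases of the attractor step.
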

\begin{proof}
Note that SafeAttr$(G,Y,\lambda,\sigma) \subseteq \text{Attr}(G,Y,\sigma)$ and $\text{Attr}(G,Y,\sigma) \cap X = \emptyset$.
\end{proof}

\begin{lemma}
If $X$ is a $\sigma$-trap in $G$ with largest vertex of priority $m < \lambda$ and with ParAlg$(G[X],\sigma) = \emptyset$, then if $X \cap Y = \emptyset$ we have $X \cap \text{GenAttr}(G,\lambda,Y,\sigma,\text{ParAlg}) = \emptyset$.
\end{lemma}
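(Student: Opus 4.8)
The plan is to induct on the iterations of the while-loop in GenAttr, maintaining disjointness from $X$ throughout. Write $C_0, C_1, \ldots$ for the successive values of $C_{\text{cur}}$ at the top of the loop, so that $C_0 = Y$ and $C_{i+1} = S_i \cup \text{ParAlg}(G[V_i'],\sigma)$, where $S_i := \text{SafeAttr}(G,\lambda,C_i,\sigma)$ and $V_i' := \text{Restrict}(G[V \setminus S_i],\lambda,\sigma)$. The goal is to show $C_i \cap X = \emptyset$ for every $i$; since GenAttr returns some $C_i$, this gives the lemma. The base case is immediate, as $C_0 = Y$ and $X \cap Y = \emptyset$ by hypothesis.

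For the inductive step I would assume $C_i \cap X = \emptyset$. First, $S_i \cap X = \emptyset$ by the previous lemma applied with target set $C_i$: since $X$ is a $\sigma$-trap disjoint from $C_i$, and $\text{SafeAttr} \subseteq \text{Attr}$, no vertex of $X$ can be attracted into the disjoint set $C_i$. Hence $X \subseteq V \setminus S_i$, and since $X$ is a $\sigma$-trap in $G$ it is also a $\sigma$-trap in the induced game $G[V \setminus S_i]$ (the defining conditions only involve edges that already stay in $X \subseteq V \setminus S_i$). On the other hand, $V_i'$ is, by the definition of Restrict, the complement within $V \setminus S_i$ of a $\overline\sigma$-attractor, hence a $\overline\sigma$-trap in $G[V \setminus S_i]$. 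Applying the lemma on intersections of opposite traps (and its player-swapped form) inside $G[V \setminus S_i]$, I conclude that $X \cap V_i'$ is \emph{simultaneously} a $\overline\sigma$-trap in $G[X]$ and a $\sigma$-trap in $G[V_i']$.

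It remains to show $\text{ParAlg}(G[V_i'],\sigma) \cap X = \emptyset$, which together with $S_i \cap X = \emptyset$ yields $C_{i+1} \cap X = \emptyset$. Suppose not, and set $T := \text{ParAlg}(G[V_i'],\sigma)$. Since $T \subseteq V_i'$, the assumption $T \cap X \neq \emptyset$ forces $(X \cap V_i') \cap T \neq \emptyset$, so the $\sigma$-trap $X \cap V_i'$ of $G[V_i']$ meets $T$. Because ParAlg is nice with traps, the second niceness condition (applied in $G[V_i']$ with $S = T$) gives that $\text{ParAlg}(G[X \cap V_i'],\sigma)$ is nonempty. But $X \cap V_i'$ is a $\overline\sigma$-trap in $G[X]$, so the first (monotonicity) niceness condition gives $\text{ParAlg}(G[X \cap V_i'],\sigma) \subseteq \text{ParAlg}(G[X],\sigma) = \emptyset$, the final equality being the hypothesis $\text{ParAlg}(G[X],\sigma) = \emptyset$. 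This contradiction completes the induction.

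The crux of the argument, and the step I expect to require the most care, is recognizing that $X \cap V_i'$ carries two opposing trap structures at once—a $\sigma$-trap in $G[V_i']$ and a $\overline\sigma$-trap in $G[X]$—which is exactly what makes the two niceness conditions collide. Establishing that $X$ survives as a $\sigma$-trap in $G[V \setminus S_i]$ and that $V_i'$ is a $\overline\sigma$-trap there are the routine setup steps; the hypothesis $m < \lambda$ enters in guaranteeing that $X$ contains no vertex of priority at least $\lambda$, so that passing to the $\lambda$-safe restriction $V_i'$ is compatible with the trap structure on $X$ and the appeals to the trap-intersection lemma go through cleanly.
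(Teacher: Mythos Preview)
Your proof is correct and follows essentially the same approach as the paper: induction on the iterations of GenAttr, using the previous lemma for disjointness from $S_i$, then establishing that $X \cap V_i'$ is simultaneously a $\sigma$-trap in $G[V_i']$ and a $\overline\sigma$-trap in $G[X]$, and deriving a contradiction from the niceness conditions. Your version is in fact slightly more streamlined: you obtain both trap structures on $X \cap V_i'$ directly from the trap-intersection lemma in $G[V\setminus S_i]$ and invoke the two niceness clauses once each, whereas the paper takes a small detour through $X \cap V_i \cap T_i$ and uses the hypothesis $m<\lambda$ in a more explicit way to exhibit $X\cap V_i$ as the complement of a $\overline\sigma$-attractor inside $G[X]$.
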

\begin{proof}
Take $C_0,C_1,\ldots$ to be the value of $C_{\text{cur}}$ at the beginning of each ``WHILE'' loop in the execution of GenAttr$(G,\lambda,Y,\sigma,\text{ParAlg})$. Take
$$S_i:=\text{SafeAttr}(G,\lambda,C_i,\sigma)$$
$$V_i:=\text{Restrict}(G[V\setminus S_i],\lambda,\sigma)$$
$$T_i:=\text{ParAlg}(G[V_i],\sigma)$$
We proceed by induction on $i$ to show $C_i \cap X = \emptyset$. This holds by assumption for $C_0 = Y$. If this holds for $C_i$, then $S_i \cap X = \emptyset$.

Note that, because all vertices in $X$ have priority smaller than $\lambda$, $X \cap V_i = X \setminus \text{Attr}(G[X],X \setminus V_i,\overline{\sigma})$. In particular, we have that $X \cap V_i$ is a $\overline{\sigma}$-trap in $G[X]$. Since ParAlg is nice with traps, this implies that ParAlg$(G[X \cap V_i],\sigma) = \emptyset$.

Since $X$ is a $\sigma$-trap in $G$ and $X \cap S_i = \emptyset$, we get that $X$ is a $\sigma$-trap in $G[V \setminus S_i]$. By definition, $V_i$ is the complement of a $\overline\sigma$ attractor, so $V_i$ is a $\overline\sigma$-trap in $G[V \setminus S_i]$. Since $X$ is a $\sigma$-trap and $V_i$ is a $\overline\sigma$-trap, we get that $X\cap V_i$ is a $\sigma$-trap in $G[V_i]$. Since $T_i$ is a $\overline\sigma$-trap in $G[V_i]$ and $X \cap V_i$ is a $\sigma$-trap in $G[V_i]$, we have that $X \cap V_i \cap T_i$ is a $\overline\sigma$ trap in $G[X \cap V_i]$. Therefore, since ParAlg is nice with traps, ParAlg$(G[X \cap V_i \cap T_i],\sigma) = \emptyset$. If $X \cap V_i \cap T_i$ were nonempty, then, since ParAlg is nice with traps and $X \cap V_i$ is a $\sigma$-trap in $G[V_i]$, we would have ParAlg$(G[X \cap V_i \cap T_i],\sigma) \neq \emptyset$, a contradiction. Therefore, we must have $X \cap V_i \cap T_i = X \cap T_i = \emptyset$.

Finally, since $C_{i+1} = S_i \cup T_i$, we have that $X \cap C_{i+1} = \emptyset$, as desired.
\end{proof}

\begin{lemma}
If $X$ is a $\sigma$-trap with largest vertex of priority $\lambda$, $\lambda$ has parity $\overline{\sigma}$, and ParAlg$(G[X],\sigma) = \emptyset$, then the largest vertices of $X$ are not contained in SeqAttr$(G,V,\sigma,\text{ParAlg})$.
\end{lemma}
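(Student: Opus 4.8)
The plan is to track, call by call, which vertices of $X$ (and in particular of $\max(X)$) can enter the accumulator $C$ during the execution of SeqAttr$(G,V,\sigma,\text{ParAlg})$, and to show that the priority-$\lambda$ vertices never do. Throughout, write $M := \max(X)$; since $\lambda$ has parity $\overline{\sigma}$, the set $M$ consists of $\overline{\sigma}$-vertices, all of priority $\lambda$, and every vertex of $X$ has priority $\le\lambda$.

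First I would record two elementary facts about a single GenAttr call. (a) A call GenAttr$(G,\mu,Y,\sigma,\text{ParAlg})$ can only enlarge its seed $Y$ by vertices of priority strictly less than $\mu$: the set $S=\text{SafeAttr}(G,\mu,C_{\text{prev}},\sigma)$ grows $C_{\text{prev}}$ only by vertices of priority $<\mu$, the restriction $V'=\text{Restrict}(G[V\setminus S],\mu,\sigma)$ contains only vertices of priority $<\mu$, and $\text{ParAlg}(G[V'],\sigma)\subseteq V'$; hence the output is contained in $Y\cup\{v : p(v)<\mu\}$. (b) In SeqAttr$(G,V,\sigma,\text{ParAlg})$ the seed batches are $S=\max(W)$ with $W\subseteq V_\sigma$, so every seed consists of $\sigma$-vertices and every GenAttr parameter $\mu=p(S)$ has parity $\sigma$. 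Because $\lambda$ has parity $\overline{\sigma}$, no GenAttr call is ever invoked with parameter $\mu=\lambda$, and (since the processed priorities strictly decrease as the absorbed maxima are removed from $W$) the calls split cleanly into a phase with $\mu>\lambda$ followed by a phase with $\mu<\lambda$.

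Next I would run the induction over the GenAttr calls in two phases, writing $C_j$ for the value of the accumulator after the $j$th call. In the phase $\mu>\lambda$ I maintain the strong invariant $C_j\cap X=\emptyset$. Each such call has seed $Y_j=C_{j-1}\cup S$, where $S$ consists of $\sigma$-vertices of priority $\mu>\lambda$ and is therefore disjoint from $X$ (all of whose priorities are $\le\lambda$), while $C_{j-1}\cap X=\emptyset$ by the inductive hypothesis; hence $Y_j\cap X=\emptyset$. Now $X$ is a $\sigma$-trap whose largest priority $\lambda$ is strictly below the attractor parameter $\mu$, and $\text{ParAlg}(G[X],\sigma)=\emptyset$ by hypothesis, so the previous lemma applies and gives GenAttr$(G,\mu,Y_j,\sigma,\text{ParAlg})\cap X=\emptyset$, i.e. $C_j\cap X=\emptyset$. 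At the end of this phase, in particular $M\cap C=\emptyset$.

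Finally, for the phase $\mu<\lambda$ I weaken the invariant to $M\cap C=\emptyset$; I cannot keep $C\cap X=\emptyset$ here, since low-priority vertices of $X$ may legitimately enter $C$ once $\mu<\lambda$. Each seed is again a set of $\sigma$-vertices, hence disjoint from the $\overline{\sigma}$-set $M$, and by fact (a) the call adds only vertices of priority $<\mu<\lambda$, none of which lie in $M$; so $M\cap C=\emptyset$ is preserved. Combining the two phases yields $M\cap \text{SeqAttr}(G,V,\sigma,\text{ParAlg})=\emptyset$, which is the claim. The main obstacle — indeed the only place where the hypotheses do real work — is the boundary case $\mu=\lambda$: the previous GenAttr lemma requires the trap's top priority to be \emph{strictly} below the attractor parameter, and it is precisely the parity hypothesis on $\lambda$ that guarantees $\lambda$ never occurs as a GenAttr parameter, letting the two-phase induction go through without that case.
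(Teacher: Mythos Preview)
Your proposal is correct and follows essentially the same approach as the paper: a two-phase induction over the GenAttr calls, with the strong invariant $C\cap X=\emptyset$ while $\mu>\lambda$ (where the previous lemma applies), weakening to $\max(X)\cap C=\emptyset$ once $\mu<\lambda$ (where only priorities below $\mu$ can be added). Your presentation is more explicit about why the case $\mu=\lambda$ never occurs and uses parity rather than priority to see that the seeds $S$ miss $\max(X)$ in the second phase, but the skeleton is identical.
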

\begin{proof}
Take $C_0=\emptyset$ and $W_0,W_1,\ldots$ to be the value of $W$ at the beginning of each ``WHILE'' loop in the execution of SeqAttr$(G,V,\sigma,\text{ParAlg})$. Take 
$$S_i = \max(W_i),$$
$$C_i = \text{GenAttr}(G,p(S_i),C_{i-1}\cup S_i,\sigma,\text{ParAlg}).$$

If $p(S_i) > \lambda$ we have, by maximality of $\lambda$, that $X \cap S_i = \emptyset$, and so by induction that $X \cap (C_{i-1} \cup S_i) = \emptyset$. By the previous lemma we get $X \cap C_i = \emptyset$. If $p(S_i) < \lambda$, then we have $\max(X) \cap S_i = \emptyset$ and so by induction $\max(X) \cap (S_i \cup C_{i-1}) = \emptyset$. Therefore, no vertices of $\max(X)$ can be added by the call to GenAttr and so $\max(X) \cap C_i = \emptyset$.
\end{proof}

\begin{theorem} 
TDA$(G,\sigma,\text{ParAlg})$ returns a set that is good for $\sigma$ with respect to ParAlg.
\end{theorem}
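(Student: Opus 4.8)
The plan is to set $S := \text{TDA}(G,\sigma,\text{ParAlg})$ and verify the three clauses in the definition of \emph{good} for $S$ one at a time. First I would record the structural description that every later step leans on. By the terminating condition of the outer ``while'' loop there is a final target set $T \subseteq V_\sigma$ with $S = \text{SeqAttr}(G,T,\sigma,\text{ParAlg})$ and such that every vertex of $T$ has an edge into $S$; note also $T \subseteq S$. I would then unwind $S$ as a nested union: each vertex enters $S$ either as a seed $S_i = \max(W_i) \subseteq V_\sigma$ of some $\text{GenAttr}$ call (run with parameter $\lambda_i = p(S_i)$), or as an internal addition of that call, which comes either from a $\text{SafeAttr}$ step (hence of priority $< \lambda_i$) or from the $\text{ParAlg}$ block applied to $V' = \text{Restrict}(G[V\setminus S'],\lambda_i,\sigma)$ (again of priority $< \lambda_i$, since $V'$ contains no vertex of priority $\geq \lambda_i$).

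Next I would prove that $S$ is a $\overline\sigma$-trap. For the $\sigma$-vertices I check that each has an edge that stays in $S$: those in $T$ do by the terminating condition, those absorbed by $\text{SafeAttr}$ do by construction, and a $\sigma$-vertex contributed by $\text{ParAlg}$ lies in the $\overline\sigma$-trap $\text{ParAlg}(G[V'],\sigma)$ (using that $\text{ParAlg}$ is nice with traps) and hence has an in-trap neighbour. For the $\overline\sigma$-vertices I must show all of their $G$-neighbours lie in $S$: a $\overline\sigma$-vertex absorbed by $\text{SafeAttr}$ has all its neighbours in the current $C_{\text{prev}} \subseteq S$ by definition, while for a $\overline\sigma$-vertex $v$ contributed by $\text{ParAlg}$ I would chase a putative escaping edge to a neighbour $u$: since $v$ is a $\overline\sigma$-vertex of the $\overline\sigma$-trap $V'$ in $G[V\setminus S']$ and of the $\overline\sigma$-trap $\text{ParAlg}(G[V'],\sigma)$ in $G[V']$, any $u \notin S'$ must lie in $V'$ and then in the $\text{ParAlg}$ output, hence in $S$. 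This reconciliation of neighbourhoods in $G$ with those in the successively restricted subgames $G[V\setminus S']$ and $G[V']$ is the step I expect to be the main obstacle; the rest is comparatively formal.

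For the second clause I would use the unwinding above. The seed priorities $\lambda_i = p(\max W_i)$ are strictly decreasing (each seed $S_i$ is removed from $W$) and bounded by $p(\max T)$, while every non-seed vertex added at stage $i$ has priority strictly below $\lambda_i$. Since $T \subseteq S$ forces $p(\max T) \leq p(\max S)$, and every vertex of $S$ has priority at most $p(\max T)$, these agree and any vertex of maximal priority in $S$ must be a seed, hence a $\sigma$-vertex; thus $\max(S) \subseteq V_\sigma$.

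Finally, for the third clause I would argue by contradiction using the last soundness lemma (a $\sigma$-trap whose maximal priority has parity $\overline\sigma$ and on which $\text{ParAlg}$ is empty avoids $\text{SeqAttr}$). Suppose $Y$ is a nonempty $\sigma$-subtrap of $S$ with $\max(Y)$ of parity $\overline\sigma$ and $\text{ParAlg}(G[Y],\sigma)=\emptyset$. The subtlety is that $Y$ is only guaranteed to be a $\sigma$-trap in $G[S]$, not in $G$, so I would apply that lemma inside the game $G[S]$, whose full vertex set is $S$ and for which $G[S][Y]=G[Y]$, so the hypotheses hold with $X=Y$. Having already shown $S$ is a $\overline\sigma$-trap, the stability lemma for $\text{SeqAttr}$ gives $\text{SeqAttr}(G[S],S,\sigma,\text{ParAlg})=S$, so the lemma's conclusion reads $\max(Y)\cap S=\emptyset$, contradicting $\emptyset\neq\max(Y)\subseteq Y\subseteq S$. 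Hence every such $\sigma$-subtrap has nonempty $\text{ParAlg}$, which is exactly the third clause, and $S$ is good.
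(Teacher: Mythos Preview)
Your proof is correct and, for the substantive third clause, follows the paper's argument: both reduce to the preceding SeqAttr soundness lemma after a stability step (you invoke SeqAttr stability to work inside $G[S]$, the paper invokes TDA stability to assume $S=V$ without loss of generality, which is the same move). The only real difference is that you explicitly verify the first two clauses---that $S$ is a $\overline\sigma$-trap with $\max(S)\subseteq V_\sigma$---by case analysis on how vertices enter $S$, whereas the paper simply cites this as an earlier ``observation'' that it never formally proves; so your write-up is more self-contained, at the cost of the neighbourhood-chasing you correctly flag as the main technical work.
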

\begin{proof}
Take $S:= \text{TDA}(G,\sigma,\text{ParAlg})$. Then $S = \text{TDA}(G[S],\sigma,\text{ParAlg})$. We've observed before that $S$ is a $\overline{\sigma}$ trap whose largest vertex has priority of parity $\sigma$, so we may assume without loss of generality that $S=V$. By the previous lemma, there is no nonempty set $X$ that is a $\sigma$-trap with largest vertex of priority $\overline{\sigma}$ so that ParAlg$(G[X],\sigma)=\emptyset)$.
\end{proof}

This completes the proof of Theorem \ref{GeneralTDA}.

\subsubsection{Second Trap-Depth Algorithm}

We now define the second Trap-Depth Algorithm, TDA$(G,\sigma,k)$, and prove Theorem \ref{theorem_TDA}. We will define TDA$(G,\sigma,k)$ by recursively applying TDA$(G,\sigma,\text{ParAlg})$. In order to do so, we will need to know that TDA$(G,\sigma,\text{ParAlg})$ is nice with traps whenever ParAlg is.

\begin{lemma}
If ParAlg is nice with traps, then TDA$(G,\sigma,\text{ParAlg})$ is nice with traps.
\end{lemma}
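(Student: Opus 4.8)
The plan is to verify, one by one, the three defining clauses of ``nice with traps'' for the derived algorithm $\text{TDA}'(G,\sigma) := \text{TDA}(G,\sigma,\text{ParAlg})$, leaning on the three correctness theorems (Monotonicity, Completeness, Soundness) and on the stability lemma $S = \text{TDA}(G[S],\sigma,\text{ParAlg})$. Two of the three clauses are immediate. The monotonicity clause, namely that $\text{TDA}(G[A],\sigma,\text{ParAlg}) \subseteq \text{TDA}(G,\sigma,\text{ParAlg})$ for every $\overline{\sigma}$-trap $A$, is verbatim the monotonicity theorem. The third clause, that $\text{TDA}(G,\sigma,\text{ParAlg})$ always returns a $\overline{\sigma}$-trap whose largest priority is of parity $\sigma$, is part of the statement that the output is \emph{good}, which is exactly the soundness theorem. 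So the whole content lies in the middle clause. Throughout write $S := \text{TDA}(G,\sigma,\text{ParAlg})$, which is good by soundness.

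For the first half of the middle clause (if $A$ is a $\overline{\sigma}$-trap containing $S$ then $\text{TDA}(G[A],\sigma,\text{ParAlg}) = S$), I would use two applications of monotonicity. Applying it in $G$ with the $\overline{\sigma}$-trap $A$ gives $\text{TDA}(G[A],\sigma,\text{ParAlg}) \subseteq S$. For the reverse inclusion, note that $S$ is a $\overline{\sigma}$-trap in $G[A]$ (it is one in $G$ and $S \subseteq A$), so applying monotonicity in $G[A]$ with the $\overline{\sigma}$-trap $S$, and using $G[A][S] = G[S]$ together with the stability lemma, yields $S = \text{TDA}(G[S],\sigma,\text{ParAlg}) \subseteq \text{TDA}(G[A],\sigma,\text{ParAlg})$. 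Combining the two gives equality.

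The genuine obstacle is the second half: for a $\sigma$-trap $A$ with $T := A \cap S \neq \emptyset$, show $\text{TDA}(G[A],\sigma,\text{ParAlg}) \neq \emptyset$. My strategy is to exhibit a \emph{nonempty} set $B \subseteq T$ that is good in $G[A]$; completeness applied to $G[B]$ then gives $\text{TDA}(G[B],\sigma,\text{ParAlg}) = B$, and monotonicity in $G[A]$ with the $\overline{\sigma}$-trap $B$ gives $B \subseteq \text{TDA}(G[A],\sigma,\text{ParAlg})$, so the latter is nonempty. To build $B$, first invoke the intersection lemmas: $T$ is a $\overline{\sigma}$-trap in $G[A]$ and (by the symmetric intersection lemma) a $\sigma$-trap in $G[S]$. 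Since $S$ is good and $T$ is a nonempty $\sigma$-subtrap of $S$, either $\max(T) \subseteq V_\sigma$ or $\text{ParAlg}(G[T],\sigma) \neq \emptyset$. In the first case I would set $B = T$ and check goodness directly: any $\sigma$-trap $Y$ in $G[T]$ is, by nesting of $\sigma$-traps through $T$, a $\sigma$-subtrap of $S$, so the goodness of $S$ supplies the required alternative for $Y$. In the second case I would set $B = \text{ParAlg}(G[T],\sigma)$, which is nonempty.

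For the second case the key auxiliary fact I would first establish is that, whenever $\text{ParAlg}$ is nice with traps, $\text{ParAlg}(H,\sigma)$ is itself good in $H$ for every $H$. The crucial observation here is that a $\sigma$-trap $Y$ lying inside the $\overline{\sigma}$-trap $P := \text{ParAlg}(H,\sigma)$ (taken in the subgame $H[P]$) is again a $\sigma$-trap in $H$: the $\sigma$-vertices of $P$ cannot leave $P$, so their neighborhoods are unchanged under restriction to $P$. Thus $Y$ is a $\sigma$-trap in $H$ meeting $P$, and the second clause of ``nice with traps'' for $\text{ParAlg}$ forces $\text{ParAlg}(H[Y],\sigma) \neq \emptyset$; combined with the facts that $P$ is a $\overline{\sigma}$-trap with top priority of parity $\sigma$, this shows $P$ is good. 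Applying this with $H = G[T]$ makes $B = P$ good in $G[T]$; since goodness refers to the ambient game only through the clause ``$B$ is a $\overline{\sigma}$-trap,'' and $B$ is a $\overline{\sigma}$-trap in $G[A]$ (by nesting of $\overline{\sigma}$-traps through $T$), it follows that $B$ is good in $G[A]$ as well. I expect the main difficulty to be exactly this case-(ii) bookkeeping: pinning down the several trap-nesting relations so that ``good in $G[T]$'' transfers to ``good in $G[A]$,'' and confirming that the whole vertex set of $G[B]$ is good so that completeness genuinely applies.
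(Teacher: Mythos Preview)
Your overall plan is essentially the paper's: reduce to Theorem~\ref{intersectTrap}-style casework on the parity of $\max(T)$ where $T=S\cap A$, and exhibit a nonempty good set in $G[A]$. Clauses one and three, the first half of the middle clause, and your Case~1 are fine.

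However, your Case~2 contains a genuine error. You assert that for $P=\text{ParAlg}(H,\sigma)$, a $\sigma$-trap $Y$ in $H[P]$ is again a $\sigma$-trap in $H$ because ``the $\sigma$-vertices of $P$ cannot leave $P$.'' This reverses the roles in the definition of a trap: $P$ is a $\overline{\sigma}$-trap, so it is the $\overline{\sigma}$-vertices whose full neighborhood lies in $P$; a $\sigma$-vertex in $P$ is only guaranteed \emph{some} neighbor in $P$ and may well have edges leaving $P$. Consequently a $\sigma$-trap in $H[P]$ need not be a $\sigma$-trap in $H$, and your appeal to the $\sigma$-trap clause of nice-with-traps (in $H$) fails.

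The conclusion that $B:=\text{ParAlg}(G[T],\sigma)$ is good can be rescued without that step, using nice-with-traps twice. First, $B$ is a $\overline{\sigma}$-trap in $G[T]$ containing $\text{ParAlg}(G[T],\sigma)=B$, so the $\overline{\sigma}$-trap half of the middle clause (applied in $G[T]$) gives $\text{ParAlg}(G[B],\sigma)=B$. Second, for any $\sigma$-trap $Y$ in $G[B]$, we now have $Y$ a $\sigma$-trap in the ambient game $G[B]$ meeting $\text{ParAlg}(G[B],\sigma)=B$, so the $\sigma$-trap half of the middle clause (applied in $G[B]$) yields $\text{ParAlg}(G[Y],\sigma)\neq\emptyset$. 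Together with $B$ being a $\overline{\sigma}$-trap (in $G[T]$, hence in $G[A]$) with top priority of parity $\sigma$, this shows $B$ is good in $G[A]$, and your completeness/monotonicity finish then goes through.
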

\begin{proof}
The same argument used in proving Theorem \ref{intersectTrap} applies to show that, for any $\sigma$-trap $Y$, if TDA$(G,\sigma,\text{ParAlg}) \cap Y$ is nonempty, then TDA$(G[Y],\sigma,\text{ParAlg})$ is nonempty. The rest of the properties of being nice with traps follow from Theorem \ref{GeneralTDA}.
\end{proof}

We now define TDA$_k(G,\sigma) = \text{TDA}(G,\sigma,k)$ recursively. Define TDA$_0(G,\sigma)$ to be the algorithm that always returns the empty set. Note this is nice with traps. Given TDA$_k(G,\sigma)$, define TDA$_{k+1}(G,\sigma)$ by TDA$_{k+1}(G,\sigma)=\text{TDA}(G,\sigma,\text{TDA}_k)$. Inductively applying Theorem \ref{GeneralTDA} now proves Theorem \ref{theorem_TDA}.

Note that we had previously defined TDA$_1(G,\sigma)$. Recalling that 
$$\text{GenAttr}(G,\lambda,X,\sigma,\text{TDA}_0) = \text{SafeAttr}(G,\lambda,X,\sigma)$$ and that SafeAttr stabilizes its own output, it is easy to see that these two definitions match.

\subsubsection{Runtime}

\begin{lemma}
Let $T(n,m)$ be an upperbound on the runtime of ParAlg$(G,\sigma)$ for a graph $G$ on $n$ vertices and $m$ edges. Then the runtime of TDA$(G,\sigma,\text{ParAlg})$ on a graph on $n$ vertices and $m$ edges is at most $O(mn^2) + n^2T(n,m)$.
\end{lemma}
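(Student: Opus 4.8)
The plan is to bound the cost of each level of the algorithm in turn (GenAttr, then SeqAttr, then TDA) and multiply, but with one crucial amortization at the SeqAttr level that prevents an extra factor of $n$ from creeping in. First I would record the cost of a single pass through the GenAttr while-loop. Since every vertex has an outgoing edge we have $m \geq n$, and both SafeAttr and Restrict are attractor computations, so by the linear-time attractor algorithm each runs in $O(m)$ time; the set union and the bookkeeping to compute $V'$ are also $O(m)$. Hence one GenAttr iteration costs $O(m) + T(n,m)$, the latter term coming from the single call to ParAlg on $G[V']$.

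Next I would analyze SeqAttr. The naive estimate---up to $n$ GenAttr calls, each running up to $n$ while-iterations---would give $O(n^2)$ iterations and spoil the final bound. The key observation, and the heart of the argument, is that across all GenAttr calls made inside one execution of SeqAttr the accumulator $C$ only grows: each call GenAttr$(G,p(S),C\cup S,\sigma,\text{ParAlg})$ returns a superset of its input $C\cup S \supseteq C$. Let $\ell_j$ denote the number of while-iterations in the $j$-th GenAttr call and $C^{(j)}$ the value of $C$ afterward, with $C^{(0)}=\emptyset$. Every non-terminating iteration of GenAttr strictly enlarges its growing set, so it adds at least one vertex not already in $C^{(j-1)}$, giving $\ell_j - 1 \leq |C^{(j)}| - |C^{(j-1)}|$. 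Summing telescopes to $\sum_j (\ell_j-1) \leq |C^{(t)}| \leq n$, and since there are at most $n$ GenAttr calls (one per distinct priority processed, so $t \leq n$) we conclude $\sum_j \ell_j \leq 2n = O(n)$. Thus a single SeqAttr call performs only $O(n)$ GenAttr iterations \emph{in total}, hence makes $O(n)$ calls to ParAlg and costs $O(nm) + n\,T(n,m)$.

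Finally I would handle the outer TDA loop. Its target set $T_{\text{cur}}\subseteq V_\sigma$ strictly shrinks on each non-terminating pass, so the loop runs at most $n+1$ times; each pass makes one SeqAttr call and does $O(m)$ additional work to recompute $T_{\text{cur}}$ by scanning edges to test $N(v)\cap C=\emptyset$. Multiplying the per-pass cost by $O(n)$ yields $O(n^2 m) + n^2\,T(n,m)$, as claimed. Note that no further amortization is available at this level: by the established monotonicity of SeqAttr in its input $X$, the value $C$ shrinks rather than grows across passes of the TDA loop, so one genuinely pays for $\Theta(n)$ full SeqAttr calls.

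The step I expect to be the main obstacle is precisely the SeqAttr amortization: the straightforward product of worst-case iteration counts overcounts by a factor of $n$, and avoiding this requires exploiting that the successive GenAttr calls inside one SeqAttr operate on a monotonically increasing set, so their while-loops collectively draw on a single budget of $n$ vertices rather than each being charged $n$ separately. Establishing the linear ($O(m)$) cost of SafeAttr and Restrict is also essential---if one only used the naive $O(nm)$ attractor the polynomial term would degrade to $O(n^3 m)$---but this is routine given the cited linear-time attractor computation.
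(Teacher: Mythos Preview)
Your proof is correct and follows essentially the same approach as the paper: linear-time cost per GenAttr iteration, the crucial amortization that the total number of GenAttr iterations across one SeqAttr call is $O(n)$ because the accumulator $C$ is monotone increasing, and then a factor of $n$ from the outer TDA loop. Your accounting is in fact slightly more careful than the paper's---you explicitly separate the terminating iteration via $\ell_j - 1$, whereas the paper's phrasing ``each time a call to generalized attractor causes the generalized attractor to go through a `while' loop, a new vertex is added to $C$'' glosses over the final (non-enlarging) pass.
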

\begin{proof}
Consider first the Safe Attractor Algorithm. Since each iteration of the ``while'' loop increases the size of $C_{\text{cur}}$ or halts the algorithm, there will be at most $O(n)$ loops. If implemented carefully (in the same way that the regular Attractor is implemented) we may guarantee that each edge is only used a constant number of times and actually run the algorithm in $O(m+n) = O(m)$ time.

Next, consider the Generalized Safe Attractor Algorithm. Each iteration of the ``while'' loop increases the size of $C_{\text{cur}}$ or halts the algorithm. On top of calling ParAlg, the algorithm does $O(m)$ work for each loop (Restrict$(G,\lambda,\sigma)$ can be computed in linear time). If the algorithm runs $j$ ``while'' loops, it does work at most $(O(m)+ T(n,m)) \times j$.

The Sequential Attractor Algorithm has $C$ increasing every iteration or the algorithm halts. Note that each time a call to generalized attractor causes the generalized attractor to go through a ``while'' loop, a new vertex is added to $C$, so the total number of such loops done throughout the calls to generalized attractor is $n$, and so the total amount of work is at most $(O(m)+T(n,m)) \times n= O(mn) + nT(n,m)$.

In TDA we have $T_{\text{cur}}$ decreasing on each iteration or the algorithm halts, and so there are at most $n$ calls to SeqAttr, and on top of these only $O(m)$ work is done, and so we get $T(n,m) = O(mn^2) + n^2T(n,m)$.
\end{proof}

\begin{lemma}
Let $T(n,m,k)$ denote the runtime of TDA$(G,\sigma,k)$ for a graph $G$ on $n$ vertices and $m$ edges. Then $T(n,m,0) = O(1)$ and for $k>0$ we have $T(n,m,k)=O(mn^{2k-1})$.
\end{lemma}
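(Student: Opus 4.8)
The plan is to induct on $k$, converting the preceding runtime lemma into a recurrence. The base case $k=0$ is immediate: $\text{TDA}_0$ returns $\emptyset$ with no real computation, so $T(n,m,0)=O(1)$. For the inductive engine I would apply the preceding lemma with $\text{ParAlg}=\text{TDA}_{k-1}$. Every parity game on which $\text{TDA}(G,\sigma,\text{TDA}_{k-1})$ invokes its oracle is an induced subgame, hence has at most $n$ vertices and $m$ edges; since the bound $T(n,m,k-1)$ is nondecreasing in $n$ and $m$, it is a legitimate upper bound for each oracle call, and the lemma yields
\[
T(n,m,k)\le O(mn^2)+n^2\,T(n,m,k-1)\qquad(k\ge 1).
\]

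The real content is the base case $k=1$, and this is exactly where the exponent $2k-1$ rather than $2k$ is won. Simply substituting $T(n,m,0)=O(1)$ into the recurrence gives only $T(n,m,1)=O(mn^2)$, and propagating that forward would produce $O(mn^{2k})$, a factor of $n$ worse than claimed. To obtain the sharper base case I would analyze $\text{TDA}_1$ directly rather than through the generic lemma. The key facts, already recorded above, are that $\text{GenAttr}(G,\lambda,X,\sigma,\text{TDA}_0)=\text{SafeAttr}(G,\lambda,X,\sigma)$ and that SafeAttr stabilizes its own output; hence each GenAttr call inside $\text{TDA}_1$ terminates after a single productive pass, and $\text{TDA}_1$ collapses to precisely the explicit B\"uchi-style algorithm with the sequential safe attractor in place of the ordinary attractor. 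As with the classical B\"uchi algorithm, the outer loop runs at most $n$ times; I would then argue that one evaluation of the sequential safe attractor can be performed in $O(m)$ time, giving $T(n,m,1)=O(mn)$.

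With the base case $T(n,m,1)=O(mn)$ established, the induction closes cleanly for $k\ge 2$: assuming $T(n,m,k-1)=O(mn^{2k-3})$,
\[
T(n,m,k)\le O(mn^2)+n^2\,O(mn^{2k-3})=O(mn^{2k-1}),
\]
the additive $O(mn^2)$ being harmlessly absorbed since $2\le 2k-1$ whenever $k\ge 2$.

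I expect the one genuine obstacle to be the base-case claim $T(n,m,1)=O(mn)$. The coarse accounting behind the preceding lemma charges $O(m)$ to each of up to $n$ sequential-attractor phases and then repeats this over up to $n$ outer iterations, delivering only $O(mn^2)$; the hard part is showing that, when the oracle is trivial, the sequential safe attractor does not actually pay the extra priority-by-priority factor. Concretely, I would process the priority levels from highest to lowest while maintaining persistent in-degree counters and a single growing frontier, so that each edge is relaxed a constant number of times in total and each outer iteration costs $O(m)$ rather than $O(mn)$. Verifying that this reuse of work is correct in the presence of the varying safety bound $\lambda$ is the only delicate point; everything else is routine bookkeeping with the recurrence.
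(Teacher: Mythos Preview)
Your proposal is correct and follows essentially the same route as the paper: establish $T(n,m,0)=O(1)$, argue the sharp base case $T(n,m,1)=O(mn)$ by observing that with the trivial oracle $\text{GenAttr}$ collapses to $\text{SafeAttr}$ so that $\text{SeqAttr}_1$ runs in $O(m)$ via the same edge-relaxation bookkeeping as the ordinary attractor, and then unwind the recurrence $T(n,m,k)\le O(mn^2)+n^2T(n,m,k-1)$ from the preceding lemma. If anything, you are more explicit than the paper about why the $k=1$ case is the crux and about the potential pitfall of losing a factor of $n$ there; the paper simply asserts that the attractor-style optimizations yield $O(m)$ for $\text{SeqAttr}_1$ without spelling out the persistent-counter implementation you sketch.
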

\begin{proof}
We have $T(n,m,0)=O(1)$ since this algorithm always returns the empty set.

The same optimizations used in the computation of the Attractor and the Safe Attractor may be used to get a runtime of $O(m)$ in the case $k=1$ for the sequential attractor, that is for SeqAttr$_1$. In TDA we have $T_{\text{cur}}$ decreasing on each iteration or the algorithm halts, and so there are at most $n$ calls to SeqAttr$_1$, and on top of these only $O(m)$ work is done, and so we get $T(n,m,1)=O(mn)$.

For $k \geq 1$ by the previous lemma we have $T(n,m,k+1) \leq O(mn^2) + n^2T(n,m)$. This recurrence solves to $T(n,m,k) = O(mn^{2k-1})$, as desired.
\end{proof}

\section{Summary and Critical Remarks}

The theorems of the previous section show the promised characterization of TDA (Theorem \ref{theorem_TDA}):

\begin{quote}
TDA$(G,\sigma,k)$ returns the largest (possibly empty) set starting with which $\sigma$ can guarantee a win in at most $k$ moves in the trap-depth game on $G$.
\end{quote}

We have introduced Trap-Depth games (where the moves consist of choosing subsets of the graph rather than vertices/edges) and shown their close relationship with Muller games.
We have defined the trap-depth parameter and given algorithms for parity games for finding subsets of the winning regions whose runtime is bounded by an exponential in this trap-depth. Writing $d := |p(V)|$,
since the trap-depth of a parity game is at most $\left\lceil \frac{d}{2} \right\rceil$, the algorithm runs in time 
$O(mn^d)$. If one is only interested in the class of graphs with a bounded number of priorities, there are other options. The classical algorithm of Zielonka also runs in time $O(mn^d)$ (see \cite{GTW02}), but there are better algorithms: Jurdzinski's \cite{J00} algorithm achieves $O(dm\left(\frac{n}{\lfloor{\frac{d}{2}}\rfloor}\right)^{\lfloor{\frac{d}{2}}\rfloor})$, and the subexponential algorithm of \cite{JPZ06} achieves $n^{O(\sqrt{n})}$. Of course, the class of graphs of bounded trap depth is much more general than the class of graphs with a bounded number of priorities.

By Lemma~\ref{reducibility}, finding any nonempty subset of the winning region allows us to remove part of
the graph to get a smaller parity game that needs to be solved; thus, for example, Parity Games in which every subgame has bounded trap depth (such as those with a bounded number of priorities) may be completely solved in polynomial time, a generalization of the result that parity games with a bounded number of priorities may be solved in polynomial time.

Parity games are just one encoding of a class of Muller games. One may ask if there are others for which the characterization of Muller games we present is algorithmically useful. One possible encoding is called \emph{Explicit Muller Games}, where an enumeration of the sets winning for Red, i.e. of the set $\R$, is explicitly given as input. There is a known polynomial time algorithm for solving explicit Muller games \cite{F08}, but we may hope to obtain another algorithm using the characterization. If one could efficiently answer the following question, such an algorithm exists (note in the following question $(V,E,V_{\mbox{red}})$ are given explicitly): 

\begin{problem}
Given a Muller game $G$ and an explicit list $S_1,\ldots,S_k \subseteq V$, is there some polynomial time algorithm that determines if every red-trap $H$ contains one of the $S_i$ as a blue-subtrap?
\end{problem}

To see that the above would allow us to solve the problem, let an explicit Muller game $(V,E,V_{\mbox{red}},\R)$ be given. We will first prune $\R$ by removing any sets $R \in \R$ in which some vertex has no outgoing edges in $G[R]$ (these have no impact on the game). To determine if Red has a nonempty winning region, we will find the collection $W$ of sets in $\R$ from which Red will win the trap-depth game in which Blue goes first.

We will iteratively update $\R$ and $W$. Choose any minimal (under inclusion) set $R \in \R$. For each such set $R$ we determine if $G[R]$ contains any red-traps that do not contain as a blue-trap any set in $W \cup \{R\}$. If $G[R]$ has no such red-traps, then we add $R$ to $W$. In either case, we remove $R$ from $\R$ and iterate.

It is easy to argue that if in the trap-depth game the set of vertices is $X$ and it is Red's turn to move, then a blue-trap $Y$ in $G[X]$ is winning for Red if and only if $H$ is in $W$. To determine if Red has a non-empty winning region, we need only check if one of the sets in $W$ is a blue-trap in $G$.

\section*{Acknowledgments}
This work was partially supported by NSF grant DMS-0648208 at the Cornell REU, which are both gratefully acknowledged. Andrey Grinshpun is partially supported by the NPSC. Andrei Tarfulea is partially supported by the NSF GRFP. We warmly thank Alex Kruckman, James Worthington and Ben Zax for many stimulating discussions on an early part of this work, as well as Damian Niwinski for his comments. We also thank the anonymous referee, without whose comments reading this paper would be much less pleasant.


\end{document}